\definecolor{ptblue}{RGB}{15,76,129} 
\definecolor{ptemerald}{HTML}{009473} 
\definecolor{bluegray}{rgb}{0.4, 0.6, 0.8}
\definecolor{ptilluminating}{HTML}{F5DF4D} 
\definecolor{ptgray}{HTML}{939597} 
\DeclareMathOperator*{\argmax}{arg\,max}
\DeclareMathOperator*{\argmin}{arg\,min}
\theoremstyle{plain}
\newtheorem{theorem}{Theorem}[section]
\newtheorem{corollary}[theorem]{Corollary}
\newtheorem{proposition}[theorem]{Proposition}
\newtheorem{lemma}[theorem]{Lemma}
\newtheorem{claim}[theorem]{Claim}
\theoremstyle{definition}
\newtheorem{definition}[theorem]{Definition}
\newtheorem*{theorem*}{Theorem}
\theoremstyle{remark}
\definecolor{cobalt}{rgb}{0.0, 0.28, 0.67}
\newcommand{\A}{\mathcal{A}}
\newcommand{\B}{\mathcal{B}}
\newcommand{\C}{\mathcal{C}}
\newcommand{\calH}{\mathcal{H}}
\newcommand{\calX}{\mathcal{X}}
\newcommand{\calA}{\mathcal{A}}
\newcommand{\calB}{\mathcal{B}}
\newcommand{\SW}{\mathrm{SW}}
\newcommand{\Opt}{\mathcal{O}}
\newcommand{\Rn}{\mathbb{R}_{\ge 0}^n}
\newcommand{\supp}{\mathsf{supp}}
\newcommand{\Pin}[1]{\Pi_{n}(#1)}
\newcommand{\EF}[1]{\mathrm{EF#1}}
\newcommand{\EFR}[1]{\mathrm{EFR}\text{-}#1}
\newcommand{\EFRn}{\mathrm{EFR}\text{-}(n-1)}
\newcommand{\PO}{\mathrm{PO}}
\title{\bfseries Fair and Efficient Allocation of Indivisible Mixed Manna}
 \author{Siddharth Barman\thanks{Indian Institute of Science, Bangalore;  {barman@iisc.ac.in} }  \and Vishwa Prakash HV\thanks{Chennai Mathematical Institute, Chennai;  {vishwa@cmi.ac.in} }  \and Aditi Sethia\thanks{Indian Institute of Science, Bangalore;  {aditisethia@iisc.ac.in} } \and  Mashbat Suzuki\thanks{UNSW Sydney; {mashbat.suzuki@unsw.edu.au}}}
\date{}
\begin{document}

\maketitle

\begin{abstract}
We study fair division of indivisible mixed manna (items whose values may be positive, negative, or zero) among agents with additive valuations. Here, we establish that fairness---in terms of a relaxation of envy-freeness---and Pareto efficiency can always be achieved together. Specifically, our fairness guarantees are in terms of envy-freeness up to $k$ reallocations ($\mathrm{EFR}$-${k}$): An allocation $\mathcal{A}$ of the indivisible items is said to be $\mathrm{EFR}$-${k}$ if there exists a subset $R$ of at most $k$ items such that, for each agent $i$, we can reassign items from within $R$ (in $\mathcal{A}$) and obtain an allocation, $\mathcal{A}^i$, which is envy-free for $i$. We establish that, when allocating mixed manna among $n$ agents with additive valuations, an $\mathrm{EFR}$-$(n-1)$ and Pareto optimal (PO) allocation $\mathcal{A}$ always exists. Further, the individual envy-free allocations $\mathcal{A}^i$, induced by reassignments, are also PO. In addition, we prove that such fair and efficient allocations are efficiently computable when the number of agents, $n$, is fixed. 

We also obtain positive results focusing on $\mathrm{EFR}$ by itself (and without the PO desideratum). Specifically, we show that an $\mathrm{EFR}$-$(n-1)$ allocation of mixed manna can be computed in polynomial time. In addition, we prove that when all the items are goods, an $\mathrm{EFR}$-${\lfloor n/2 \rfloor}$ allocation exists and can be computed efficiently. Here, the $(n-1)$ bound is tight for chores and $\lfloor n/2 \rfloor$ is tight for goods. 

Our results advance the understanding of fair and efficient allocation of indivisible mixed manna and rely on a novel application of the Knaster-Kuratowski-Mazurkiewicz (KKM) Theorem in discrete fair division. We utilize weighted welfare maximization, with perturbed valuations, to achieve Pareto efficiency, and overall, our techniques are notably different from existing market-based approaches. 
\end{abstract}

\section{Introduction}
\label{sec:introduction}

Fair division of resources and chores is a frequently encountered problem in many real-world settings, including estate settlements, task assignments, border disputes, and distribution of liabilities. Questions of who gets what, specifically when the participating agents have individual preferences, are both practically pressing and theoretically rich. Motivated by such considerations and over the past several decades, a significant body of work in mathematics, economics, and computer science has been developed to formally address fair division \cite{bramstaylorFairDivision96, moulinHandbookComputational16, moulinFairDivision03}. These works span axiomatic foundations, existential guarantees, and algorithmic results for allocating divisible and indivisible items. The items can be goods (items with nonnegative values), chores (negatively valued items), or mixed manna (items whose value may vary in sign across the agents). 

In this literature, two recurring desiderata are fairness and economic efficiency. Among the multiple notions of fairness, envy-freeness \cite{foleyResourceallocation67} stands as a quintessential one; this fairness criterion requires that no agent prefers another agent's bundle to its own. Further, a standard notion of efficiency here is Pareto optimality ($\PO$): no other allocation should make some agent better off without making someone else worse off. These two notions have been extensively studied both in isolation and in conjunction -- they constitute canonical standards for evaluating allocations and methods in fair division. 

Much of the classic work in fair division has focused on divisible goods --- resources that can be fractionally allocated, e.g., land and time. In such settings, prominent results by Varian \cite{varianEquityenvy74} and Weller \cite{wellerFairdivision85} show that envy-freeness and $\PO$ can be achieved together; see also \cite{barbanel2005geometry}

Since an envy-free allocation of indivisible items (i.e., items that cannot be fractionally assigned) is not guaranteed to exist, applicable relaxations of envy-freeness have been formulated in discrete fair division. A prominent relaxation, {envy-freeness up to one item} (\(\EF{1}\)), requires that existing envy can be eliminated by the removal of a single item from the envied bundle. \(\EF{1}\) was introduced by Budish~\cite{budishCombinatorialAssignment11} (see also \cite{LMM04}) and has been extensively studied in recent years.  

For the case of indivisible goods (items with nonnegative values) and under additive valuations, positive results are known for simultaneously achieving \(\EF{1}\) and \(\PO\). Caragiannis et al.~\cite{CKM+19unreasonable} showed that allocations maximizing the Nash social welfare satisfy both properties. Barman et al.~\cite{BKV2018} developed a pseudo-polynomial time algorithm to compute such allocations using a Fisher market framework. In addition, the work of  Mahara~\cite{mahara2024} shows that when the number of agents is fixed, one can efficiently compute an $\EF{1}$ and $\PO$ allocation of goods under additive valuations.  

In notable contrast to the goods setting, the existence of $\EF{1}$ and $\PO$  allocations for indivisible chores---and moreover for mixed manna---remains largely unresolved. For instance, even for the case of four agents with additive valuations, it is not known whether an \(\EF{1}\) and \(\PO\) allocation of chores always exists. For mixed manna, this problem is, in fact, open for three agents. The following observations further highlight the dichotomy: There does not exist a welfare function (analogous to Nash social welfare) that guarantees \(\EF{1}\) for chores \cite{eckart2024fairness}. In addition, the termination of the market-based approaches (which guarantee $\EF{1}$ and $\PO$ for goods) do not carry forward directly when the valuations are negative. 

To make progress on the indivisible chores front, prior works have focused on specialized settings. For instance, the existence of $\EF{1}$ and $\PO$ allocations is known for chores among three agents~\cite{gargetalNewAlgorithms23}. Further, for chores, fairness guarantees under further relaxations---such as $(n-1)$ multiplicative approximations of $\EF{1}$---have also been obtained \cite{GargConstantFactorSTOC25}.   

The mixed manna setting generalizes chores division; indeed, the mixed manna setup captures the allocation of goods and chores together, with valuations that are neither monotone increasing nor monotone decreasing (i.e., with non-monotone valuations). Under mixed manna, the existence of $\EF{1}$ and $\PO$ allocations is known specifically for two agents \cite{azizetalFairallocation22} and under identical or ternary valuations \cite{aleksandrovwalshTwoAlgorithms20}. It is relevant to note that, prior to the current work, in the general setting of arbitrarily many agents and additive valuations, fairness and efficiency guarantees for mixed manna were not known under any envy-based fairness notion.

This work contributes to this thread of research in discrete fair division on fairness in conjunction with efficiency. We establish that, under a relaxation of envy-freeness, fair and efficient allocations of mixed manna always exist among agents with additive valuations. Notably, our existential guarantee is not confined to settings with a fixed number of agents or to a specific subclass of additive valuations. 

The current paper studies the fairness notion of envy-freeness up to $k$ reallocations ($\EFR{k}$). Specifically, an allocation $\calA$ of the indivisible items is said to be $\EFR{k}$ if there exists a (fixed) subset $R$ of at most $k$ items such that, for each agent $i$, we can reassign items from within $R$ (in $\calA$) and obtain an allocation, $\calA^i$, which is envy-free for $i$. We establish that this fairness notion is compatible with Pareto optimality: When allocating mixed manna among $n \in \mathbb{Z}_+$ agents with additive valuations, an $\EFR{(n-1)}$ and $\PO$ allocation $\calA$ always exists. 

In fact, in this existential guarantee the individual envy-free allocations (obtained by reassigning items from within $R$ in $\calA$) are all Pareto optimal. That is, we prove that there exist proximal allocations $\calA^1,\ldots, \calA^n$ that are all Pareto optimal and envy-free for the $n$ agents, respectively.\footnote{In particular, under allocation $\calA^i$ agent $i$ is envy-free against all other agents.} Here, proximity is quantified in terms of the symmetric differences between the agents' bundles: For any pair of these allocations $\calA^i=(A^i_1, \ldots, A^i_n)$ and $\calA^j=(A^j_1, \ldots, A^j_n)$, the union of the symmetric differences $\bigcup_{\ell \in [n]} \left( A^i_\ell \triangle A^j_\ell \right)$ (i.e., the set of reassigned items) is contained in a fixed set $R$. 

Notably, the relaxations of envy-freeness considered in discrete fair division capture the idea that a bounded number of changes in the bundles suffice to achieve envy-freeness. The notion considered in the current work, $\mathrm{EFR}$, conforms to this paradigm. Indeed, $\EF{1}$ requires that existing envy between any pair of agents $i$ and $j$ can be resolved by a hypothetical removal of an item form the bundle of $i$ or that of $j$. In $\mathrm{EFR}$ we instead consider reassignments of items from within a bounded-size set $R$ and, hence, obtain that at most $2|R|$ symmetric differences, across all the $n$ bundles, can ensure envy-freeness for any agent; see Definition \ref{defn:EFRk} and the subsequent remarks. Hence, instead of a per-agent-pair and per-bundle-removal guarantee, as in $\EF{1}$, under $\EFR{(n-1)}$, we obtain that two symmetric differences, on average across the $n$ bundles, suffice for achieving envy-freeness. 

Furthermore, we note that, from an $\EF{1}$ allocation of mixed manna, we can obtain an $\EFR{(n-1)}$ allocation $\calA$ (see the proof of Theorem \ref{thm:ERFexistence}). However, this implication, even when considered with $\PO$, fails to provide the above-mentioned feature of our result that the individual envy-free allocations, $\calA^1, \ldots, \calA^n$, are themselves $\PO$. In addition, we show that, when all the items are goods, then an $\EFR{\lfloor n/2 \rfloor}$ allocation always exists and can be computed efficiently\footnote{In this guarantee, we focus only on fairness and not on efficiency.} -- this result is interesting in and of itself and not implied by the  fact that $\EF{1}$ allocations exist for goods.  

Our contributions are listed next. 

\begin{itemize}
    \item We prove that $\EFR{(n-1)}$ and $\PO$ allocations always exist for mixed manna with additive valuations (\Cref{thm:EFRPO}). Our result utilizes the KKM theorem (\Cref{thm:kkm}), weighted welfare maximization for Pareto optimality, along with non-degeneracy and counting arguments.

    \item We also establish that when the number of agents, $n$, is fixed, an $\EFR{(n-1)}$ and $\PO$ allocation can be computed efficiently (\Cref{theorem:fixed-agents}).
\end{itemize}

\noindent 
We also obtain results for $\mathrm{EFR}$ by itself (and without the $\PO$ desideratum). 

\begin{itemize}
    \item We show that $\EFR{(n-1)}$ allocations for mixed manna with additive valuations  can be computed in polynomial time and this guarantee is compatible with $\EF{1}$ (\Cref{thm:ERFexistence}). 
For the case of chores, the bound of $(n-1)$ in $\mathrm{EFR}$ is tight. In particular, we exhibit instances (with only chores) wherein no allocation is $\EFR{(n-2)}$ (\Cref{thm:EFRnon-existence}). We also show that deciding whether a given allocation is $\EFR{k}$ is {\rm NP}-hard (\Cref{thm:ERFhardness}). Note that an analogous condition holds for Pareto optimality: One can find a $\PO$ allocation (under additive valuations) in polynomial time by, say, maximizing social welfare. However, determining whether a given allocation is $\PO$ is {\rm co-NP} hard \cite{de2009complexity}.  

    \item For goods, we prove that an $\EFR{\lfloor\frac{n}{2}\rfloor}$ allocations alway exist and can be computed efficiently (\Cref{thm:EFRgoods}). This result highlights an interesting dichotomy between goods and chores, since for the latter, even an $\EFR{(n-2)}$ allocation may fail to exist. Our algorithm for goods utilizes the round-robin framework with a judiciously chosen picking sequence over the agents.
Also, the $\lfloor\frac{n}{2}\rfloor$ bound for goods is tight (\Cref{thm:EFRgoodstight}). 
\end{itemize}

\subsection{Additional Related Work} 
We next discuss relevant prior works. \\

\noindent {\bf \(\EF{1}\) for Goods, Chores, and Mixed Manna.} 
In the setting of goods and monotone valuations, an $\EF{1}$ allocation always exists and can be computed efficiently by the envy-cycle elimination algorithm of Lipton et al.~\cite{LMM04}. Further, under additive valuations, the round-robin algorithm finds $\EF{1}$ allocations for both goods-only and chores-only settings . 

For mixed manna with additive valuations, the existence and efficient computation of \(\EF{1}\) allocations was established in \cite{azizetalFairallocation22} via a a double round-robin method. Bhaskar et al.~\cite{BSV21} noted that the standard envy-cycle elimination fails to give an $\EF{1}$ allocation for chores. Bypassing this barrier, they showed that there always exists a specific envy-cycle (specifically, a {top-trading envy cycle}), that can be resolved to efficiently compute an $\EF{1}$ allocation for chores. \\  

\noindent
{\bf $\EF{1}$ and $\PO$ for Mixed Manna.} For mixed manna, the existence of $\EF{1}$ and $\PO$ allocations remains open even among three agents with additive valuations. For two agents, Aziz et al.~\citep{azizetalFairallocation22} showed the existence and efficient computation of such allocations using a discrete version of the adjusted winner rule \cite{BT96fair}. Shoshan et al.~\cite{SHS23} studied a relaxation $\EF{[1,1]}$ and showed that, for two agents and under category constraints, $\EF{[1,1]}$ and $\PO$ allocations exist and can be computed efficiently. \\

\noindent
{\bf $\EF{1}$ and $\PO$ for Chores.} For chores with additive valuations, $\EF{1}$ and $\PO$ allocations are known to exist for two agents~\cite{azizetalFairallocation22}, three agents~\cite{gargetalNewAlgorithms23}, and three types of agents \cite{gargetalWeightedEF124}. As mentioned previously, the existence of $\EF{1}$ and $\PO$ allocations of chores is open even for four agents. Considering multiplicative approximations for the chores setting, \cite{GargConstantFactorSTOC25} shows the existence of $2$-$\EF{2}$ and $\PO$ along with $(n{-}1)$-$\EF{1}$ and $\PO$ allocations. 

Prior works have also obtained fairness and efficiency guarantees for restricted classes of valuations~\cite{ebadianetalHowfairly22,gargetalFairEfficientAAAI22,wuetalWeightedEF123,GargConstantFactorSTOC25,azizetalFairallocation23}. \\

\noindent
{\bf Fractional Allocations.}
In the context of divisible mixed manna (i.e., when the items can be fractionally assigned among the agents), Bogomolnaia et al.~\cite{Bogo2017} show that, under concave, homogeneous, and continuous valuations, a competetive equilibrium with equal incomes (CEEI) exists and that it is envy-free and $\PO$.

Building on this result, Sandomirskiy and Segal-Halevi \cite{SS22Sharing} proved, for additive valuations, the existence of envy-free and $\PO$ allocations in which at most $(n-1)$ items are fractionally assigned (and the remaining items are integrally allocated). We note that this result is incomparable to the guarantee obtained in the current work; Appendix \ref{sec:EFRvsEFSharing} shows that one cannot always `round' $(n-1)$ fractionally assigned chores to obtain an $\EFR{(n-1)}$ (integral) allocation. \\

\noindent
{\bf Concurrent Work.} In work independent of ours, Igarashi and Meunier also study fair and efficient allocation of mixed manna; see arXiv preprint \cite{IM25}. We also note that, for achieving Pareto efficiency in the indivisible items setting, weighted welfare maximization is utilized in \cite{IM25} and \cite{SHS23}.   
As in the current paper, \cite{IM25} provides a novel application of the KKM theorem in the discrete fair division context. \cite{IM25} considers fair division with a focus on category constraints and obtains a bound of $n^2 - n$ on the number of items that must be reallocated to achieve envy-freeness. The current paper establishes a stronger bound of $(n-1)$, albeit in the unconstrained setting. While the ideas used here to invoke the KKM Theorem are similar to those in \cite{IM25}, the techniques and the resulting $\mathrm{EFR}$ bounds differ quantitatively. In particular, there are notable differences in the approaches: \cite{IM25} perturbs the objective function and relies on rank lemmas for  linear programs, whereas we perturb the valuations and utilize the resulting non-degeneracy. Our approach allows us to employ counting and graph-theoretic arguments. Further, in contrast to this work, \cite{IM25} does not address polynomial-time computation of $\EFR{(n-1)}$ allocations of mixed manna and $\EFR{\lfloor n/2 \rfloor}$ allocations of goods. 
\section{Notation and Preliminaries}
\label{sec:preliminaries}
We study the problem of allocation $m \in \mathbb{Z}_+$ indivisible items (mixed manna) among $n \in \mathbb{Z}_+$ agents with additive valuations. The set of items and agents will be denoted by $[m]=\{1,\ldots, m\}$ and $[n]=\{1, \ldots, n\}$, respectively. Further, we will write $v_i(t) \in  \mathbb{R}$ to denote the valuation of each agent $i \in [n]$ for every item $t \in [m]$. Here, any item $t$ with a negative valuation, $v_i(t) < 0$, will be referred to as a chore for agent $i$, otherwise if $v_i(t) \geq 0$, then $t$ is a good for agent $i$. Under additive valuations, agent $i$'s value for each subset $S \subseteq [m]$ is equal to the sum of the values of the items in $S$, i.e., $v_i(S) = \sum_{t \in S} v_i(t)$.  A fair division instance will be specified by a tuple $\langle [n], [m], \{ v_i \}_{i \in [n]} \rangle$.

An allocation $\calA = (A_1,\ldots, A_n)$ is an $n$-partition of the set of items $[m]$; here, $A_i \subseteq [m]$ denotes the subset of items assigned to agent $i \in [n]$ and is referred to as $i$'s bundle. Note that for any allocation $\calA = (A_1,\ldots, A_n)$ we have $\cup_{i=1}^n A_i = [m]$ and $A_i \cap A_j  = \emptyset$, for $i \neq j$. We will write $\Pi_n(m)$ to denote the collection of all $n$-partitions of $[m]$, i.e., the collection of all allocations. \\

\noindent
{\bf Fairness Notions.} We next define the fairness criteria addressed in this work. We start with envy-freeness and then present its relaxations for the indivisible items setting. An allocation $\calA = (A_1, \ldots, A_n)$ is said to be envy-free if every agent values its own bundle at least as much as that of anyone else: $v_i(A_i) \geq v_i(A_j)$, for all $i, j \in [n]$. Further, we say that an allocation $\calA=(A_1, \ldots, A_n)$ is envy-free (specifically) for agent $i \in [n]$ if $i$ does not envy any other agent under $\calA$, i.e., $v_i(A_i) \geq v_i(A_j)$ for all $j \in [n]$.  

In the context of indivisible items, an envy-free allocation is not guaranteed to exist, and hence, multiple relaxations of envy-freeness have been considered in discrete fair division. A prominent notion here is envy-freeness up to one item ($\EF{1}$). This notion for mixed manna (i.e., when the indivisible items to be allocated include both goods and chores) is defined as follows; see, e.g., \cite{azizetalFairallocation22}.   

\begin{definition}[Envy-freeness up to one item]
 An allocation $\calB=(B_1, \ldots, B_n)$ is said to be \emph{envy-free up to one item} ($\EF{1}$) if for every pair of agents $i, j \in [n]$, either $i$ does not envy $j$, or there exists an item $t \in B_i \cup B_j$ with the property that $v_i(B_i \setminus \{t\}) \geq v_i (B_j \setminus \{t\})$. 
\end{definition}

Note that in any $\EF{1}$ allocation, $\calB$, existing envy between any pair of agents, $i, j \in [n]$, can be eliminated by the (hypothetical) removal of a chore from $B_i$ or a good from $B_j$. We address a complementary relaxation of envy-freeness wherein (instead of pair-specific modifications) a single, common set $R \subseteq [m]$---of at most $k$ items---serves as the basis for eliminating envy for all the agents simultaneously; here, $k$ is a defining parameter. Each agent may consider (hypothetically) reallocating items from $R$ differently to achieve envy-freeness for itself, but all such reassignments remain confined to $R$. The smaller the size of this set, $k$, the stronger the fairness guarantee. In particular, if $k =0$, then we recover exact envy-freeness. 

\begin{definition}[Envy-freeness up to $k$ reallocations] 
\label{defn:EFRk}
An allocation $\calA=(A_1, \ldots, A_n)$ is said to be envy-free up to $k \in \mathbb{Z}_+$ reallocations ($\EFR{k}$) if there exists a size-$k$ subset $R \subseteq [m]$ such that, for each agent $i\in [n]$, we can reassign items from within $R$ in $\A$ and obtain an allocation $\calA^i=(A^i_1, \ldots, A^i_n)$ which is envy-free for $i$.
\end{definition}
Note that, for an $\EFR{k}$ allocation $\calA$ and each allocation $\calA^i$, that is envy-free for agent $i$, it holds that $\bigcup_{\ell=1}^n \left( A_\ell \triangle A^i_\ell \right) \subseteq R$. That is, the symmetric differences between the bundles in $\calA$ and $\calA^i$ satisfy $\sum_{j=1}^n \left| A_j \triangle A^i_j  \right| \leq 2|R| = 2k$. This work establishes the existence of $\EFR{k}$ allocations with $k = (n-1)$. Hence, under such allocations $\calA$, for each agent $i$ one can obtain envy-freeness by reallocating at most two items on average across the $n$ bundles: $\frac{1}{n} \sum_{j=1}^n \left| A_j \triangle A^i_j  \right| \leq 2 - \frac{2}{n}$. \\

\noindent
{\bf Envy Graph and Envy Cycle.} For an allocation $\calA=(A_1, \ldots, A_n)$, the {envy graph} is a directed graph $G_\calA$ with $n$ vertices, one for each agent $i \in [n]$, and the graph contains the directed edge $(i,j) \in [n] \times [n]$ iff agent $i$ envies $j$ under the allocation, i.e., iff $v_i(A_i) <  v_i(A_j)$.  A directed cycle in $G_\calA$ is referred to as an {envy cycle}. \\

 \noindent
{\bf Pareto Efficiency and Social Welfare.} We next define the efficiency and welfare notions considered in this work. An allocation $\mathcal{X} = (X_1, \ldots, X_n)$ is said to be Pareto dominated by another allocation $\mathcal{Y} = (Y_1, \ldots, Y_n)$ if $v_i(X_i) \leq v_i(Y_i)$, for all agents $i \in [n]$, and one of these inequalities is strict.  An allocation $\calA$ is said to be {Pareto optimal} ($\PO$) if there does not exist any allocation $\mathcal{Y} \in \Pi_n(m)$ that Pareto dominates $\calA$. The following proposition notes that, for any Pareto optimal allocation $\calA$, the envy graph $G_\calA$ is acyclic. 

\begin{proposition}
\label{obs:cyclenotPO}
Let $\calA$ be a Pareto optimal allocation. Then, its envy graph $G_\A$ does not contain an envy cycle and, hence, there exists an agent $i\in [n]$ with the property that $\calA$ is envy-free for $i$.
\end{proposition}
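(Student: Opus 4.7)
The plan is to prove the contrapositive of the first assertion: if the envy graph $G_\calA$ contains a directed cycle, then $\calA$ is not Pareto optimal. The second assertion then follows from a standard fact about directed acyclic graphs.

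First I would suppose, toward a contradiction, that $G_\calA$ contains an envy cycle $i_1 \to i_2 \to \cdots \to i_k \to i_1$. By the definition of the envy graph, this means $v_{i_j}(A_{i_j}) < v_{i_j}(A_{i_{j+1}})$ for every $j \in [k]$ (with indices taken modulo $k$). I would then construct a new allocation $\calA' = (A'_1, \ldots, A'_n)$ by rotating bundles along the cycle: set $A'_{i_j} \coloneqq A_{i_{j+1}}$ for each $j \in [k]$, and $A'_\ell \coloneqq A_\ell$ for every agent $\ell$ not on the cycle. Since the cycle is a set of distinct agents and the remaining agents' bundles are unchanged, $\calA'$ is indeed a valid $n$-partition of $[m]$.

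Next I would verify that $\calA'$ Pareto dominates $\calA$. For each agent $i_j$ on the cycle, $v_{i_j}(A'_{i_j}) = v_{i_j}(A_{i_{j+1}}) > v_{i_j}(A_{i_j})$ by the envy inequality, so those agents are strictly better off. For every agent $\ell$ off the cycle, $v_\ell(A'_\ell) = v_\ell(A_\ell)$, so they are unchanged. Hence $\calA'$ weakly dominates $\calA$ with at least one strict inequality, contradicting the Pareto optimality of $\calA$. This establishes the acyclicity of $G_\calA$.

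For the final conclusion, I would invoke the basic fact that every finite directed acyclic graph has at least one sink (a vertex with out-degree zero); this follows from considering the last vertex in a topological ordering, or equivalently from taking the endpoint of any maximal directed path. Translated back to the envy graph, a sink corresponds to an agent $i \in [n]$ who envies no other agent under $\calA$, i.e., $v_i(A_i) \ge v_i(A_j)$ for all $j \in [n]$. Thus $\calA$ is envy-free for agent $i$, completing the proof. There is no real obstacle here; the argument is a direct adaptation of the classical envy-cycle elimination idea of Lipton et al., and the only subtlety to be careful about is handling agents outside the cycle when defining $\calA'$, which is easily dispatched by leaving their bundles unchanged.
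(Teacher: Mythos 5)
Your proposal is correct and follows essentially the same argument as the paper: rotate bundles along the envy cycle so each agent on the cycle receives the bundle it envies, observe that this Pareto dominates $\calA$, and then extract a sink from the acyclic envy graph to find an envy-free agent. No gaps; nothing further to add.
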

\begin{proof}
Assume, towards a contradiction, that the envy graph $G_\A$ contains a directed cycle $C = i_1 \rightarrow i_2 \rightarrow \cdots \rightarrow i_k \rightarrow i_1$. Consider the allocation $\B=(B_1,\ldots, B_n)$ obtained by reassigning bundles in reverse order of $C$. That is, for each agent $i_\ell$ in the cycle, set $B_{i_{\ell}} =  A_{i_{\ell+1}}$; here, $i_{k+1} = i_1$. Also, for all agents $j$ not in the cycle, set $B_j = A_j$.  Since each agent in the cycle envies the next one, we have $v_{i_\ell}(B_{i_\ell}) = v_{i_\ell}(A_{i_{\ell+1}}) > v_{i_\ell}(A_{i_\ell})$. That is, the valuation of each agent in the cycle strictly increases, while for all other agents the valuations remain unchanged. Therefore, $\B$ Pareto dominates $\A$. Since this contradicts the Pareto optimality of $\A$, we obtain that $G_\calA$ does not contain an envy cycle. 

Further, the fact that $G_\calA$ is acyclic implies that there exists an agent $i \in [n]$ with no outgoing (envy) edge. That is, $G_\calA$ admits a sink node $i$. By construction of the envy graph, such an agent $i$ is envy free under $\calA$. 
\end{proof}

We will utilize weighted social welfare maximization as a means to achieve Pareto optimality. In particular,  for any allocation $\calA=(A_1, \ldots, A_n)$ and nonnegative weight vector $w =(w_1, \ldots w_n)^\intercal \in \Rn$, the weighted social welfare $\SW(\A,w)$ is defined as $\SW(\A,w) \coloneqq \sum_{i\in[n]} w_i \ v_i(A_i)$. 

Further, given any nonnegative weight vector $w =(w_1, \ldots w_n)^\intercal  \in \Rn$, the {support} of $w$ is defined as the components $i \in [n]$ for which the corresponding weights $w_i$ are strictly positive, $\supp(w) \coloneqq  \{i \in [n] : w_i > 0\}$. Note that, for any weight vector $w \in \Rn$ with positive components ($w_i > 0$ for all $i \in [n]$), an allocation that maximizes weighted social welfare, $\calA \  \in \argmax\limits_{(X_1, \ldots, X_n) \in \Pi_n(m)} \ \sum_{i\in [n]} w_i v_i(X_i)$, is Pareto optimal.  \\

\noindent 
{\bf Knaster-Kuratowski-Mazurkiewicz Theorem.} Next we provide the constructs required to state the Knaster-Kuratowski-Mazurkiewicz (KKM) Theorem. Write $\Delta_{n-1}$ to denote the standard $(n-1)$-dimensional simplex in $\mathbb{R}^n$, i.e., $\Delta_{n-1} = \left\{w \in \mathbb{R}^n : \sum_{i=1}^n w_i =1 \text{ and } w_i \ge 0 \text{ for all } i\in [n]\right\}$. 

For each nonempty $J\subseteq [n]$, write $\Delta_J$ to denote the convex hull of the basis vectors $\left\{e_j \in \mathbb{R}^n \right\}_{j \in J}$; we have, $\Delta_J = \left\{ w \in \Delta_{n-1} : w_i=0 \text{ for all } i \in [n]\setminus J \right\}$. Note that, for all vectors $w \in \Delta_J$, it holds that $\supp(w) \subseteq J$.

The KKM Theorem is a classic result in combinatorial topology. The theorem asserts that any $n$ closed subsets of $\Delta_{n-1}$ that satisfy a covering condition necessarily intersect. Formally, 
\begin{theorem}[\cite{knasteretalBeweisFixpunktsatzes29}]\label{thm:kkm}
Let $\mathcal{K}_1,\mathcal{K}_2, \ldots, \mathcal{K}_n$ be closed subsets of $\Delta_{n-1}$ with the property that $\Delta_J\subseteq\bigcup_{j \in J} \mathcal{K}_j$, for every nonempty $J \subseteq [n]$. Then, there exists a point $w^*\in \bigcap_{i=1}^n \mathcal{K}_i$.
\end{theorem}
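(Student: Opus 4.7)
The plan is to derive the KKM Theorem from Sperner's Lemma together with a compactness argument, which is the classical route. I would begin by fixing a sequence of triangulations $T_1, T_2, \ldots$ of $\Delta_{n-1}$ whose mesh (maximum simplex diameter) tends to zero (for instance, by iterated barycentric subdivision of the standard triangulation of $\Delta_{n-1}$).

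For each triangulation $T_k$, I construct a vertex labeling $\ell_k \colon V(T_k) \to [n]$ as follows. For a vertex $v \in V(T_k)$, let $J(v) \coloneqq \supp(v) = \{ i \in [n] : v_i > 0\}$. Then $v$ lies in the face $\Delta_{J(v)}$, so by the KKM covering hypothesis $\Delta_{J(v)} \subseteq \bigcup_{j \in J(v)} \mathcal{K}_j$ there exists some $j \in J(v)$ with $v \in \mathcal{K}_j$; I set $\ell_k(v) \coloneqq j$. By construction, this is a Sperner labeling: any vertex lying on a face $\Delta_J$ has $\supp(v) \subseteq J$ and therefore receives a label in $J$; in particular, the standard basis vector $e_i$ is labeled $i$. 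Sperner's Lemma then yields a small simplex $\sigma_k \in T_k$ whose $n$ vertices $v^{(k)}_1, \ldots, v^{(k)}_n$ carry all $n$ distinct labels, where $\ell_k\bigl(v^{(k)}_i\bigr) = i$, so that $v^{(k)}_i \in \mathcal{K}_i$ for each $i \in [n]$.

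Because $\Delta_{n-1}$ is compact, by Bolzano--Weierstrass I can pass to a subsequence along which $v^{(k)}_1$ converges to some point $w^* \in \Delta_{n-1}$. Since the diameter of $\sigma_k$ tends to zero as $k \to \infty$, all the remaining sequences $v^{(k)}_2, \ldots, v^{(k)}_n$ also converge to the same limit $w^*$. Because each $\mathcal{K}_i$ is closed and contains $v^{(k)}_i$ for every $k$ in the extracted subsequence, the limit $w^*$ lies in $\mathcal{K}_i$ for every $i \in [n]$. Hence $w^* \in \bigcap_{i=1}^n \mathcal{K}_i$, as desired.

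I expect the main obstacle to be Sperner's Lemma itself, which I would invoke as a black box; it is proved by a combinatorial door-counting argument showing that the number of fully labeled sub-simplices has the same parity as a corresponding count on the boundary, which is odd by induction. The other nontrivial bookkeeping step is to verify that the labeling rule never fails at a boundary vertex, and this is exactly where the covering condition $\Delta_J \subseteq \bigcup_{j\in J} \mathcal{K}_j$ is used---without it, a vertex $v$ on a proper face could conceivably lie outside every $\mathcal{K}_j$ with $j \in \supp(v)$, making a Sperner-valid label impossible. The compactness step is routine provided one extracts a subsequence simultaneously for all $n$ vertex sequences, which follows by iterating Bolzano--Weierstrass $n$ times.
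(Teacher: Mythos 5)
Your proof is correct: this is the standard derivation of the KKM Theorem from Sperner's Lemma, and every step checks out. The labeling rule is well-defined precisely because a vertex $v$ lies in $\Delta_{\supp(v)}$ and the covering hypothesis places it in some $\mathcal{K}_j$ with $j \in \supp(v)$; this yields a valid Sperner labeling (in particular $e_i$ gets label $i$); the fully labeled simplices shrink to a point by the mesh condition; and closedness of the $\mathcal{K}_i$ transfers membership to the limit. One small remark on exposition: you do not need to iterate Bolzano--Weierstrass $n$ times as you suggest at the end --- a single subsequence extraction for $v^{(k)}_1$ suffices, since the vanishing diameter of $\sigma_k$ forces the other $n-1$ vertex sequences to converge to the same limit along that subsequence, exactly as you argue in the body of the proof.

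For comparison: the paper does not prove this statement at all --- it is invoked as a classical result with a citation to the original 1929 paper of Knaster, Kuratowski, and Mazurkiewicz, and used as a black box in the proof of Theorem~\ref{thm:EFRPO}. So your write-up supplies a proof where the paper supplies none; the Sperner-based route you chose is the most common one (the original argument and some modern treatments instead go through Brouwer's fixed-point theorem or degree theory, to which KKM is equivalent). Nothing in your argument conflicts with how the theorem is applied in Section~\ref{sec:efr_and_po}.
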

Expressing the requirement in this theorem, we will say that subsets $\mathcal{K}_1, \ldots, \mathcal{K}_n \subseteq \Delta^{n-1}$ form a KKM covering if $\Delta_J \subseteq \bigcup_{j \in J} \mathcal{K}_j$ for every nonempty $J\subseteq [n]$.
\section{Fair and Efficient Allocation of Mixed Manna}\label{sec:efr_and_po}
This section shows that discrete fair division instances $\langle [n], [m], \{ v_i \}_{i \in [n]} \rangle$ with mixed manna and additive valuations always admit an $\EFRn$ and $\PO$ allocation. We utilize the KKM Theorem for this existential guarantee. Essentially, for each agent $i \in [n]$, we consider the set $\C_i \subseteq \Delta_{n-1}$ of weight vectors $w \in \Delta_{n-1}$ for which there exists a welfare-maximizing allocation, $\mathcal{X}^i$, that is envy-free for $i$. Here, for a complete analysis, additional technical ideas (specifically, value perturbations and weight shifting) are required. However, at a high level, the fact that the $\C_i$-s intersect (as ensured by the KKM Theorem) implies the existence of a weight vector $w^* \in \Delta_{n-1}$ under which, simultaneously for each agent $i \in [n]$, there exists an envy-free and (weighted) welfare-maximizing allocation, $\calA^i$. We will prove that these allocations $\{\calA^i \}_{i \in [n]}$ lead to the $\EFRn$ and $\PO$ guarantee. 

	To obtain a non-degeneracy condition (Definition \ref{defn:non-deg}), we will first perturb the given valuations by sufficiently small additive factors. In the analysis below, the weighted welfare maximization will be with respect to the perturbed valuations. Further, for each weight vector $w \in \Delta_{n-1}$ we will shift each component by a parameter $\eta >0$; this, in particular, will ensure that the welfare maximization is performed under weights that are all positive. Overall, our framework leverages the KKM theorem and perturbed welfare maximization (to ensure Pareto optimality) along with counting arguments to obtain the following main result. 
	\begin{restatable}{theorem}{EFRPO}
		\label{thm:EFRPO}
		Every fair division instance $\langle [n], [m],  \{ v_i \}_{i \in [n]} \rangle $, with indivisible mixed manna and additive valuations, admits an $\EFRn$ and $\PO$ allocation. 
	\end{restatable}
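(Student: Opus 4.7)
The plan is to build the allocation via a KKM-based argument on the weight simplex $\Delta_{n-1}$, associating to each agent a closed set of weight vectors for which some welfare-maximizing allocation is envy-free for that agent. First, I would replace each valuation $v_i(t)$ by a perturbed value $\tilde v_i(t) := v_i(t) + \varepsilon_{i,t}$, choosing the $\varepsilon_{i,t}$ to be sufficiently small generic reals so that (i) the signs of all value differences $v_i(S) - v_i(T)$ between bundles $S, T \subseteq [m]$ are preserved under $\tilde v$, and (ii) no algebraic identity of the form $\prod_\ell \tilde v_{j_\ell}(t_\ell) = \prod_\ell \tilde v_{j_{\ell+1}}(t_\ell)$ holds for any cyclic sequence of distinct items and distinct agents (used in the counting step below). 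Fix a small $\eta > 0$ (to be tuned against instance-dependent constants); for $w \in \Delta_{n-1}$, the shifted weight $\tilde w := w + \eta \mathbf{1}$ has strictly positive components, so any maximizer of $\calA \mapsto \sum_\ell \tilde w_\ell \tilde v_\ell(A_\ell)$ over $\calA \in \Pi_n(m)$ is Pareto optimal with respect to $\tilde v$. Define, for each agent $i \in [n]$,
\[
\mathcal{C}_i := \Bigl\{ w \in \Delta_{n-1} : \exists\, \calA \in \argmax_{\X \in \Pi_n(m)} \sum_{\ell=1}^n \tilde w_\ell\, \tilde v_\ell(X_\ell)\ \text{that is envy-free for } i \text{ under } \tilde v \Bigr\}.
\]
Closedness of each $\mathcal{C}_i$ follows by finite enumeration of allocations plus continuity: any limit $w$ of a sequence in $\mathcal{C}_i$ inherits (via a subsequence with a repeated witness allocation $\calA$) a maximizer at the limit that remains envy-free for $i$.

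The heart of the argument is verifying the KKM covering $\Delta_J \subseteq \bigcup_{j \in J} \mathcal{C}_j$ for every nonempty $J \subseteq [n]$. Fix $w \in \Delta_J$ and a shifted-welfare maximizer $\calA$; by \Cref{obs:cyclenotPO}, the envy graph $G_\calA$ is acyclic. I would show that some sink of $G_\calA$ lies in $\supp(w) \subseteq J$, giving $w \in \mathcal{C}_{j^*}$ for that sink $j^*$. If $j$ envies $k$ in $\calA$, welfare optimality against the $A_j$--$A_k$ swap yields
\[
(w_j + \eta)\bigl(\tilde v_j(A_k) - \tilde v_j(A_j)\bigr) \leq (w_k + \eta)\bigl(\tilde v_k(A_k) - \tilde v_k(A_j)\bigr),
\]
and since genericity bounds the envy gap on the left by some fixed $\mu > 0$ and the valuation range above by some $V_{\max}$, every envy edge $j \to k$ forces $w_k + \eta \geq (\mu/V_{\max})(w_j + \eta)$. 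Starting from $j^* \in \argmax_{j \in \supp(w)} w_j$ (with $w_{j^*} \geq 1/n$) and following envy edges inside $\supp(w)$ to a terminal vertex $s \in \supp(w)$ that has no outgoing envy edge into $\supp(w)$, iterating the above inequality at most $n-1$ times yields $w_s + \eta \geq (w_{j^*} + \eta)(\mu/V_{\max})^{n-1}$. On the other hand, if $s$ envied any $k \notin \supp(w)$ (so $w_k = 0$), the same swap bound would force $w_s + \eta \leq \eta(V_{\max}/\mu)$. For $\eta$ chosen small enough (uniformly in $w$), these two estimates are incompatible, so $s$ is a sink of the full $G_\calA$ and lies in $\supp(w) \subseteq J$. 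Applying \Cref{thm:kkm} then produces $w^* \in \bigcap_i \mathcal{C}_i$; for each $i$, pick a shifted-welfare maximizer $\calA^i$ under $\tilde w^* = w^* + \eta \mathbf{1}$ that is envy-free for $i$. Each $\calA^i$ is Pareto optimal with respect to $\tilde v$.

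The final step is a counting bound on $R \subseteq [m]$, the set of items assigned to different agents in different $\calA^i$'s. Since any shifted-welfare maximizer assigns every item $t$ to some agent in $\argmax_j \tilde w^*_j \tilde v_j(t)$, for each $t \in R$ there exist distinct $j_1(t), j_2(t)$ with $\tilde w^*_{j_1(t)}\, \tilde v_{j_1(t)}(t) = \tilde w^*_{j_2(t)}\, \tilde v_{j_2(t)}(t)$. Form the multigraph $G$ on vertex set $[n]$ with one edge $\{j_1(t), j_2(t)\}$ per $t \in R$. A cycle $j_1 {-} j_2 {-} \cdots {-} j_k {-} j_1$ in $G$, with associated items $t_1, \ldots, t_k$, would, after multiplying the corresponding $k$ tie equations and cancelling the $\tilde w^*$-factors (each $\tilde w^*_{j_\ell}$ appears once on both sides), yield $\prod_\ell \tilde v_{j_\ell}(t_\ell) = \prod_\ell \tilde v_{j_{\ell+1}}(t_\ell)$, which is excluded by genericity (ii). Thus $G$ is a forest on $n$ vertices, giving $|R| \leq n-1$. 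Setting $\calA := \calA^1$, each $\calA^i$ agrees with $\calA$ outside $R$ and is therefore obtained from $\calA$ by reassigning items within $R$, witnessing $\EFR{(n-1)}$. Sign-preservation (i) then lifts Pareto optimality and envy-freeness of $\calA$ and of each $\calA^i$ from $\tilde v$ back to $v$. The main technical obstacle is the uniform choice of $\eta$ in the covering step, handled by the envy-descent bound above; a secondary subtlety is arranging the generic perturbation to satisfy both the sign-preservation of (i) and the multiplicative-identity avoidance of (ii) simultaneously, which is possible since both are finite collections of measure-zero conditions.
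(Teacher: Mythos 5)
Your proposal follows the same architecture as the paper's proof: perturb the valuations generically, shift the weights by $\eta$ so that weighted-welfare maximizers are Pareto optimal, define $\mathcal{C}_i$ as the weight vectors admitting a maximizer that is envy-free for $i$, verify closedness and the KKM covering, and then bound the set of reassigned items by showing that the graph recording welfare ties is a forest. The variations are minor and all workable: you prove the covering via a chain of pairwise-swap inequalities along an envy path (forcing $\eta$ to be exponentially small in $n$, versus the paper's single path-rotation argument with $\eta \leq \lambda/(2\Lambda n)$); you define $\mathcal{C}_i$ via envy-freeness under $\tilde v$ and lift back at the end, where the paper works with envy under $v$ throughout via Lemma~\ref{lemma:envy-preserved}; and your tie multigraph on $[n]$ is the contraction of the paper's bipartite graph $\calH$, with the same non-degeneracy product condition ruling out cycles.

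The one step that does not go through as written is the final claim that ``sign-preservation (i) lifts Pareto optimality from $\tilde v$ back to $v$.'' Preserving the strict inequalities among bundle values does not rule out a $v$-Pareto-improvement $\calB$ of $\calA$: the coordinates with $v_i(B_i) = v_i(A_i)$ can flip to $\tilde v_i(B_i) < \tilde v_i(A_i)$ under the perturbation, so $\calB$ need not dominate $\calA$ under $\tilde v$, and $\tilde v$-optimality of $\calA$ gives no immediate contradiction. The correct argument is quantitative, and is exactly the paper's Lemma~\ref{lem:POwrtInput}: a $v$-Pareto-improvement raises the $\tilde w$-weighted welfare under $v$ by at least $\eta$ times the minimum social-welfare gap $\omega$ of equation~(\ref{eq:omega}) (all shifted weights are at least $\eta$), while the $\varepsilon_{i,t}$-perturbations can alter the weighted welfare by at most $(1+n\eta)\varepsilon m$, which is too small to absorb this gain; hence $\calB$ would beat $\calA$ in shifted weighted welfare, a contradiction. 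You have all the ingredients for this (small $\varepsilon$, strictly positive shifted weights, ``instance-dependent tuning''), so the gap is a missing argument rather than a wrong approach, but as stated the lift of $\PO$ from $\tilde v$ to $v$ is unjustified.
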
	

\noindent
To establish the theorem for instance $\langle [n], [m],  \{ v_i \}_{i \in [n]} \rangle $, we will first construct perturbed valuations, $\{ \overline{v}_i \}_{i \in [n]}$, by subtracting (sufficiently small) positive random scalars, $\varepsilon_{i,t} >0$, from the underlying values $v_i(t)$. In particular, with a fixed constant $\varepsilon >0$ (specified below) in hand, we  draw $\varepsilon_{i,t}$ uniformly at random from the interval $(0, \varepsilon)$, independently for each $i \in [n]$ and $t \in [m]$, and set 
\begin{align}
\overline{v}_i(t) = v_i(t) - \varepsilon_{i,t} \label{eq:defn-v-bar} 
\end{align}

To select an appropriate $\varepsilon >0$, we define the following parameters 
\begin{align}
\lambda & \coloneqq \min\limits_{\substack{i \in [n], \ S, T  \ \subseteq [m] \\ S \cap T = \emptyset}}  \big\{   |v_i(S) - v_i(T)|  \  : \  v_i(S) \neq v_i(T)  \big\}  \label{eq:lambda} \\
\Lambda & \coloneqq \max\limits_{\substack{i \in [n], \ S, T \ \subseteq [m] \\ S \cap T = \emptyset}} \  |v_i(S) - v_i(T)|   \label{eq:Lambda} \\
\omega &  \coloneqq \min \limits_{\substack{\A,\B \in \Pin{m} \\  \SW(\A, {\mathbf 1}) \neq \SW(\B, {\mathbf 1}) }} \  \left| \SW (\A, {\mathbf 1}) - \SW (\B, {\mathbf 1}) \right| \label{eq:omega}
\end{align}
By definition, $\lambda$, $\Lambda$, and $\omega$ are all positive. Here, $\lambda$ captures the minimum envy that any agent can have, $\Lambda$ is the maximum envy, and $\omega$ is minimum distinct social welfare. 

We set the shift parameter $\eta$ as follows $0 < \eta \leq \frac{\lambda}{2 \Lambda n}$; note that $\eta \leq \frac{1}{2n}$. Furthermore, we set $0< \varepsilon < \frac{\eta }{4nm} \min\{  \lambda, \omega, 1 \}$.  
	 
\medskip

The analysis in this section involves both the given additive valuations $\{ v_i \}_{i}$ and the perturbed ones $\{\overline{v}_i \}_i$. Here, the perturbed valuations are also additive, i.e., for any subset of items $S \subseteq [m]$, we have $\overline{v}_i(S) = \sum_{t \in S} \overline{v}_i(t)$. We will also consider $\eta$-shifted weights and the corresponding social welfare under $\overline{v}_i$s. Specifically, for any allocation $\calA$ and weight vector $w \in \Delta_{n-1}$, write 
\begin{align*}
\overline{\SW}_\eta(\A, w) \coloneqq \sum_{i\in[n]} \ (w_i+\eta) \ \overline{v}_i (A_i)
\end{align*} 
Further, for any weight vector $w \in \Rn$, let $\overline{\SW}^*_\eta(w)$ denote the maximum welfare across all allocations, $\overline{\SW}^*_\eta(w) \coloneqq \max_{\mathcal{X} \in \Pin{m}} \ \overline{\SW}_\eta(\mathcal{X},w)$. Also, let $\overline{\Opt}_\eta(w)$ denote the set of allocations that maximize the $\overline{\SW}_\eta(\cdot, w)$. That is, the set of optimal allocations \[ \overline{\Opt}_\eta(w) \coloneqq \left\{\mathcal{X} \in \Pin{m} : \overline{\SW}_\eta(\mathcal{X}, w) = \overline{\SW}^*_\eta(w) \right\}.\] 

Equivalently, $\overline{\Opt}_\eta(w) = \argmax_{\mathcal{X} \in \Pi_n(m)} \ \overline{\SW}_\eta( \mathcal{X}, w)$. \\

Next, we will show that the perturbed valuations satisfy a useful non-degeneracy condition (Definition \ref{defn:non-deg}) with probability one. Here, $K_{n,m}$ denotes the complete bipartite graph with $n$ vertices in the left part and $m$ in the right; the vertices on the left correspond to the agents, and the ones on the right correspond to the items. 

\begin{definition}[Non-degenerate valuations]
\label{defn:non-deg}
For $n$ agents and $m$ items, values $\{ \widehat{v}_{i}(t) \}_{i \in [n], t \in [m]}$ are said to be non-degenerate if they bear the following two properties 
\begin{enumerate}[label=(\roman*)]
\item $\widehat{v}_i(t)\neq 0$ for all agents $i\in [n]$ and all items $t\in [m]$, and 
\item For every cycle $C: (i_1, t_1, i_2, t_2, \ldots, i_k, t_k, i_1)$ in $K_{n,m}$---with $i_1, \ldots, i_k \in [n]$ and $t_1, \ldots, t_k \in [m]$---it holds that $\prod \limits_{\ell=1}^k \left( \frac{ \widehat{v}_{i_{\ell + 1}}(t_\ell)}{\widehat{v}_{i_{\ell}}(t_\ell)} \right) \neq 1$; here, we consider indices modulo $k$, i.e., $i_{k+1} = i_1$. 
\end{enumerate}
\end{definition}

\begin{lemma}
\label{lem:non-degenerate}
For any fair division instance $\langle [n], [m], \{ v_i \}_{i \in [n]} \rangle$ with mixed manna and additive valuations, the perturbed values $\{ \overline{v}_i(t) \}_{i,t}$ (as defined above) are non-degenerate with probability one.
\end{lemma}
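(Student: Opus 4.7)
The plan is to verify the two non-degeneracy conditions separately, each with probability one, and then union bound. Since the variables $\{\varepsilon_{i,t}\}_{i \in [n], t \in [m]}$ are independent and uniform on $(0,\varepsilon)$, their joint distribution is absolutely continuous with respect to Lebesgue measure on $\mathbb{R}^{nm}$, so the zero set of any nonzero polynomial in these variables is a null event. Both defining conditions can be reformulated as requiring $(\varepsilon_{i,t})$ to avoid such a zero set, so the whole argument reduces to exhibiting, for each event, a polynomial that is not identically zero.

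Condition (i) is immediate: for each fixed pair $(i,t)$, the event $\{\overline{v}_i(t) = 0\}$ coincides with $\{\varepsilon_{i,t} = v_i(t)\}$, a single-point event, and a union bound over the $nm$ pairs handles this condition.

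The main work lies in condition (ii). Fix any cycle $C : (i_1, t_1, i_2, t_2, \ldots, i_k, t_k, i_1)$ in $K_{n,m}$; since $C$ is a simple cycle in a bipartite graph, the agents $i_1, \ldots, i_k$ and the items $t_1, \ldots, t_k$ are each pairwise distinct. On the (probability-one) event that condition (i) holds, clearing denominators rewrites the cycle identity as
\begin{equation*}
P_C(\varepsilon) \coloneqq \prod_{\ell=1}^k \overline{v}_{i_{\ell+1}}(t_\ell) \;-\; \prod_{\ell=1}^k \overline{v}_{i_\ell}(t_\ell) \;=\; 0,
\end{equation*}
where $P_C$ is a polynomial of total degree $k$ in $\{\varepsilon_{i,t}\}$. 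I would show $P_C$ is not identically zero by isolating its degree-$k$ component. Since $\overline{v}_i(t) = v_i(t) - \varepsilon_{i,t}$, the top-degree parts of the two products are, up to the common sign $(-1)^k$, the squarefree monomials $M_1 = \prod_{\ell} \varepsilon_{i_{\ell+1}, t_\ell}$ and $M_2 = \prod_{\ell} \varepsilon_{i_\ell, t_\ell}$. If some pair $(i_{\ell+1}, t_\ell)$ from $M_1$ were to equal a pair $(i_{\ell'}, t_{\ell'})$ from $M_2$, then distinctness of the $t_\ell$ would force $\ell = \ell'$, and the remaining equation $i_{\ell+1} = i_\ell$ would contradict distinctness of the $i_\ell$. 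Hence $M_1$ and $M_2$ involve disjoint variable sets, are distinct monomials, and the degree-$k$ part of $P_C$ — and therefore $P_C$ itself — is nonzero in the polynomial ring. A union bound over the finitely many simple cycles of $K_{n,m}$ then yields condition (ii) with probability one, and combining with condition (i) completes the proof. The principal obstacle, isolated above, is the non-cancellation of the leading monomials of $P_C$, and the above argument resolves it by using the simple-cycle structure in the bipartite graph $K_{n,m}$.
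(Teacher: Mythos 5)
Your proof is correct and follows the same overall strategy as the paper's: reduce non-degeneracy to finitely many equality events in the independent, continuous perturbations and argue each has probability zero. The difference is that the paper simply asserts that each equality in condition (ii) is a null event, whereas you supply the justification that this actually requires: after clearing denominators, the event is the zero set of the polynomial $P_C$, and you verify $P_C \not\equiv 0$ by checking that its two top-degree monomials $\prod_{\ell}\varepsilon_{i_{\ell+1},t_\ell}$ and $\prod_{\ell}\varepsilon_{i_\ell,t_\ell}$ live on disjoint variable sets (using that a simple cycle in $K_{n,m}$ has pairwise-distinct agents and pairwise-distinct items) and hence cannot cancel. This non-cancellation step is exactly the point at which the ``independent continuous random variables'' argument could in principle fail, so your version is strictly more careful than the paper's; everything you wrote checks out.
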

\begin{proof} 
Condition (i) in Definition \ref{defn:non-deg} holds for the perturbed values iff $nm$ inequalities of the form $\overline{v}_i(t) \neq 0$ are satisfied. That is, for condition (i) to hold none of the equalities of the form $v_i(t) - \varepsilon_{i,t} = 0$ should hold for the drawn $\varepsilon_{i,t}$. Since $\varepsilon_{i,t}$s are independent, continuous random variables, none of these $nm$ (finitely many) equalities hold, with probability one. 

Further, note that condition (ii) is composed of $O(n^n m^n)$ inequalities.\footnote{The number of cycles in $K_{n,m}$ is $O(n^n m^n)$.} Therefore, for the condition to hold, the drawn $\varepsilon_{i,t}$s should \emph{not} satisfy equalities of the following form
\begin{align*}
\prod_{\ell=1}^k \left( \frac{v_{i_{\ell + 1}}(t_\ell) - \varepsilon_{i_{\ell + 1},t_\ell}}{v_{i_{\ell}}(t_\ell) - \varepsilon_{i_{\ell},t_\ell}} \right)= 1.
\end{align*}  
As before, using the fact that $\varepsilon_{i,t}$s are independent, continuous random variables, we obtain that, with probability one, none of these (finitely many) equalities hold. Overall, the conditions hold for the perturbed values with probability one, and the lemma stands proved. 
\end{proof}

The following lemma shows that any allocation that maximizes the weighted social welfare allocation under the perturbed valuations, $\{\overline{v}_i\}_i$, continues to be Pareto optimal with respect to the underlying valuations $\{v_i \}_i$. Therefore, to identify a $\PO$ allocation for the given instance, it suffices to maximize the weighted social welfare under the perturbed valuations. Here, we will utilize the fact that the selected parameters $\varepsilon$ and $\eta$ are sufficiently small. 

\begin{lemma}\label{lem:POwrtInput}
For any weight vector $w\in \Delta_{n-1}$, let  $\calA$ be an allocation that maximizes shifted social welfare under perturbed valuations, i.e., $\A \in \overline{\Opt}_\eta(w)$. Then, $\A$ is $\PO$ with respect to the underlying valuations $\{ v_i \}_{i \in [n]}$.
\end{lemma}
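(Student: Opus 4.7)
My plan is a proof by contradiction. Suppose $\calA\in\overline{\Opt}_\eta(w)$ is not Pareto optimal under the original valuations $\{v_i\}_i$; then there exists an allocation $\calB=(B_1,\ldots,B_n)$ with $v_i(B_i)\ge v_i(A_i)$ for all $i\in[n]$ and a strict inequality for at least one agent $i_0$. I will show that this forces $\overline{\SW}_\eta(\calB,w)>\overline{\SW}_\eta(\calA,w)$, contradicting $\calA\in\overline{\Opt}_\eta(w)$.

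The first step is to quantify how large the improvement of $\calB$ over $\calA$ is, agent by agent, under the original values. Since $A_i\setminus B_i$ and $B_i\setminus A_i$ are disjoint and $v_i(B_i)-v_i(A_i)=v_i(B_i\setminus A_i)-v_i(A_i\setminus B_i)$, the definition of $\lambda$ in \eqref{eq:lambda} yields a dichotomy: either $v_i(B_i)=v_i(A_i)$, or $v_i(B_i)-v_i(A_i)\ge \lambda$ (using that $\calB$ Pareto dominates, so the difference is nonnegative). Call the latter set of agents $S\subseteq[n]$; we know $i_0\in S$.

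Next I translate this to the perturbed, shifted welfare. Since $|\overline{v}_i(X)-v_i(X)|\le m\varepsilon$ for any bundle $X$, I get $\overline{v}_i(B_i)-\overline{v}_i(A_i)\ge \lambda-2m\varepsilon$ for $i\in S$ and $\overline{v}_i(B_i)-\overline{v}_i(A_i)\ge -2m\varepsilon$ for $i\notin S$. Multiplying by the positive weight $(w_i+\eta)$ and summing,
\begin{align*}
\overline{\SW}_\eta(\calB,w)-\overline{\SW}_\eta(\calA,w)\;\ge\;(w_{i_0}+\eta)(\lambda-2m\varepsilon)\;-\;\sum_{i\ne i_0}(w_i+\eta)\,2m\varepsilon.
\end{align*}
Using $w_{i_0}+\eta\ge\eta$ and $\sum_{i\ne i_0}(w_i+\eta)\le 1+n\eta$, the right-hand side is at least $\eta\lambda-2m\varepsilon(\eta+1+n\eta)$. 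The choice $\varepsilon<\tfrac{\eta}{4nm}\min\{\lambda,\omega,1\}$ (so $2m\varepsilon<\tfrac{\eta\lambda}{2n}$) together with $\eta\le\tfrac{1}{2n}$ makes the perturbation term strictly smaller than $\eta\lambda$, so the whole expression is strictly positive. This contradicts the optimality of $\calA$ for $\overline{\SW}_\eta(\cdot,w)$, completing the proof.

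The only subtle point—and the main obstacle—is controlling agents $i\notin S$, whose original value is unchanged but whose perturbed value may drop by up to $2m\varepsilon$. The shift by $\eta$ is what saves us: it guarantees that the single strictly-improving agent $i_0$ contributes at least $\eta\lambda$ to the gain, which (by the calibrated choice of $\varepsilon$ relative to $\eta$, $\lambda$, $n$, and $m$) dominates the total perturbation loss across all other agents. No further properties of $w$, $\omega$, or $\Lambda$ are needed here; the bounds involving $\omega$ and $\Lambda$ will be used later in the argument.
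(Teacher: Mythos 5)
Your proof is correct, and it reaches the contradiction by a genuinely different decomposition than the paper's. The paper aggregates: from Pareto domination it deduces $\sum_i v_i(B_i) \geq \sum_i v_i(A_i) + \omega$ using the minimum social-welfare gap $\omega$, then splits the weighted gain as $\sum_i w_i(v_i(B_i)-v_i(A_i)) + \eta\sum_i(v_i(B_i)-v_i(A_i)) \geq 0 + \eta\omega$ and shows the perturbation slack $\sum_i (w_i+\eta)(\sum_{t\in B_i}\varepsilon_{i,t}-\sum_{t\in A_i}\varepsilon_{i,t}) \leq \varepsilon m(1+\eta n)$ cannot absorb $\eta\omega$. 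You instead localize: writing $v_i(B_i)-v_i(A_i) = v_i(B_i\setminus A_i)-v_i(A_i\setminus B_i)$ over disjoint sets, the parameter $\lambda$ gives a per-agent dichotomy, so the single strictly improving agent $i_0$ contributes at least $(w_{i_0}+\eta)\lambda \geq \eta\lambda$ to the shifted weighted gain, which dominates the total perturbation loss $2m\varepsilon(1+(n+1)\eta) < \frac{\eta\lambda}{2n}\cdot 2n = \eta\lambda$. Both arguments lean on the same two levers---the $\eta$-shift guaranteeing a strictly positive effective weight, and $\varepsilon$ being calibrated small against that shift---and both gap parameters are already folded into the paper's choice $\varepsilon < \frac{\eta}{4nm}\min\{\lambda,\omega,1\}$, so your constants check out. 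One modest payoff of your route is that it makes the lemma independent of $\omega$ entirely (only $\lambda$ is needed), whereas this lemma is the one place the paper actually uses $\omega$; a modest payoff of the paper's route is that it never needs to reason about which individual agent improves, only about the total welfare.
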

\begin{proof}
Assume, towards a contradiction, that there exists an allocation $\B$ which Pareto dominates $\A$ under $v_i$s. It follows that $\sum_{i=1}^n v_i(B_i)> \sum_{i=1}^n v_i(A_i)$. Hence, by the definition of $\omega$, we have   
\begin{align}\label{eq:PO1}
\sum_{i=1}^n v_i(B_i)\geq \sum_{i=1}^n v_i(A_i) + \omega.
\end{align}
Since $\A \in \overline{\Opt}_\eta(w)$ it holds that  $\sum_{i=1}^n  (w_i+\eta) \ \overline{v}_i (A_i) \geq \sum_{i=1}^n  (w_i+\eta) \ \overline{v}_i (B_i) $.  Recalling the definition of the perturbed valuations we get 
\begin{align}
&\sum_{i=1}^n  (w_i+\eta) \left( v_i (A_i) - \sum_{t\in A_i} \varepsilon_{i,t}\right) \geq \sum_{i=1}^n  (w_i+\eta) \left( v_i (B_i) - \sum_{t\in B_i} \varepsilon_{i,t}\right) \notag 
\end{align}
Rearranging we obtain 
\begin{align}
\sum_{i=1}^n  (w_i+\eta) \left(\sum_{t\in B_i} \varepsilon_{i,t} - \sum_{t\in A_i} \varepsilon_{i,t} \right) &\geq \sum_{i=1}^n  (w_i+\eta) \left( v_i (B_i) - v_i(A_i) \right) \notag  \\
& = \sum_{i=1}^n  w_i \left( v_i (B_i) - v_i(A_i) \right) +  \sum_{i=1}^n  \eta \left( v_i (B_i) - v_i(A_i) \right) \notag \\
& \geq  \eta \sum_{i=1}^n  \left( v_i (B_i) - v_i(A_i) \right) \notag \\
& \geq \eta \omega \label{eq:PO2}
\end{align}
Here, the second inequality follows from the assumption that $\B$ Pareto dominates $\A$ (in particular, $v_i(B_i)\geq  v_i(A_i)$ for all $i$), and the third inequality from equation~(\ref{eq:PO1}). 

Next, recall that, for all $i \in [n]$ and $t \in [m]$, the random perturbations satisfy $0 < \varepsilon_{i,t} < \varepsilon$. Using these inequalities, we upper bound the left-hand-side of equation (\ref{eq:PO2}) as follows 
\begin{align}
\sum_{i=1}^n  (w_i+\eta) \left(\sum_{t\in B_i} \varepsilon_{i,t} - \sum_{t\in A_i} \varepsilon_{i,t} \right) & \leq \sum_{i=1}^n  (w_i+\eta) \left( \sum_{t\in B_i} \varepsilon_{i,t} \right) \nonumber \\
& \leq  \sum_{i=1}^n  (w_i+\eta) \ \varepsilon m \nonumber \\
& = \varepsilon m \left( 1 + \eta n \right) \tag{since $\sum_{i=1}^n w_i = 1$} \\
& \leq \frac{\eta}{4nm} \omega \ m \left( 1 + \eta n \right) \tag{since $\varepsilon<  \frac{\eta }{4nm} \omega$} \\
& = \frac{\eta}{4n} \omega  \ \left( 1 + \eta n \right) \label{eq:PO3}
\end{align}
Equations (\ref{eq:PO2}) and (\ref{eq:PO3}) imply $\frac{\eta \omega}{4n} (1+n \eta) \geq \eta \omega$. Hence, we obtain $\eta \geq \frac{4n-1}{n} >1$. This bound, however, contradicts the fact that $\eta \leq \frac{1}{2n} < 1$. Therefore, by way of contradiction, we obtain that allocation $\calA$ is not Pareto dominated by any other allocation under the valuations $\{v_i\}_i$. This completes the proof of the lemma. 
\end{proof}

The following lemma provides another useful property of the perturbed valuations $\overline{v}_i$s: the perturbations preserve envy. 
\begin{lemma}
\label{lemma:envy-preserved}
For any agent $i \in [n]$ and any pair of disjoint subsets $S, T \subseteq [m]$, if $v_i(S) > v_i(T)$, then $\overline{v}_i(S) \geq \overline{v}_i(T) + \frac{\lambda}{2}$. 
\end{lemma}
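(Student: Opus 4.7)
The plan is to reduce everything to an elementary calculation using additivity of the perturbed valuations, the definition of $\lambda$, and the chosen size of $\varepsilon$.

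First I would note that by the definition of $\lambda$ in equation~(\ref{eq:lambda}), the hypothesis $v_i(S) > v_i(T)$ (on disjoint subsets $S, T \subseteq [m]$) immediately gives $v_i(S) - v_i(T) \geq \lambda$. This is the only place where the hypothesis is used, and it supplies a fixed positive gap that the perturbation must not be allowed to erode by more than $\lambda/2$.

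Next, since both $v_i$ and $\overline{v}_i$ are additive, I would write
\[
\overline{v}_i(S) - \overline{v}_i(T) \;=\; \bigl( v_i(S) - v_i(T) \bigr) \;-\; \sum_{t \in S} \varepsilon_{i,t} \;+\; \sum_{t \in T} \varepsilon_{i,t}.
\]
Each perturbation satisfies $0 < \varepsilon_{i,t} < \varepsilon$, so the absolute value of the perturbation term is bounded by $\sum_{t \in S} \varepsilon_{i,t} + \sum_{t \in T} \varepsilon_{i,t} \leq (|S|+|T|)\,\varepsilon \leq 2m\varepsilon$ (using disjointness is unnecessary here, but $|S|+|T| \leq m$ could be invoked to sharpen the bound; $2m$ is safely sufficient).

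The main (and only) step that requires care is plugging in the parameter choices. We have $\varepsilon < \frac{\eta}{4nm}\lambda$ and $\eta \leq \frac{1}{2n}$, so
\[
2m\,\varepsilon \;<\; 2m \cdot \frac{\eta \lambda}{4nm} \;=\; \frac{\eta \lambda}{2n} \;\leq\; \frac{\lambda}{4n^2} \;\leq\; \frac{\lambda}{2}.
\]
Combining the three displays gives $\overline{v}_i(S) - \overline{v}_i(T) \geq \lambda - \lambda/2 = \lambda/2$, as required. I do not anticipate any real obstacle: the statement is essentially a quantitative stability fact saying that once the additive noise is small relative to the minimum positive envy $\lambda$, strict envy with respect to $v_i$ survives the perturbation with a gap of at least $\lambda/2$. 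The only thing to watch is matching the constants, which the chosen bounds on $\varepsilon$ and $\eta$ handle comfortably.
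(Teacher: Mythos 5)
Your proposal is correct and follows essentially the same route as the paper: apply the definition of $\lambda$ to get a gap of at least $\lambda$, expand the perturbed values via additivity, and verify that the chosen bound on $\varepsilon$ keeps the total perturbation below $\lambda/2$. The only cosmetic difference is that the paper exploits the sign of the perturbations on $T$ to bound the error by $\varepsilon m$ rather than your $2m\varepsilon$, but either constant suffices.
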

\begin{proof}
Since $v_i(S) > v_i(T)$, the definition of $\lambda$ gives us $v_i(S) \geq v_i(T) + \lambda$. In addition, the perturbed values of these sets satisfy
\begin{align}
\overline{v}_i(S) - \overline{v}_i(T) \geq v_i(S) - \varepsilon |S| - v_i(T) \geq v_i(S) - \varepsilon m - v_i(T) \geq \lambda - \varepsilon m  \label{ineq:lamb-eps}
\end{align}
By definition, $\varepsilon < \frac{\eta}{4nm} \lambda \leq \frac{1}{8n^2 m} \lambda$; recall that $\eta \leq \frac{1}{2n}$. Hence, inequality (\ref{ineq:lamb-eps}) reduces to 
\begin{align*}
\overline{v}_i(S) - \overline{v}_i(T) \geq \lambda - \varepsilon m \geq \lambda - \frac{\lambda}{8n^2 m} m = \lambda \left( 1 - \frac{1}{8n^2} \right) \geq \frac{\lambda}{2}.
\end{align*}
The lemma stands proved. 
\end{proof}

\noindent 
{\bf KKM Covering.} For each agent $i \in [n]$, we define set 
\begin{align}
\C_i \coloneqq \{w \in \Delta_{n-1} :  \text{there exists an allocation } \mathcal{X}^i \in \overline{\Opt}_\eta(w) \text{ such that } i \text{ is envy-free under } \mathcal{X}^i\} \label{eqn:def-Ci}
\end{align}
Note that in the definition of $\C_i$ the envy-freeness of agent $i$ under allocation $\mathcal{X}^i$ is considered with respect to the underlying valuation $v_i$.  

The following two lemmas show that $\C_i$s are closed and form a KKM covering, respectively. Hence, we can invoke the KKM Theorem with sets to obtain that these sets intersect. We will utilize this guaranteed intersection to establish Theorem \ref{thm:EFRPO} in Section \ref{section:proof-efrpo}.

\begin{lemma}\label{lem:closed}
For every \(i\in[n]\), the set \(\C_i\) is closed.
\end{lemma}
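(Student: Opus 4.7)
The plan is to show closedness by the standard sequential characterization: take any convergent sequence $\{w^{(k)}\}_{k \geq 1} \subseteq \C_i$ with limit $w^* \in \Delta_{n-1}$, and exhibit an allocation $\mathcal{X}^*$ witnessing $w^* \in \C_i$. For each $k$, let $\mathcal{X}^{i,k} \in \overline{\Opt}_\eta(w^{(k)})$ be an allocation under which agent $i$ is envy-free (with respect to $v_i$), as guaranteed by $w^{(k)} \in \C_i$.

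The key observation is that there are only finitely many $n$-partitions of $[m]$, i.e., $|\Pi_n(m)| \leq n^m$. Therefore, by the pigeonhole principle, some fixed allocation $\mathcal{X}^* \in \Pi_n(m)$ appears infinitely often in the sequence $\{\mathcal{X}^{i,k}\}_{k}$. Passing to a subsequence (which still converges to $w^*$), I can assume $\mathcal{X}^{i,k} = \mathcal{X}^*$ for all $k$. Envy-freeness of agent $i$ under $\mathcal{X}^*$ transfers immediately, since this property depends only on $v_i$ and the allocation (not on $w$).

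It remains to verify $\mathcal{X}^* \in \overline{\Opt}_\eta(w^*)$. For each $k$, optimality gives
\begin{align*}
\overline{\SW}_\eta(\mathcal{X}^*, w^{(k)}) \;\geq\; \overline{\SW}_\eta(\mathcal{Y}, w^{(k)}) \quad \text{for every } \mathcal{Y} \in \Pi_n(m).
\end{align*}
Since $\overline{\SW}_\eta(\mathcal{Z}, w) = \sum_{j \in [n]} (w_j + \eta)\, \overline{v}_j(Z_j)$ is linear, hence continuous, in $w$, letting $k \to \infty$ yields $\overline{\SW}_\eta(\mathcal{X}^*, w^*) \geq \overline{\SW}_\eta(\mathcal{Y}, w^*)$ for every allocation $\mathcal{Y}$. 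Thus $\mathcal{X}^* \in \overline{\Opt}_\eta(w^*)$, which combined with envy-freeness of $i$ under $\mathcal{X}^*$ gives $w^* \in \C_i$, establishing closedness.

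I do not expect any real obstacle here: the finiteness of $\Pi_n(m)$ removes any need for a compactness-of-allocations argument, and the continuity (in fact, linearity) of $\overline{\SW}_\eta(\cdot, w)$ in $w$ makes the limit transfer routine. The only subtlety worth flagging is that the witnessing allocation may vary with $k$, so one must extract a constant subsequence before attempting to pass to the limit; otherwise, one cannot directly argue about a single allocation attaining the maximum at $w^*$.
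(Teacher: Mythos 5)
Your proof is correct and follows essentially the same route as the paper's: sequential criterion for closedness, pigeonhole on the finitely many allocations to extract a constant witnessing subsequence, and a continuity-in-$w$ argument to transfer optimality to the limit point. The only cosmetic difference is that you pass to the limit in each inequality $\overline{\SW}_\eta(\mathcal{X}^*, w^{(k)}) \geq \overline{\SW}_\eta(\mathcal{Y}, w^{(k)})$ directly, while the paper instead invokes continuity of the optimal-value function $\overline{\SW}^*_\eta(\cdot)$; both are equally valid.
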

\begin{proof}
 Let \(\{w^k\}_{k\in\mathbb{N}}\) be any convergent sequence in \(\C_i\) and $\overline{w} \in \Delta_{n-1}$ be the limit point of the sequence. To show that $\C_i$ is closed it suffice to show that the limit point $\overline{w} \in \C_i$. 
    
 Note that, for each $k \in \mathbb{N}$, the vector $w^k\in \C_i $ and, hence, there exists an allocation \(\mathcal{X}^i_k \in \overline{\Opt}_\eta(w^k) \) with the property that agent $i$ is envy-free under $\mathcal{X}^i_k$. Since the collection of allocations is finite, by the infinite pigeonhole principle, some allocation, \(\widehat{\mathcal{X}}^i\), appears infinitely often in the sequence. Let \(\{w^{k_j}\}_{j \in \mathbb{N}} \) be the corresponding infinite subsequence for which $\widehat{\mathcal{X}}^i \in \overline{\Opt}_\eta(w^{k_j})$ for all $j \in \mathbb{N}$. Since \(\{w^{k_j}\}_j \) is an infinite subsequence of the convergent sequence \(\{w^k\}_k \), the subsequence  also converges to \(\overline{w}\).

Next, note that, as a function of $w \in \Delta_{n-1}$, the optimal social welfare \( \overline{\SW}^*_\eta(w)\) is continuous. Indeed, the function \(\overline{\SW}^*_\eta(w)\) is defined as the maximum of a finite number of continuous functions, $\overline{\SW}_\eta(\cdot, w)$, each of which is a linear function of \(w\). Since the maximum of continuous functions is also continuous, \(\overline{\SW}^*_\eta(w)\) is continuous. Therefore, via the sequential criterion for continuity, we have \(\overline{\SW}^*_\eta(w^{k_j}) \to \overline{\SW}^*_\eta(\overline{w})\) as \(j \to \infty\). Similarly, \(\overline{\SW}_\eta(\widehat{\mathcal{X}}^i, w^{k_j}) \to \overline{\SW}(\widehat{\mathcal{X}}^i,  \overline{w})\) as \(j \to \infty\).

The sequence \( \left\{\overline{\SW}_\eta(\widehat{\mathcal{X}}^i , w^{k_j}) \right\}_j \) is the same as \( \left\{\overline{\SW}^*_\eta(w^{k_j}) \right\}_j \), since \(\widehat{\mathcal{X}}^i \in \overline{\Opt}_\eta(w^{k_j})\) for all \(j\). Hence,  \(\overline{\SW}_\eta(\widehat{\mathcal{X}}^i, \overline{w}) = \overline{\SW}^*_\eta(\overline{w})\). This equality implies that \(\widehat{\mathcal{X}}^i \in \overline{\Opt}_\eta(\overline{w})\). Given that agent \(i\) is envy-free under 
\( \widehat{\mathcal{X}}^i \), we obtain that \(\overline{w} \in \C_i\). Therefore, the set \(\C_i\) is closed, and this completes the proof of the lemma. 
\end{proof}

The lemma below shows that $\C_i$s satisfy the KKM covering condition. 
\begin{lemma}\label{lem:covering}
For each nonempty  \(J\subseteq[n]\) it holds that $\Delta_J \subseteq \bigcup_{j \in J} \mathcal{C}_j$. 
\end{lemma}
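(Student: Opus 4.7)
The plan is to fix any $w \in \Delta_J$ and produce an agent $j \in J$ together with an allocation $\mathcal{X}^j \in \overline{\Opt}_\eta(w)$ that is envy-free for $j$. I would start by picking an arbitrary $\mathcal{A} \in \overline{\Opt}_\eta(w)$; by Lemma~\ref{lem:POwrtInput}, $\mathcal{A}$ is $\PO$ under the original valuations, so Proposition~\ref{obs:cyclenotPO} ensures that the envy graph $G_\mathcal{A}$ is acyclic and hence admits at least one envy-free sink. If some sink belongs to $J$, then $\mathcal{A}$ itself witnesses $w \in \mathcal{C}_j$ and we are done. Assume, toward a contradiction, that every envy-free agent under $\mathcal{A}$ lies in $[n]\setminus J$.

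The key construction is a welfare-improving cyclic rearrangement. Since $\sum_{j \in J} w_j = 1$ and $|J| \le n$, pigeonhole yields some $j^\star \in J$ with $w_{j^\star} \ge 1/n$. By hypothesis $j^\star$ is not envy-free, so I would follow outgoing envy edges from $j^\star$ in $G_\mathcal{A}$; acyclicity forces this walk to terminate at a sink, producing a simple path $p_0 = j^\star \to p_1 \to \cdots \to p_r$ where $p_r \notin J$, so $w_{p_r}=0$. I would then define a new allocation $\mathcal{B}$ by cyclically shifting bundles along this path: $B_{p_\ell} = A_{p_{\ell+1}}$ for $\ell < r$, $B_{p_r} = A_{p_0}$, and $B_i = A_i$ for every agent $i$ off the path, and compare the shifted welfares of $\mathcal{B}$ and $\mathcal{A}$.

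The main obstacle is the closing term of this cyclic shift: assigning $A_{p_0}$ to $p_r$ carries weight $w_{p_r}+\eta = \eta$ in the objective and, in the worst case, lowers welfare by as much as $\eta(\Lambda+\varepsilon m)$. On the positive side, each envy edge $p_\ell \to p_{\ell+1}$ contributes at least $(w_{p_\ell}+\eta)\lambda/2$ to the change $\Delta \coloneqq \overline{\SW}_\eta(\mathcal{B},w)-\overline{\SW}_\eta(\mathcal{A},w)$ by Lemma~\ref{lemma:envy-preserved}. Isolating just the first envy edge gives $\Delta \ge (w_{j^\star}+\eta)\lambda/2 - \eta(\Lambda+\varepsilon m) = w_{j^\star}\lambda/2 + \eta\lambda/2 - \eta\Lambda - \eta\varepsilon m$. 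The pigeonhole bound $w_{j^\star}\lambda/2 \ge \lambda/(2n)$ combined with the calibration $\eta \le \lambda/(2\Lambda n)$ makes $w_{j^\star}\lambda/2$ dominate $\eta\Lambda$, while $\varepsilon m < \lambda/2$ (which follows from $\varepsilon < \eta\lambda/(4nm)$) leaves a strictly positive residual of $\eta(\lambda/2-\varepsilon m)$. Hence $\Delta > 0$, contradicting $\mathcal{A} \in \overline{\Opt}_\eta(w)$ and forcing some sink of $G_\mathcal{A}$ to lie in $J$, which completes the argument.
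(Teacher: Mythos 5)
Your proof is correct and follows essentially the same route as the paper's: pigeonhole a heavy-weight agent, walk along envy edges to a zero-weight terminal agent, cyclically shift bundles, and show via Lemma~\ref{lemma:envy-preserved} and the calibration of $\eta$ and $\varepsilon$ that the first edge's welfare gain dominates the closing term's loss, contradicting optimality. The only (harmless) differences are that you extend the path all the way to a sink outside $J$ rather than stopping at the first agent outside $\supp(w)$, and you locate the envy-free agent in $J$ rather than in $\supp(w)$; both variants yield a terminal vertex of weight zero and the same computation.
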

\begin{proof}
We establish the lemma by considering two complementary and exhaustive cases --  Case {\rm I}: The set $J = [n]$ and Case {\rm II}: The set $J \subsetneq [n]$.  \\

\noindent
{\it Case {\rm I}: Set $J = [n]$.}  Consider any weight vector \(w \in \Delta_{[n]} = \Delta_{n-1}\). We will show that $w \in \bigcup_{j \in [n]} \ \C_j$ and, hence, obtain that the covering condition holds for $J = [n]$. For the vector $w$, let \(\calA\) be an allocation in \(\overline{\Opt}_\eta(w)\). Lemma~\ref{lem:POwrtInput} implies that $\calA$ is $\PO$ with respect to the underlying valuations $\{v_i\}_i$. Further, Proposition \ref{obs:cyclenotPO} ensures that there exists an agent \(\hat{i}\in[n]\) that is envy-free under \(\A\). Therefore, \(w\in \C_{\hat{i}}\), i.e., as desired, we have $w \in \bigcup_{j \in [n]} \ \C_j$. This completes the analysis of Case {\rm I}. \\

\noindent
{\it Case {\rm II}:  Nonempty set \(J \subsetneq [n]\).} Consider any vector \(w\in \Delta_J\) and allocation  \(\A\in \overline{\Opt}_\eta(w)\). In Claim \ref{claim:envy-free-agent} below we will show that there exists an agent $\widehat{j} \in \supp(w)\subseteq J$ that is envy-free under \(\A\).\footnote{Here, as in the definition of $\mathcal{C}_i$s, envy-freeness is considered with respect to the underlying valuations $\{v_i\}_i$.} For such an agent $\widehat{j}$, we have $w \in \C_{\widehat{j}}$ and, hence, $w \ \in \bigcup\limits_{j \in \supp(w)} C_j \subseteq \bigcup\limits_{j \in J} C_j$. Since this containment holds for all \(w\in \Delta_J\), the KKM covering condition is satisfied in this case as well. 

\begin{claim}
\label{claim:envy-free-agent}
There exists an agent $\widehat{j} \in \supp(w)\subseteq J$ that is envy-free under \(\A\).
\end{claim}
\begin{proof}
Assume, towards a contradiction, that every agent in \(\supp(w)\) envies some other agent under allocation \(\A\), i.e., there does not exist a $\widehat{j} \in \supp(w)$ that is envy-free under \(\A\). Equivalently,  every agent $j \in \supp(w)$ has at least one outgoing edge in the envy-graph $G_\calA$. Now, Lemma~\ref{lem:POwrtInput} ensures that $\A$ is $\PO$ with respect to the valuations $\{v_i\}_i$. Hence, $G_\calA$ is acyclic (Proposition \ref{obs:cyclenotPO}). Also, the fact that $w \in \Delta_{n-1}$ implies that there exists an agent $j_1 \in \supp(w)$ with weight $w_{j_1} \geq 1/n$.  

Using the above-mentioned observations about the envy-graph $G_\calA$, we obtain that there exists a directed path \(P = j_1 \rightarrow  j_2  \rightarrow \ldots \rightarrow j_p \rightarrow j_{p+1}\) in the graph $G_\calA$ with \(j_1, \ldots, j_p \in \supp(w)\) and \(j_{p+1} \in [n]\setminus \supp(w)\). For ease of notation, we reindex these agents as $1, 2, \ldots, p \in \supp(w)$ and $(p+1) \in [n] \setminus \supp(w)$. That is, we reindex such that the path $P = 1 \rightarrow 2 \rightarrow \ldots \rightarrow p \rightarrow (p+1)$ and we have $w_1 \geq 1/n$. Also, given that the path $P$ exists in the envy-graph $G_\calA$, for each $i \in \{1, 2, \ldots, p\}$, agent $i$ envies agent $(i+1)$, i.e., $v_i(A_i) < v_i(A_{i+1})$. 

We reassign bundles in $\calA$ along the path $P$ to obtain another allocation $\calB$. In particular, for all $1 \leq i \leq p$, set $B_i = A_{i+1}$, along with $B_{p+1} = A_1$. For all the remaining agents the assigned bundles remain unchanged. We will next show that $\overline{\SW}_\eta(\calB, w) > \overline{\SW}_\eta(\A, w)$; this contradicts the welfare-optimality of $\calA$ (i.e., contradicts \(\A\in \overline{\Opt}_\eta(w)\)). Hence, via contradiction, we will obtain that there exists an agent $\widehat{j} \in \supp(w) \subseteq J$ that is envy-free under \(\calA\). 

The difference in weighted social welfare of the allocations $\calB$ and $\calA$ satisfies 
\begin{align}
\overline{\SW}_\eta(\calB, w) - \overline{\SW}_\eta(\A, w) & = \sum_{i=1}^n (w_i + \eta) \overline{v}_i (B_i) \  - \ \sum_{i=1}^n (w_i + \eta) \overline{v}_i(A_i) \notag \\
& = \sum_{i=1}^{p+1} (w_i + \eta) \big( \overline{v}_i (B_i) - \overline{v}_i (A_i) \big) \notag \\
& = \sum_{i=1}^p (w_i + \eta) \big( \overline{v}_i (A_{i+1}) - \overline{v}_i (A_i) \big) \ +  \left( w_{p+1} + \eta \right) \left( \overline{v}_{p+1} (A_1) - \overline{v}_{p+1} (A_{p+1}) \right) \label{eq:calB}
\end{align}
The last equality follows from the construction of $\calB$. Also, note that $w_{p+1} = 0$, since $(p+1) \notin \supp(w)$. Hence, equation (\ref{eq:calB}) reduces to 
\begin{align}
\overline{\SW}_\eta(\calB, w) - \overline{\SW}_\eta(\A, w) & = \sum_{i=1}^p (w_i + \eta) \big( \overline{v}_i (A_{i+1}) - \overline{v}_i (A_i) \big) \ +  \ \eta \big( \overline{v}_{p+1} (A_1) - \overline{v}_{p+1} (A_{p+1}) \big) \label{eq:calBtwo}
\end{align}
As noted previously, for each $i \in \{1, 2, \ldots, p\}$, agent $i$ envies agent $(i+1)$. Hence, for each $i \in \{1, \ldots, p\}$, the inequality $v_i(A_{i+1}) > v_i(A_{i})$ and Lemma \ref{lemma:envy-preserved} give us  
$\overline{v}_i (A_{i+1}) > \overline{v}_i (A_i)$. Using these inequalities for $i \in \{2, \ldots, p\}$, we simplify equation (\ref{eq:calBtwo}) as follows 
\begin{align}
\overline{\SW}_\eta(\calB, w) - \overline{\SW}_\eta(\A, w) & \geq \ (w_1 + \eta) \big( \overline{v}_1 (A_{2}) - \overline{v}_1 (A_1) \big) \ + \  \eta \big( \overline{v}_{p+1} (A_1) - \overline{v}_{p+1} (A_{p+1}) \big) \notag \\
& \geq  (w_1 + \eta)  \frac{\lambda}{2} \ + \  \eta \big( \overline{v}_{p+1} (A_1) - \overline{v}_{p+1} (A_{p+1}) \big)  \tag{Lemma \ref{lemma:envy-preserved}} \\
& \geq \left( \frac{1}{n} + \eta \right) \frac{\lambda}{2}  \ + \  \eta \big( \overline{v}_{p+1} (A_1) - \overline{v}_{p+1} (A_{p+1}) \big) \tag{since $w_1 \geq 1/n$} \\
& = \frac{\lambda}{2n} \ + \ \eta \left(  \overline{v}_{p+1} (A_1) - \overline{v}_{p+1} (A_{p+1}) + \frac{\lambda}{2} \right) 
 \label{eq:calB3}
\end{align} 
Further, note that 
\begin{align}
\overline{v}_{p+1} (A_1) - \overline{v}_{p+1} (A_{p+1}) & \geq v_{p+1} (A_1) - \varepsilon |A_1| \ - \ v_{p+1}(A_{p+1}) \notag \\
& \geq v_{p+1} (A_1) - \ v_{p+1}(A_{p+1})  \ - \varepsilon m \notag \\ 
& \geq - \Lambda - \varepsilon m \tag{by definition of $\Lambda$} \notag \\
& > - \Lambda - \frac{\lambda}{2} \label{ineq:strict}
\end{align}
The last (strict) inequality follows from the bound $\varepsilon m < \left( \frac{\eta}{4nm} \lambda \right) m \leq \left( \frac{1}{8n^2 m} \lambda \right) m < \frac{\lambda}{2}$; recall that $\eta \leq \frac{1}{2n}$. 

Equations (\ref{eq:calB3}) and (\ref{ineq:strict}) lead to 
\begin{align}
\overline{\SW}_\eta(\calB, w) - \overline{\SW}_\eta(\A, w) & > \frac{\lambda}{2n} - \eta \ \Lambda \notag \\
& \geq \frac{\lambda}{2n} - \left( \frac{\lambda}{2 \Lambda n} \right)  \Lambda \tag{since $\eta \leq \frac{\lambda}{2 \Lambda n}$} \\
& = 0 \label{ineq:swmore} 
\end{align}
Equation (\ref{ineq:swmore}) contradicts the welfare-optimality of $\calA$. Hence, via contradiction, we will obtain that there exists an agent $\widehat{j} \in \supp(w) \subseteq J$ that is envy-free under \(\calA\). This completes the proof of the claim.
\end{proof}
As mentioned previously, Claim \ref{claim:envy-free-agent} implies that $w \in \C_{\widehat{j}}$ for agent $\widehat{j} \in \supp(w) \subseteq J$. Hence, we have the containment $w \in \cup_{j \in J} \ \C_j$ and  obtain KKM covering in the second case as well. 

The lemma stands proved. 
\end{proof}

Using the above-mentioned constructs and lemmas, we now establish \Cref{thm:EFRPO}.

\subsection{Proof of Theorem \ref{thm:EFRPO}}
\label{section:proof-efrpo}
As mentioned previously, for the given fair division instance $\langle [n], [m],  \{ v_i \}_{i \in [n]} \rangle$, we first construct the perturbed valuations $\{ \overline{v}_i \}_i$. \Cref{lem:non-degenerate} ensures that, with probability one, the perturbed valuations are non-degenerate. That is, there exist perturbations, $\varepsilon_{i,t}$s, for which $\overline{v}_i$s are non-degenerate. We will establish the theorem considering such non-degenerate $\overline{v}_i$s.  

Lemma~\ref{lem:closed} gives us that, for each agent $i \in [n]$, the set $\C_i$ (see equation (\ref{eqn:def-Ci})) is closed. Further, these sets uphold the KKM covering condition (Lemma~\ref{lem:covering}). Hence, applying the KKM theorem (Theorem~\ref{thm:kkm}), we obtain that there exists a vector \(w^*\in \bigcap_{i=1}^n \C_i\). 

The definition of the sets $\C_i$s imply that, under \(w^*\in \bigcap_{i=1}^n \C_i\), for each agent $i\in [n]$ there exists an allocation $\A^i  \in \overline{\Opt}_\eta(w^*)$ that is envy-free for $i$. Note that all these allocations $\A^i \in \overline{\Opt}_\eta(w^*)$ are Pareto optimal with respect to the underlying valuations (\Cref{lem:POwrtInput}). 
	
We set any one of these agent-specific envy-free allocations as the desired $\mathrm{EFR}$ allocation, say $\calA = \calA^1$; indeed, $\calA$ is $\PO$. 

We will next show that there exists a set of items $R(w^*) \subseteq [m]$, of size at most $(n-1)$, such that $\bigcup\limits_{\ell=1}^n  \left( A^i_\ell \triangle A^j_\ell \right) \subseteq R(w^*)$, for all $i, j \in [n]$. The existence of subset $R(w^*)$--with cardinality at most $(n-1)$---certifies that, as desired, the allocation $\calA$ is $\EFR{(n-1)}$.     

Now, for each item $t \in [m]$ write $D(t, w^*)$ to denote the set of agents $i$ for whom $(w^*_i+\eta)\ \overline{v}_i(t)$ is maximum among all the agents, $D(t, w^*) \coloneqq \left\{ i \in [n] \ : \ (w_i^* +\eta) \overline{v}_i(t) = \max\limits_{j\in [n]} \ (w_j^*+ \eta) \overline{v}_j(t) \right\}$. That is, $D(t, w^*) = \argmax_{j \in [n]} \ (w_j^* +\eta) \overline{v}_j(t)$. 

Recall that the valuations $\{\overline{v}_i \}_i$ are additive. Hence, in every welfare-maximizing allocation---i.e., in every allocation in $\overline{\Opt}_\eta(w^*)$---each item $t \in [m]$ must be assigned to an agent in $D(t, w^*)$. This property holds for every $\calA^i \in \overline{\Opt}_\eta(w^*)$. Further, if for an item $s \in [m]$ we have $|D(s, w^*)| = 1$, then in every welfare-maximizing allocation, $s$ is assigned to the unique agent $\ell \in D(s, w^*)$. In particular, for an item $s \in [m]$ with $|D(s, w^*)| = 1$, it holds that $s$ is assigned to the same agent across all the allocations $\{ \calA^i \}_{i \in [n]}$. We define $R(w^*) \subseteq [m]$ to be the set of complementary items, 
\begin{align*}
R(w^*) \coloneqq \left\{ t \in [m] \ : \ |D(t, w^*)|\geq 2  \right\}.
\end{align*}
Since each item $s \notin R(w^*)$ is assigned to the same agent in any pair of allocation $\calA^i, \calA^j \in \overline{\Opt}_\eta(w^*)$, we have the desired containment $\bigcup\limits_{\ell=1}^n  \left( A^i_\ell \triangle A^j_\ell \right) \subseteq R(w^*)$. 

We will now complete the proof by proving that $|R(w^*)| \leq n-1$. Towards this, consider a bipartite graph $\calH = ([n] \cup R(w^*), E)$; here, the left part consists of all the $n$ agents and the right part are the items in $R(w^*)$. We include edge $(i, t) \in [n] \times R(w^*)$ in the edge set, $E$, iff $i \in D(t, w^*)$. Since $|D(t,w^*)|\geq 2$ for each item $t \in R(w^*)$, the degree of each right-hand-side node in $\calH$ is at least $2$. The following claim shows that the graph $\calH$  is acyclic -- the claim utilizes the non-degeneracy of $\overline{v}_i$s.
	
\begin{claim} \label{claim:acyclic}
The bipartite graph $\calH = ([n] \cup R(w^*), E)$ is acyclic.
\end{claim}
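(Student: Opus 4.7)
The plan is to prove the claim by contradiction, leveraging the non-degeneracy of the perturbed valuations (Lemma \ref{lem:non-degenerate}). I will suppose $\calH$ contains a cycle. Since $\calH$ is bipartite, any such cycle alternates between agents and items and has the form $(i_1, t_1, i_2, t_2, \ldots, i_k, t_k, i_1)$, with distinct agents $i_1, \ldots, i_k \in [n]$ and distinct items $t_1, \ldots, t_k \in R(w^*)$. This is precisely the shape of cycle in $K_{n,m}$ to which condition (ii) of Definition \ref{defn:non-deg} applies.

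Next, I will unpack what each edge of the cycle encodes. For every $\ell \in \{1,\ldots,k\}$, both $(i_\ell, t_\ell)$ and $(i_{\ell+1}, t_\ell)$ are edges of $\calH$ (indices modulo $k$), so by the definition of $\calH$ both $i_\ell$ and $i_{\ell+1}$ belong to $D(t_\ell, w^*)$. Hence
\[
(w^*_{i_\ell} + \eta)\, \overline{v}_{i_\ell}(t_\ell) \;=\; (w^*_{i_{\ell+1}} + \eta)\, \overline{v}_{i_{\ell+1}}(t_\ell).
\]
Because $\eta>0$ ensures $w^*_j + \eta > 0$ for every agent $j$, and non-degeneracy condition (i) gives $\overline{v}_i(t) \ne 0$ for all $i,t$, I can rearrange this identity as
\[
\frac{\overline{v}_{i_{\ell+1}}(t_\ell)}{\overline{v}_{i_\ell}(t_\ell)} \;=\; \frac{w^*_{i_\ell} + \eta}{w^*_{i_{\ell+1}} + \eta}.
\]

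The key step is to multiply these relations over $\ell = 1, \ldots, k$. With the cyclic convention $i_{k+1} = i_1$, the right-hand side telescopes to $1$, and I obtain
\[
\prod_{\ell=1}^{k} \frac{\overline{v}_{i_{\ell+1}}(t_\ell)}{\overline{v}_{i_\ell}(t_\ell)} \;=\; 1.
\]
This directly contradicts condition (ii) of Definition \ref{defn:non-deg}, which Lemma \ref{lem:non-degenerate} guarantees for the perturbed values $\overline{v}_i$. Hence $\calH$ is acyclic. I do not anticipate any real obstacle: the substantive work was already invested in choosing the perturbations so that non-degeneracy holds and in defining $R(w^*)$ so that its edges are precisely ties in weighted valuation; this claim is the clean algebraic payoff of those choices.
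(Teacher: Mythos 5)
Your proof is correct and follows essentially the same route as the paper's: extract the tie condition $(w^*_{i_\ell}+\eta)\,\overline{v}_{i_\ell}(t_\ell) = (w^*_{i_{\ell+1}}+\eta)\,\overline{v}_{i_{\ell+1}}(t_\ell)$ from each pair of cycle edges, rearrange using $\eta>0$ and condition (i) of non-degeneracy, and telescope the product around the cycle to contradict condition (ii). No gaps.
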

\begin{proof}
Assume, towards a contradiction, that $\calH$ contains a cycle $C$. We reindex the agents and the items such that agents $1,2, \ldots, k$ and items $t_1, \ldots t_k \in R(w^*)$ constitute the cycle, i.e., $C = (1, t_1, 2, t_2, \ldots k, t_{k}, 1)$, for $2 \leq k \leq n$. Note that, for each $\ell \in [k-1]$, both the edges $(\ell, t_\ell)$ and $(\ell+1, t_\ell)$ are in $\calH$. Hence, agents $\ell, \ell+1 \in D(t_\ell, w^*)$, for each $\ell \in [k-1]$. Similarly, agents $1, k \in D(t_k, w^*)$. 

The definition of $D(t, w^*)$ gives us $\frac{\overline{v}_{\ell+1}(t_\ell)}{\overline{v}_\ell(t_\ell)} =\frac{w^*_\ell+\eta}{w^*_{\ell+1}+ \eta}$, for all $1 \leq \ell \leq k-1$, and $\frac{\overline{v}_{1}(t_k)}{\overline{v}_k(t_k)} =\frac{w^*_k+\eta}{w^*_{1}+ \eta}$. Note that these equalities are well-defined, since $\eta>0$ and $\overline{v}_i(t)$s are non-degenerate; in particular, $\overline{v}_i(t)\neq 0$ for all $i\in [n]$ and $t\in [m]$ (see Definition \ref{defn:non-deg}). Multiplying these equalities, and considering indices modulo $k$ (i.e., $\overline{v}_{k+1} (\cdot) = \overline{v}_1 (\cdot)$ and $w^*_{k+1} = w^*_1$), we obtain 
\begin{align} 
 \prod_{\ell\in [k]} \frac{\overline{v}_{\ell+1}(t_\ell)}{\overline{v}_\ell(t_\ell)} 	 & = \prod_{\ell\in [k]}  \frac{w^*_\ell+\eta}{w^*_{\ell+1}+ \eta}  = 1 \label{eq:prod-deg}
\end{align}

Equation (\ref{eq:prod-deg}), however, contradicts the non-degeneracy of $\overline{v}_i$s (see condition (ii) in Definition \ref{defn:non-deg}). Therefore, by way of contradiction, we obtain that $\calH$ is acyclic, and the claim stands proved. 
\end{proof}

Claim \ref{claim:acyclic} leads to the desired bound on the size of $R(w^*)$: Since $\calH = ([n] \cup R(w^*), E)$ is acyclic, the total number of edges in the graph $|E|  \leq n + |R(w^*)| -1$. Also, recall that each node in the right-hand-side of $\calH$ (i.e., each node in $R(w^*)$) has degree at least $2$. Hence, $|E| \geq 2 |R(w^*)|$. These inequalities give us $2|R(w^*)| \leq n + |R(w^*)| -1$. Therefore, the desired bound follows:  $|R(w^*)| \leq n-1$.

Overall, we have a $\PO$ allocation $\calA$ and a set of items $R(w^*)$---of size at most $(n-1)$---such that, for each agent $i\in [n]$, we can reassign items from within $R(w^*)$ in $\A$ and obtain an allocation $\calA^i$ which is envy-free for $i$. That is, $\calA$ is $\EFR{(n-1)}$ and $\PO$. The theorem stands proved. 
\section{Finding $\EFR{(n-1)}$ and $\PO$ Allocations for Fixed Number of Agents}
\label{section:fixed-agents}

This section establishes that, for any given fair division instance $\langle [n], [m],  \{ v_i \}_{i \in [n]} \rangle $ with mixed manna and additive valuations, an $\EFR{(n-1)}$ and $\PO$ allocation can be computed in 
$m^{\mathrm{poly}(n)}$ time. Hence, for a fixed number of agents, one can compute a fair and efficient allocation efficiently. In particular, we will build upon the existential guarantee obtained in \Cref{thm:EFRPO} to establish the following computational result. 

\begin{restatable}{theorem}{constantagents}
\label{theorem:fixed-agents}
For any given fair division instance $\langle [n], [m], \{v_i\}_{i=1}^n \rangle$ among a fixed number of agents that have additive valuations over mixed manna, we can compute an $\EFR{(n-1)}$ and $\PO$ allocation in polynomial time. 
\end{restatable}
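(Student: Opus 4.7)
The plan is to convert the existence proof of \Cref{thm:EFRPO} into a finite enumeration algorithm whose running time is $m^{\mathrm{poly}(n)}$. The driving observation is that, once the valuations have been perturbed to be non-degenerate in the sense of \Cref{defn:non-deg}, the simplex $\Delta_{n-1}$ is stratified by the hyperplane arrangement formed by the $\binom{n}{2}m$ hyperplanes $H_{i,j,t}=\{w\in\Delta_{n-1}:(w_i+\eta)\overline{v}_i(t)=(w_j+\eta)\overline{v}_j(t)\}$, and on the relative interior of every face $\mathcal{F}$ of this arrangement the tie pattern $\{D(t,w)\}_{t\in[m]}$, and therefore the set $\overline{\Opt}_\eta(w)$ of weighted welfare maximizers, is constant. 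Since the ambient dimension is $n-1$ and the number of hyperplanes is $O(n^2 m)$, the arrangement has $O((n^2 m)^{n-1})=m^{\mathrm{poly}(n)}$ faces in total.

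First, I would deterministically construct the perturbed valuations $\overline{v}_i(t)$ of \Cref{thm:EFRPO}. The probabilistic argument in \Cref{lem:non-degenerate} derandomizes in a standard way: parameterize the perturbations as $\varepsilon_{i,t}=\varepsilon\cdot r^{\,i+nt}$ for a rational $r\in(0,1)$, in which case both conditions (i) and (ii) of \Cref{defn:non-deg} become polynomial non-vanishing conditions in $r$ of total degree polynomial in $n$ and $m$. A single polynomial-size grid search on $r$ produces a value that avoids every vanishing locus (there are only polynomially many such loci, each of polynomial degree), and $\varepsilon$ is then chosen small enough to meet the bounds $\varepsilon<\tfrac{\eta}{4nm}\min\{\lambda,\omega,1\}$ used in \Cref{thm:EFRPO} while keeping polynomial bit complexity.

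Next, I would enumerate all faces of the arrangement, which is a standard computational geometry routine when $n$ is fixed: for every subset of at most $n-1$ of the hyperplanes (the dimension bounds the maximum size of an affinely independent tight system), obtain a representative point $w_{\mathcal{F}}$ in the relative interior of the associated face by solving a linear feasibility program. For each representative I compute $D(t,w_{\mathcal{F}})=\argmax_j (w_{\mathcal{F},j}+\eta)\overline{v}_j(t)$ for every item $t$, form the reassignment set $R(w_{\mathcal{F}})=\{t:|D(t,w_{\mathcal{F}})|\geq 2\}$, and build the bipartite graph $\calH$ of \Cref{thm:EFRPO}. A key point here is that \Cref{claim:acyclic} only invokes non-degeneracy and hence applies at \emph{every} $w\in\Delta_{n-1}$, not only at the KKM fixed point $w^*$; therefore $|R(w_{\mathcal{F}})|\leq n-1$ uniformly, and the number of welfare-maximizing allocations is bounded by $|\overline{\Opt}_\eta(w_{\mathcal{F}})|\leq \prod_{t\in R(w_{\mathcal{F}})}|D(t,w_{\mathcal{F}})|\leq n^{\,n-1}$, a constant when $n$ is fixed.

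Finally, for each face $\mathcal{F}$ I would explicitly enumerate the (constantly many) allocations in $\overline{\Opt}_\eta(w_{\mathcal{F}})$, each of which is $\PO$ with respect to the input valuations by \Cref{lem:POwrtInput}, and test whether for every agent $i\in[n]$ at least one of them is envy-free for $i$ under the original $v_i$. If the test passes, I return the common base allocation on $[m]\setminus R(w_{\mathcal{F}})$ together with the per-agent envy-free reassignments on $R(w_{\mathcal{F}})$; this yields an $\EFR{(n-1)}$ and $\PO$ allocation. Applying \Cref{thm:EFRPO} to the deterministically perturbed instance guarantees that some face passes the test, so the algorithm always succeeds, and the total running time is $m^{O(n)}\cdot n^{O(n)}\cdot\mathrm{poly}(m,n)=m^{\mathrm{poly}(n)}$. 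The main obstacle is the deterministic construction of a polynomially bounded non-degenerate perturbation satisfying all the inequalities required by \Cref{thm:EFRPO}; once that is in place, the face enumeration and per-face verification are routine.
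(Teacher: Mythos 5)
Your algorithm takes a genuinely different route from the paper's. You enumerate the $m^{O(n)}$ faces of the hyperplane arrangement $\{(w_i+\eta)\overline{v}_i(t)=(w_j+\eta)\overline{v}_j(t)\}_{i\ne j,\,t\in[m]}$ in $\Delta_{n-1}$, using the facts that the tie pattern $D(\cdot,w)$, and hence $\overline{\Opt}_\eta(w)$, is constant on the relative interior of each face, that \Cref{claim:acyclic} applies at every $w$ (so $|R(w)|\le n-1$ and $|\overline{\Opt}_\eta(w)|\le n^{n-1}$ uniformly), and that completeness follows because the KKM point $w^*$ lies in some enumerated face. The paper instead guesses the $O(n^2)$ separating items $\{g^*_{ij},c^*_{ij}\}$ together with $R^*$ and the sets $D(t,w^*)$, reconstructs the bundles $I^*_i=\bigcap_{j\ne i}F_{ij}$ via \Cref{cor:intersection}, and certifies $\PO$ by testing feasibility of a linear program in the weights $w$; both yield $m^{\mathrm{poly}(n)}$ time. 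Your face enumeration needs somewhat more care than ``one face per subset of at most $n-1$ hyperplanes'' (the complementary hyperplanes can cut the corresponding flat into many faces), but for fixed dimension this is standard and not a real obstacle.

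The genuine gap is in the step you yourself identify as the crux: the deterministic construction of non-degenerate perturbations. The scheme $\varepsilon_{i,t}=\varepsilon\cdot r^{\,i+nt}$ does not work. For any cycle $C=(i_1,t_1,\ldots,i_k,t_k,i_1)$ in $K_{n,m}$ we have $\sum_{\ell}(i_{\ell+1}+nt_\ell)=\sum_{\ell}(i_\ell+nt_\ell)$, since $(i_\ell)_\ell$ is merely cyclically shifted; hence the top-degree terms of the two products appearing in condition (ii) of \Cref{defn:non-deg} cancel identically, and non-vanishing in $r$ is not automatic but depends on the lower-order coefficients, i.e., on the input values. Concretely, if agents $1$ and $2$ both value items $t_1$ and $t_2$ at $0$ (a perfectly legal sub-configuration of a mixed-manna instance), then $\overline{v}_i(t)=-\varepsilon r^{i+nt}$ on these four edges and the cycle product telescopes to $r^{(i_2-i_1)}\cdot r^{(i_1-i_2)}=1$ for every $r$, so condition (ii) fails for all $r$ and your grid search finds nothing. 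Any separable exponent of the form $f(i)+g(t)$ has the same defect. Since \Cref{claim:acyclic}---and with it the bound $|R(w)|\le n-1$ on which your whole guarantee rests---relies on non-degeneracy, this is load-bearing. The fix is the paper's: set the $\varepsilon_{i,t}$ greedily one at a time, observing that each cycle in which exactly one perturbation remains unset forbids exactly one rational value for it; as there are only $O(n^nm^n)$ cycles (polynomially many for fixed $n$), an admissible value of polynomial bit complexity always exists and is found directly. With that substitution, the rest of your argument goes through.
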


First, note that in the input instance $\langle [n], [m], \{v_i\}_{i=1}^n \rangle$ the given valuations have bounded bit complexity. Hence, by scaling and without loss of generality, we can assume that the input values, $v_i(t)$s, are integers; this assumption will be utilized throughout this section. 

We will begin by efficiently finding perturbations $\varepsilon_{i,t}$s that conform to the bounds specified in Section \ref{sec:efr_and_po} and lead to perturbed values, $\overline{v}_i(t) = v_i(t) - \varepsilon_{i,t}$, that are non-degenerate (Definition \ref{defn:non-deg}).  \\

\noindent
{\bf Efficiently constructing non-degenerate valuations.} 
For the given integer values $v_i(t)$s, relevant bounds on parameters defined in equations (\ref{eq:lambda}), (\ref{eq:Lambda}), and (\ref{eq:omega}) can be efficiently obtained as follows. 
For each agent $i \in [n]$, write  $G_i \coloneqq \{t \in[m]  : v_i(t) \ge 0\}$ to denote the set of all goods (i.e., the set of all non-negatively valued items) for agent $i$. Conforming to equation (\ref{eq:Lambda}), here it holds that $\Lambda = \max_{i \in [n]} \left(v_i(G_i) - v_i([m] \setminus G_i) \right)$. Further, given that the input values are integers, we have $\lambda \geq 1$ and $\omega \geq 1$. With these parameters in hand, we select $\eta$ and $\varepsilon$ such that they have polynomial bit complexity and satisfy: $0 < \eta \leq \frac{1}{2 \Lambda n}$ and $0< \varepsilon < \frac{\eta }{4nm}$.   

We set the perturbations \(\varepsilon_{i,t}\), one at a time, in the order $i \in \{1,\ldots, n\}$ and $t \in \{1, \ldots, m\}$. For each $i$ and $t$, we consider all the cycles $C$ in the complete bipartite graph $K_{n,m}$ with the property that the edge \((i,t)\) is the only one in \(C\) whose perturbation \(\varepsilon_{i,t}\) has not  been set so far. The non-degeneracy condition (see Definition \ref{defn:non-deg}) for such a cycle is
\begin{align*}
\prod \limits_{\ell=1}^k \left( \frac{ {v}_{i_{\ell + 1}}(t_\ell) - \varepsilon_{i_{\ell+1},t_\ell} }{ {v}_{i_{\ell}}(t_\ell) -  \varepsilon_{i_\ell,t_\ell} } \right) \neq 1.
\end{align*}
Since all perturbations, except \(\varepsilon_{i,t}\), participating in this inequality have been fixed already, the inequality rules out a single, rational value for \(\varepsilon_{i,t}\). That is, for each considered cycle $C$, the non-degeneracy condition requires that \(\varepsilon_{i,t}\) is not equal to a specific value. Since the number of agents $n$ is fixed, there are a fixed number of (in particular, $O(m^n n^n)$) cycles in $K_{n,m}$. Hence, we can assign \(\varepsilon_{i,t}\) any value (with polynomial bit complexity) in the range $(0, \varepsilon)$, besides the ones that are ruled out. Repeating this for process for  each \((i,t)\) gives us perturbations under which the values $\overline{v}_i(t) = v_i(t) - \varepsilon_{i,t}$ are non-degenerate. Our algorithm works with such non-degenerate $\overline{v}_i$s. 

\medskip
  
From the proof of \Cref{thm:EFRPO}, we have that there exists an $\EFR{(n-1)}$ and $\PO$ allocation $\calA^*$ and a weight vector $w^* \in \Delta_{n-1}$ with the property that $\calA^*$ is weighted welfare maximizing under these weights, $\calA^* \in \overline{\Opt}_\eta(w^*)$. Further, there is a subset $R^*\subseteq [m]$---of size at most $(n-1)$---such that, for each agent, we can achieve envy-freeness by reassigning items from within $R^*$. 

Next, we will show that there exists a set of $O(n^2)$ items whose assignment in $\calA^*$ can be bootstrapped to identify the allocation of all the other items in this  $\EFR{(n-1)}$ and $\PO$ allocation. We will refer to these $O(n^2)$ items as {\it separating items}. At a high level, our algorithm exhaustively searches for these separating items by enumerating over all possible size-$O(n^2)$ subsets of items and their assignment among the agents. For each candidate subset and assignment, the algorithm then uses an (efficient) bootstrapping method to allocate all the remaining items and, hence, finds a tentative allocation $\calA$. Finally, the algorithm tests if $\calA$ is $\EFR{(n-1)}$ and $\PO$, or not. In the proof of the theorem, we will show that these tests can be performed in polynomial time (for a fixed number of agents) and the time complexity of the algorithm is $m^{\mathrm{poly}(n)}$.

Recall that, for each item $t$, the set of agents
$D(t, w^*) \coloneqq \argmax_{j \in [n]} \ (w_j^* + \eta)\overline{v}_j(t)$. Write $I^*_i \coloneqq \{ t \in A^*_i  : |D(t, w^*)|=1 \}$, i.e., $I_i^*$ is the set of items in $A^*_i$ with the property that $|D(t, w^*)|=1$. The construction of $R^*$ in the proof of \Cref{thm:EFRPO} gives us $I_i^*:= A^*_i \setminus R^*$.

For each pair of distinct agents $i, j \in [n]$, with $I^*_i \neq \emptyset$, we define \emph{separating items} $g^*_{ij}, c^*_{ij} \in I^*_i$ and (forbidden) set $F_{ij} \subseteq [m]$. Lemma \ref{cor:intersection} (stated below) asserts that, for each agent $i$, the set $I^*_i$ is equal to the intersection of $F_{ij}$s, with $j \neq i$. This result is the basis of the bootstrapping idea mentioned above: for any agent $i \in [n]$, given the separating items $\{g^*_{ij}, c^*_{ij} \}_{j \neq i}$, we can first efficiently identify the sets $F_{ij} \subseteq [m]$, for all $j \neq i$, and then take their intersection to find $I^*_i$.

For each $i \neq j$, write: 
\begin{itemize}
    \item $O^{+}_{ij}$ to denote items that are goods for both agents $i$ and $j$.
    \item $O^{-}_{ij}$ to denote items that are chores for both agents $i$ and $j$. 
    \item $Q_{ij}$ to denote items that are goods for $i$ and chores for $j$.
\end{itemize}
In addition, for $i \neq j$, if $I^* \cap O^+_{ij} \neq \emptyset$,  we define
 \begin{align}
g^*_{ij}  \in \ \argmax_{g \in I^*_i \cap O^{+}_{ij}} \frac{\overline{v}_j(g)}{\overline{v}_i(g)}  \qquad & \text{ and } \qquad G_{ij} \coloneqq \left\{ t\in O^+_{ij} : \frac{\overline{v}_j(t)}{\overline{v}_i(t)}\leq \frac{\overline{v}_j(g^*_{ij})}{\overline{v}_i(g^*_{ij})}  \right\} \label{eq:defnGij}
\end{align}
 and otherwise, set $G_{ij}=\emptyset$. Similarly, if $I^* \cap O^-_{ij} \neq \emptyset$, write 
 \begin{align}
c^*_{ij}  \in \ \argmax_{c \in I^*_i \cap O^{-}_{ij}} \frac{|\overline{v}_i(c)|}{|\overline{v}_j(c)|} \qquad & \text{ and } \qquad B_{ij} \coloneqq \left\{ t \in O^{-}_{ij} :   \frac{|\overline{v}_i(t)|}{|\overline{v}_j(t)|} \leq \frac{|\overline{v}_i(c^*_{ij})|}{|\overline{v}_j(c^*_{ij})|} \right\} \label{eq:defnBij}
\end{align}
 and otherwise, set $B_{ij}=\emptyset$.  
We now define, for each $i \neq j$, (forbidden) set $F_{ij}\subseteq [m]$ as follows  
\begin{align}
    F_{ij} \coloneqq Q_{ij} \cup G_{ij} \cup B_{ij} \label{eq:defnFij_mix}
\end{align}
Note that the sets $Q_{ij}$, $G_{ij}$, and $B_{ij}$ are pairwise disjoint. 

\begin{restatable}{lemma}{cor-intersection-mixed}
\label{cor:intersection}
For each agent $i \in [n]$, we have $I_i^* = \bigcap_{j \neq i} F_{ij}$.
\end{restatable}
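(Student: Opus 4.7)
The plan is to prove the equality $I_i^* = \bigcap_{j \neq i} F_{ij}$ by establishing both inclusions. The key organizing observation is that $I_i^*$ consists of items $t \in A_i^*$ for which $i$ is the \emph{unique} maximizer of $(w_k^* + \eta)\overline{v}_k(t)$ over $k \in [n]$, while the separating items $g_{ij}^*, c_{ij}^*$ themselves lie in $I_i^*$ and therefore certify strict threshold bounds on the weight ratio $\frac{w_i^*+\eta}{w_j^*+\eta}$.

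For the forward inclusion $I_i^* \subseteq \bigcap_{j \neq i} F_{ij}$, I fix $t \in I_i^*$ and $j \neq i$, and verify $t \in F_{ij}$ by case analysis on the signs of $\overline{v}_i(t)$ and $\overline{v}_j(t)$ (which are nonzero by the non-degeneracy Definition~\ref{defn:non-deg}). If $t$ is a good for $i$ and a chore for $j$, then $t \in Q_{ij}$ by definition. If $t$ is a good for both, then $t \in I_i^* \cap O^+_{ij}$ is nonempty, so $g_{ij}^*$ is well-defined, and the argmax property of $g_{ij}^*$ immediately yields $\frac{\overline{v}_j(t)}{\overline{v}_i(t)} \leq \frac{\overline{v}_j(g_{ij}^*)}{\overline{v}_i(g_{ij}^*)}$, placing $t \in G_{ij}$. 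The chore-for-both case is entirely symmetric using $c_{ij}^*$ and $B_{ij}$. The remaining sign pattern (chore for $i$, good for $j$) is ruled out because it would give $(w_j^* + \eta)\overline{v}_j(t) > 0 > (w_i^* + \eta)\overline{v}_i(t)$, contradicting $i \in D(t, w^*)$.

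For the reverse inclusion $\bigcap_{j \neq i} F_{ij} \subseteq I_i^*$, I fix $t$ in the intersection and show $D(t,w^*) = \{i\}$. For each $j \neq i$, the sign pattern of $(\overline{v}_i(t), \overline{v}_j(t))$ forces $t$ into exactly one of $Q_{ij}, G_{ij}, B_{ij}$. The key step is to upgrade the weak threshold in the definitions of $G_{ij}$ and $B_{ij}$ into a \emph{strict} weighted-welfare inequality: since $g_{ij}^* \in I_i^*$, the strict welfare-domination $(w_i^*+\eta)\overline{v}_i(g_{ij}^*) > (w_j^*+\eta)\overline{v}_j(g_{ij}^*)$ gives $\frac{\overline{v}_j(g_{ij}^*)}{\overline{v}_i(g_{ij}^*)} < \frac{w_i^*+\eta}{w_j^*+\eta}$, which chains with $\frac{\overline{v}_j(t)}{\overline{v}_i(t)} \leq \frac{\overline{v}_j(g_{ij}^*)}{\overline{v}_i(g_{ij}^*)}$ to yield $(w_i^*+\eta)\overline{v}_i(t) > (w_j^*+\eta)\overline{v}_j(t)$. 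The chore case via $c_{ij}^*$ is analogous, with careful handling of sign flips when multiplying by negative quantities. The $Q_{ij}$ case is immediate from signs. Once $D(t,w^*) = \{i\}$ is established, the additivity of $\overline{v}_i$ together with $\calA^* \in \overline{\Opt}_\eta(w^*)$ forces $t \in A_i^*$, so $t \in I_i^*$.

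The main obstacle I anticipate is the interplay of strict versus weak inequalities in the reverse direction: the membership of $t$ in $G_{ij}$ or $B_{ij}$ only gives a $\leq$ against the threshold, and this must compose with the strict $<$ coming from welfare-optimality at $g_{ij}^*$ (respectively $c_{ij}^*$) to produce the strict weighted-welfare inequality required to certify $|D(t,w^*)|=1$. Sign management in the chore case (where inverting a negative-to-negative ratio flips the direction) is the place where a small slip would derail the argument.
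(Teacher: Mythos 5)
Your proposal is correct and follows essentially the same route as the paper: both directions are proved by sign-based case analysis, with the argmax definitions of $g^*_{ij}$, $c^*_{ij}$ handling the forward inclusion and the strict weighted-welfare domination at the separating items (coming from $g^*_{ij}, c^*_{ij} \in I^*_i$, i.e., $i$ being the unique maximizer there) chaining with the weak threshold inequalities to certify $|D(t,w^*)|=1$ in the reverse inclusion. The sign-management and strict-versus-weak issues you flag are exactly the points the paper's proof also navigates, and your treatment of them is sound.
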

\begin{proof} We establish the statement by proving both directions of the inclusion.  
For the lemma we establish two claims, each establishing one direction of the inclusion, respectively. 

\begin{claim}\label{claim:inclusion} $I_i^* \subseteq \bigcap_{j \neq i} F_{ij}$
\end{claim}
\begin{proof}
Fix any item $s\in I^*_i$, we will show that $s\in \bigcap_{j \neq i} F_{ij}$. We separate the proof into two cases depending on whether $i$ considers $s$ as a good (i.e.,  $\overline{v}_i(s) > 0$) or a chore (i.e., $\overline{v}_i(s) < 0$). Recall that since the valuations $\overline{v}_i$s are non-degenerate, we have $\overline{v}_i (t)\neq 0$ for all $t\in [m]$; thus, these two cases are exhaustive.  \\

\noindent
{\it Case {\rm I}: $\overline{v}_i(s) > 0$.} Here, agent $i$ considers item $s$ as a good.  Let $N'\subseteq [n] \setminus \{ i \}$ denote the set of agents who also value $s$ as a good. For each agent $j\in N'$, we have $s\in I^*_i \cap O^+_{ij}$. The definition of $g^*_{ij}\in I^* \cap O^{+}_{ij}$gives us  
$$
\frac{\overline{v}_j(s)}{\overline{v}_i(s)}\leq \frac{\overline{v}_j(g^*_{ij})}{\overline{v}_i(g^*_{ij})},
$$
and hence $s\in G_{ij}\subseteq F_{ij}$ (see equation (\ref{eq:defnGij})) for each $j \in N'$. Next, consider agents $k\in [n]\setminus ( N' \cup \{ i \})$ -- each such agent $k $considers item  $s$ as a chore. Since agent $i$ values $s$ as a good, we have that $s\in Q_{ik} \subseteq F_{ik}$, for each $k \in [n]\setminus ( N' \cup \{ i \})$. Combining these two observations, we conclude that  for every agent $j\neq i$, we have $s\in F_{ij}$ and, hence, $s\in \bigcap_{j\neq i} F_{ij} $.  \\

\noindent
{\it Case {\rm II}: $\overline{v}_i(s) < 0$.} In this case, given that $s\in I^*_i$, we have that $0> (w^*_i + \eta) \overline{v}_i(s) > (w^*_i + \eta) \overline{v}_j(s)$ for all $j\neq i$. This implies that every other agent also considers $s$ as a chore. 
Hence, for each $j\neq i$, we have $s\in I^*_i \cap O^{-}_{ij} $. By definition of $c^*_{ij} \in I^*_i \cap O^{-}_{ij}$, we have 
$$
\frac{|\overline{v}_i(s)|}{|\overline{v}_j(s)|} \leq \frac{ |\overline{v}_i(c^*_{ij})|}{|\overline{v}_j(c^*_{ij})| }  
$$
Hence, $s\in B_{ij} \subseteq F_{ij}$ for every $j\neq i$, which implies $s \in \bigcap_{j \neq i} F_{ij}$. 

Therefore, in both cases---whether agent $i$ considers  $s$ as a good or a chore---the item  $s \in \bigcap_{j \neq i} F_{ij}$. This establishes the claimed containment. 
\end{proof}

\begin{claim}\label{claim:revInclusion} 
$ \bigcap_{j \neq i} F_{ij} \subseteq I_i^*$
\end{claim}

\begin{proof}
 Fix any item $s\in \bigcap_{j \neq i} F_{ij}$, we will show that $s\in I_i^*$. As in the proof of \Cref{claim:inclusion},  we separate the analysis based on whether $i$ considers $s$ as a good or a chore. \\

 \noindent
{\it Case {\rm I}: $\overline{v}_i(s) > 0$.} Since $s\in \bigcap_{j \neq i} F_{ij}$,  and by definition $Q_{ij}$ and $G_{ij}$ are disjoint, it follows that for every agent $j\neq i$, we have  either $s\in Q_{ij}$ or $s\in G_{ij}$.

If $s\in Q_{ij}$, then agent $j$ considers $s$ as a chore. Further, the fact that agent $i$ values $s$ as a good implies
\begin{align*}
 (w^*_i+\eta) \overline{v}_i (s)> 0> (w^*_j+\eta) \overline{v}_j(s),     
\end{align*}
Therefore, $(w^*_i+\eta) \overline{v}_i (s)> (w^*_j+\eta) \overline{v}_j (s) $, for $s\in Q_{ij}$. 

Otherwise, if $s\in G_{ij} $, then there exists an item $g^*_{ij}\in I^*_i \cap O^+_{ij}$ that, by definition,  satisfies 
\begin{align}
    \frac{\overline{v}_j (s)}{\overline{v}_i (s)} \leq \frac{\overline{v}_j (g^*_{ij})}{\overline{v}_i (g^*_{ij})}.\label{eq:IneqGoods}
\end{align}
Since $g^*_{ij}\in I^*_i$, we know that $ (w^*_i + \eta)\overline{v}_i (g^*_{ij})> (w^*_j + \eta)\overline{v}_j (g^*_{ij})$. Given that both values  $\overline{v}_i (g^*_{ij}), \overline{v}_j (g^*_{ij})$ are positive, we obtain 
\begin{align*}
    \frac{\overline{v}_j (g^*_{ij}) }{\overline{v}_i (g^*_{ij})}< \frac{(w^*_i + \eta)}{(w^*_j + \eta)}.
\end{align*}
Combining with \Cref{eq:IneqGoods}, we get that $(w^*_i+\eta) \overline{v}_i (s)> (w^*_j+\eta) \overline{v}_j (s) $ whenever $s\in G_{ij}$. 

It follows that for any agent $j\neq i$, we have either $s\in Q_{ij}$ or $s\in G_{ij}$, and in both cases $(w^*_i+\eta) \overline{v}_i (s)> (w^*_j+\eta) \overline{v}_j (s)$ is satisfied. Since this strict inequality holds for each $j\neq i$, the welfare maximizing condition for $I^*_i$
gives us $s\in I^*_i$. \\

\noindent
{\it Case {\rm II}: $\overline{v}_i(s) < 0$.} Note that, if there exists an agent $k$ for whom $s$ is a good, then we have that $s\notin F_{ik}$. This non-containment contradicts the assumption that $s\in \bigcap_{j\neq i}F_{ij}$. Hence, for all $j\neq i$, agent $j$ values $s$ as a chore. Given that $s\in F_{ij}$ and both agents $i$ and $j$ value $s$ as a chore, it follows that $s\in B_{ij} $, for each $j\neq i$.

Further, the fact that $s\in B_{ij}$ implies that there  exists an item $c^*_{ij} \in I^*_{i} \cap O^{-}_{ij}$ that satisfies 
\begin{align}\label{eq:chore1}
    \frac{|\overline{v}_i(s)|}{|\overline{v}_j(s)|} \leq \frac{ |\overline{v}_i(c^*_{ij})|}{|\overline{v}_j(c^*_{ij})| }.  
\end{align}
Since $c^*_{ij} \in I^*_{i}$, we have $ (w_i + \eta)\overline{v}_i (c^*_{ij})> (w_j + \eta)\overline{v}_j (c^*_{ij})$. Also, given that both values $\overline{v}_i (c^*_{ij}), \overline{v}_j (c^*_{ij})$ are negative, the last equation reduces to 
\begin{align*}
    \frac{ |\overline{v}_i (c^*_{ij}) |}{|\overline{v}_i (c^*_{ij})|}< \frac{(w_j + \eta)}{(w_i + \eta)}.
\end{align*}
This strict inequality along with \Cref{eq:chore1}, and the fact that both agents $i$ and $j$ have negative value for $s$, give us $(w_i+\eta) \overline{v}_i(s)>(w_j+\eta) \overline{v}_j(s)$ for each $j\neq i$. That is, considering weighted welfare maximization, we obtain the desired containment: $s\in I^*_i$.

Overall, in both Case {\rm I} and {\rm II}, item $s\in I^*_i$, and the desired inclusion stands proved.  
\end{proof}

For each agent $i\in [n]$, Claims \ref{claim:inclusion} and \ref{claim:revInclusion} together imply $I_i^* = \bigcap_{j \neq i} F_{ij}$. The lemma stands proved.
\end{proof}

\subsection{Proof of Theorem \ref{theorem:fixed-agents}}
Recall that $\calA^*$ denotes an $\EFR{(n-1)}$ and $\PO$ allocation that maximizes weighted social welfare under the weight vector $w^* \in \Delta_{n-1}$ and the non-degenerate values $\overline{v}_i$. Further, we have the associated reallocation subset $R^*\subseteq [m]$ of size at most $(n-1)$. 

For the sake of analysis, we note that if we are given the $O(n^2)$ separating items $\{g^*_{ij}, c^*_{ij}\}$ for all $i \neq j$, along with the following sets, then we can use this partial information about $\calA^*$ to in fact  construct this allocation in its entirety: 
\begin{itemize}
\item $R^* \subseteq [m]$ of size at most $(n-1)$. There are $O(m^{n-1})$ many possible choices for $R^*$. 
\item For each item $t \in R^*$, the set $ D(t,w^*) \subseteq [n]$. There are $O(2^{n(n-1)})$ many choices for these sets, for each choice of $R^*$. 
\item For each agent $i \in [n]$, either $I^*_i = \emptyset$, or the separating items $\{g^*_{ij}, c^*_{ij}, \}_{j \neq i} $. There are $O(n^2)$ separating items and, hence, $ m^{O(n^2)}$ many possible choices for them. 
\end{itemize}
Indeed, given  $\{g^*_{ij}, c^*_{ij} \}_{j \neq i} $ we can we first compute the sets $F_{ij}$s (see equations (\ref{eq:defnGij}), (\ref{eq:defnBij}), and (\ref{eq:defnFij_mix})) and then $I^*_i = \cap_{j \neq i} F_{ij}$ (see Lemma \ref{cor:intersection}) for each agent $i \in [n]$.
The remaining items $R^* = [m] \setminus \left( \cup_i I^*_i \right)$ can be assigned following $D(t,w^*)$s for each $t \in R^*$. This gives us the entire allocation $\calA^*$. 

We iterate over all possible choices of the above-mentioned sets -- this overall entails an $O(m^{n^5})$-time exhaustive search. Write $R$, $D(t)$, and $\{g_{ij}, c_{ij} \}_{j \neq i}$ to denote an enumerated choice of the listed sets. 

Using the `guessed' separating items $\{g_{ij}, c_{ij} \}_{j \neq i}$, we first compute the sets $F_{ij}$s (following equation (\ref{eq:defnFij_mix})) and candidate $I_i = \cap_{j \neq i} F_{ij}$ for each agent $i \in [n]$. Then, we check whether the computed $I_i$s, along with the guessed $R$, partition $[m]$, or not. If a partition is indeed induced, then we assign the items $t \in R$ following $D(t)$s to obtain a candidate allocation $\calA$. Finally, we check whether the candidate $\calA$ is $\EFR{(n-1)}$ and $\PO$: \\

\noindent   
{\it Checking for $\EFR{(n-1)}$.} Given that we have candidate $I_i$s and $R$, we enumerate all possible $\mathcal{O}(n^{n-1})$ assignments of $R$ (respecting $D(t)$s) and test whether, for each agent $i \in [n]$, there exists an envy-free allocation in this enumerated list. \\  

\noindent
{\it Checking for $\PO$.} Given \Cref{lem:POwrtInput}, it suffices to find a weight vector $w$ such that the candidate allocation $\calA$ satisfies $\calA \in \overline{\Opt}_\eta(w)$. Towards this, consider the following system of linear inequalities with decision variables $w_1, w_2, \ldots, w_n \geq 0$.
 \begin{align*}
 (w_i+ \eta) \overline{v}_i(t) & \geq  (w_j+ \eta) \overline{v}_j(t)  \qquad \text{for each } t \in I_i \text{ and each } j \neq i. \\
 (w_i+ \eta) \overline{v}_i(t) & =  (w_j + \eta) \overline{v}_j(t) \qquad \text{for each } t \in R \text{ and each }  i, j \in D(t).
 \end{align*}
We can test the feasibility of this linear program in polynomial time and, hence, determine whether there exists a vector $w \in \Delta_{n-1}$ such that $\calA \in \overline{\Opt}_\eta(w)$. 

Since an $\EFR{(n-1)}$ and $\PO$ allocation $\calA^*$ is guaranteed to exist (\Cref{thm:EFRPO}), one of the enumerated choices will yield a linear program for which the underlying $w^*$ is a feasible solution.  
Further, any allocation $\calA$ that clears the above-mentioned tests is guaranteed to be $\EFR{(n-1)}$ and $\PO$. 

Overall, these soundness and completeness arguments ensure that an $\EFR{(n-1)}$ and $\PO$ allocation can be computed in $m^{\mathrm{poly}(n)}$ time. This completes the proof of the theorem. 
\section{Existence and Computation of $\mathrm{EFR}$ Allocations}
\label{sec:existence-efrk}
Having established results for $\mathrm{EFR}$ and $\PO$ in conjunction, in this section we address $\mathrm{EFR}$ by itself. We show that, for indivisible mixed manna with additive valuations, an $\EFR{(n-1)}$ allocation can be efficiently computed (Theorem \ref{thm:ERFexistence}).\footnote{This existence of such an allocation is implied by Theorem \ref{thm:EFRPO}.} When all the items are chores, the $(n-1)$ bound here is tight: there exist chore-division instances that do not admit an $\EFR{(n-2)}$ allocation (Theorem \ref{thm:EFRnon-existence}). 

Notably, we can achieve improved bounds when all the items are goods: For fair division of indivisible goods under additive valuations, an $\EFR{(\lfloor n/2 \rfloor)}$ allocation always exists can can be computed in polynomial time (\Cref{thm:EFRgoods}).

\subsection{$\EFR{(n-1)}$ Allocation of Mixed Manna}

\begin{restatable}{theorem}{ERFexistence}
    \label{thm:ERFexistence} For any fair division instance $\langle [n], [m], \{v_i\}_{i \in [n]} \rangle$ with mixed manna and additive valuations, there always exists an allocation that is both {\(\EFR{(n-1)}\)} and $\EF{1}$. Further, such an allocation can be computed in polynomial time.
\end{restatable}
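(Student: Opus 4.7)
The plan is to compute an $\EF{1}$ allocation of mixed manna via the polynomial-time double round-robin of Aziz et al.~\cite{azizetalFairallocation22}, and then certify that this allocation is simultaneously $\EFR{(n-1)}$ by an explicit construction of the reallocation set $R$. The double round-robin allocates items that are chores for every agent in reverse order $(n, n-1, \ldots, 1)$, and then allocates the remaining items in forward order $(1, 2, \ldots, n)$; a standard phase-wise analysis yields $\EF{1}$, with the pairwise envy of $i$ toward $j$ bounded by a single item---an own chore in $B_i$ (when $j>i$) or the first-pick good in $B_j$ (when $j<i$).

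For each agent $i$, I would specify a reassignment of items within $R$: pull out of $B_j$ (for $j<i$) the first-pick goods that bound $i$'s goods-phase envy and reroute them into $B_i$, and push out of $B_i$ the most-disliked chore of $i$ to a designated recipient (for example, agent $n$'s bundle, since agent $n$ picks first in the chore phase and does not envy anyone there). The single-item envy bounds from the round-robin analysis then imply that, after this reassignment, agent $i$ values its own bundle at least as much as every other bundle, as envy from each of the two phases has been independently neutralized.

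The main obstacle is bounding $|R|$ by $n-1$ rather than the naive $2(n-1)$ obtained by combining the chore-phase pivots $\{t_i : i \in [n-1]\}$ (each agent's own most-disliked chore) and the goods-phase pivots $\{g_j^1 : j \in [n-1]\}$ (the first good each agent picks). Overcoming this requires a structural observation that consolidates the two sets---for instance, showing that within a single execution of the algorithm the pivot roles for the chore and goods phases can be realized by the same set of $n-1$ items, via a careful coupling of the two phases. If a direct consolidation proves elusive, an alternative is to design a modified round-robin that interleaves chore and goods picks so that the total set of boundary items has size $n-1$, while still preserving both the single-item envy bounds needed for $\EF{1}$ and the reassignment guarantees needed for $\EFR{(n-1)}$.
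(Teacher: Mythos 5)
There is a genuine gap: your construction, as you yourself note, yields a reallocation set of size up to $2(n-1)$ (the chore-phase pivots plus the goods-phase pivots), and the ``structural observation that consolidates the two sets'' is exactly the missing idea---you gesture at it but never supply it, and the fallback of ``design a modified round-robin'' is not an argument. The paper closes this gap with two ingredients that your proposal lacks. First, it does not need a pivot for every agent: starting from any $\EF{1}$ allocation, one resolves top-trading envy cycles (Lemma 6 of \cite{BSV21}) to guarantee that some agent $\widehat{i}$ is already exactly envy-free, so only $n-1$ agents require an item in $R$. Second, and more importantly, for each remaining agent $i$ a \emph{single} item suffices to neutralize both $\EF{1}$ cases simultaneously: let $c_i$ be $i$'s lowest-valued chore in $A_i$ and $g_i$ be $i$'s highest-valued good outside $A_i$, and put $t_i = \argmax_{t \in \{c_i, g_i\}}|v_i(t)|$ into $R$. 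If $t_i = c_i$, then $|v_i(c_i)| \geq v_i(g_i) \geq v_i(g)$ for any good $g$ in another bundle, so removing $c_i$ from $A_i$ covers both the ``remove a chore from $B_i$'' and the ``remove a good from $B_j$'' cases of $\EF{1}$ against every $j$; symmetrically if $t_i = g_i$. This is the consolidation you were looking for, and it is agnostic to how the $\EF{1}$ allocation was produced---no coupling of the two round-robin phases is needed.

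A secondary point: your per-agent reassignment routes the first-pick goods of \emph{several} agents $j<i$ into $B_i$, whereas the paper's reassignment for agent $i$ moves exactly one item; the latter is what makes the counting clean and also what yields the stronger ``two symmetric differences on average'' interpretation discussed after Definition~\ref{defn:EFRk}.
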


\begin{proof} Let $\calA$ be an $\EF{1}$ allocation in the given mixed-manna instance. Such a fair allocation is guaranteed to exist and can be computed efficiently \cite{azizetalFairallocation22} and \cite{BSV21}. Furthermore, we can assume that, under $\calA$, one of the agents $\widehat{i} \in [n]$ is envy-free. This follows from the fact that one can resolve the top-trading envy cycles in the envy graph of $\calA$ and, hence, obtain a sink (i.e., an envy-free) node; see Lemma 6 in \cite{BSV21} for details regarding resolution of top-trading envy cycles.  

Hence, $\widehat{i}$ is envy-free under $\calA=(A_1,\ldots, A_n)$ and the allocation is $\EF{1}$ for all the $(n-1)$ remaining agents. For each remaining agent $i \in [n] \setminus \left\{ \widehat{i} \right\}$, we will identify an item $t_i \in [m]$ and show that, with set $R \coloneqq \{ t_i \}_{i \in [n] \setminus \left\{\widehat{i} \right\}}$, allocation $\calA$ upholds the $\EFR{(n-1)}$ criterion.  

In particular, for each $i \in [n] \setminus \left\{ \widehat{i} \right\}$, write $c_i$ to denote the lowest-valued chore in $A_i$ and $g_i$ to denote the highest-valued good in $[m] \setminus A_i$, i.e., 
\begin{align}
    c_i \in \argmin_{t \in A_i: \ v_i(t) < 0} \ v_i(t) \qquad \text{and} \qquad g_i \in \argmax_{t \in [m] \setminus A_i: \ v_i(t) \geq 0} \ v_i(t) 
\end{align}
Write item $t_i \in \argmax_{t \in \{c_i, g_i \}} \ |v_i(t)|$ and set of items $R \coloneqq \{ t_i \}_{i \in [n] \setminus \left\{\widehat{i} \right\}}$. We will complete the proof next by showing that, for each $i \neq \widehat{i}$, reassigning item $t_i \in R$ in $\calA$ leads to an envy-free allocation. 

There are two cases here: {(i)} Either $t_i = c_i$, (ii) or $t_i = g_i$. 

In case (i) (when $t_i = c_i)$, we obtain an allocation $\calA^i=(A^i_1,\ldots, A^i_n)$ that is envy-free for $i$ by assigning $t_i$ to any agent besides $i$; i.e., $A^i_i= A_i \setminus \{t_i \}$. Recall that allocation $\calA=(A_1, \ldots, A_n)$ is $\EF{1}$ for $i$. In particular, against any agent $j$, if the $\EF{1}$ criterion for $i$ holds by removal of a chore $c \in A_i$, then we have \[ v_i(A^i_i) = v_i(A_i) - v_i(t_i) = v_i(A_i) - v_i(c_i) \geq v_i(A_i) - v_i(c) \geq v_i(A_j) \geq v_i (A^i_j).\]
Hence, we have envy-freeness for $i$ against the considered agent $j$. 
Otherwise, if the $\EF{1}$ criterion for $i$ holds by removal of a good $g \in A_j$, then again 
\[ v_i(A^i_i) = v_i(A_i) - v_i(t_i) \geq v_i(A_i) + v_i(g_i) \geq v_i(A_i) + v_i(g) \geq v_i(A_j) \geq v_i (A^i_j).\]
Here, we use the facts that $v_i(t_i) = v_i(c_i) <0$ and $|v_i(t_i)| \geq v_i(g_i)$; these inequalities follow from the definition of the items $c_i$, $g_i$, and $t_i$. Hence, in case (i), agent $i$ is envy-free under allocation $\calA^i$.

In case (ii) (when $t_i = g_i)$, we obtain the envy-free allocation $\calA^i=(A^i_1, \ldots, A^i_n)$ by allocating $t_i = g_i$ to agent $i$, i.e., $A^i_i = A_i \cup \{ g_i \}$. Now, if, in $\calA$, $\EF{1}$ holds for $i$ by removal of a good $g \in A_j$, for any agent $j \in [n]$, then
\[v_i(A^i_i) = v_i(A_i) + v_i(g_i) \geq v_i(A_i) + v_i(g) \geq v_i(A_j) \geq v_i (A^i_j). \]
Alternatively, if $\EF{1}$ against $j$ holds by the removal of a chore $c \in A_i$, then 
\[v_i(A^i_i) = v_i(A_i) + v_i(g_i) \geq v_i(A_i) - v_i(c_i) \geq v_i(A_i) - v_i(c) \geq v_i(A_j) \geq v_i (A^i_j).\]
Therefore, in case (ii) as well, we have an envy-free allocation $\calA^i$ for $i$.

Overall, with $|R| \leq (n-1)$, we obtain that the allocation $\A$ is \(\EFR{(n-1)}\). The theorem stands proved.
\end{proof}

We next show that the guarantee obtained in \Cref{thm:ERFexistence} is tight.    

\begin{theorem} 
\label{thm:EFRnon-existence}
Given a fair division instance $\langle [n], [m], \{v_i\}_{i \in [n]} \rangle$ with mixed manna and additive valuations, an $\EFR{k}$ allocation with $k < (n-1)$ is not guaranteed to exist.
\end{theorem}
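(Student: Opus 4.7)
The plan is to exhibit, for every $n \geq 2$, a chore-only instance on which no allocation is $\EFR{(n-2)}$. Since $\EFR{k}$ implies $\EFR{(k+1)}$ (add any item to $R$ and do not reassign it), this immediately rules out $\EFR{k}$ for every $k \leq n-2$, which is the content of the theorem.

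The instance I would use is as simple as possible: $n$ agents and $m = n-1$ indistinguishable chores with $v_i(t) = -1$ for all $i \in [n]$ and $t \in [m]$. With identical unit-chore valuations, envy is determined purely by bundle sizes, so the analysis reduces to an elementary counting argument.

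Fix any allocation $\calA = (A_1, \ldots, A_n)$ and any reallocation set $R \subseteq [m]$ of size exactly $n-2$. Since $m - |R| = (n-1)-(n-2) = 1$, exactly one item lies outside $R$; let $j^* \in [n]$ be the agent whose bundle contains it. Set $a_i \coloneqq |A_i \setminus R|$, so $a_{j^*} = 1$ and $a_i = 0$ for $i \neq j^*$; these quantities are invariant across any reassignment of $R$. For any candidate envy-free-for-$j^*$ allocation $\calA^{j^*} = (A^{j^*}_1, \ldots, A^{j^*}_n)$ obtained by redistributing $R$, envy-freeness for $j^*$ is equivalent to $|A^{j^*}_{j^*}| \leq |A^{j^*}_i|$ for every $i \neq j^*$. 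Writing $r_i \coloneqq |A^{j^*}_i \cap R|$, this becomes
\[
1 + r_{j^*} \;\leq\; r_i \qquad \text{for every } i \neq j^*,
\]
which (minimizing the left-hand side by taking $r_{j^*} = 0$) forces $r_i \geq 1$ for all $n-1$ agents $i \neq j^*$. But $\sum_i r_i = |R| = n-2 < n-1$, so no such reassignment exists. Hence $j^*$ is an agent for whom no allocation $\calA^{j^*}$ with the required form can be envy-free, contradicting the $\EFR{(n-2)}$ property of $\calA$.

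The above argument is uniform in $(\calA, R)$, so no allocation in this instance is $\EFR{(n-2)}$, and a fortiori none is $\EFR{k}$ for any $k < n-1$. There is no real obstacle here beyond getting the definition of $\EFR{k}$ quite literally right, in particular the fact that the bundle contents outside $R$ are frozen across the agent-specific allocations $\calA^i$, which is what makes the single-item-outside-$R$ pigeonhole fatal.
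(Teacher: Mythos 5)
Your proof is correct and uses essentially the same construction and argument as the paper: $n-1$ identical unit chores among $n$ agents, with the key observation that some agent is stuck holding a chore outside $R$ while there are too few chores for every other agent to receive one in the reassignment. The paper phrases the final contradiction via an agent with an empty bundle rather than your explicit count of the $r_i$'s, but the pigeonhole is the same.
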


\begin{proof} 
    Consider an instance with \(n\) agents and \(m=(n-1)\) identical chores, each of which valued at \(-1\), i.e., \(v_i(t) = -1\) for all agents $i \in [n]$ and chores $t \in [n-1]$. 
    
    Assume, towards contradiction, that this instance admits an \(\EFR{(n-2)}\) allocation $\A=(A_1, \ldots, A_n)$ with the set $R$ of size at most $(n-2)$. Since $|R| \leq n-2$, there exists an agent $a$ with a chore $c \in A_{a}$ that is not in the set $R$, i.e., $c \in A_a \setminus R$. Write $\calA^a=(\calA^a_1, \ldots, \calA^a_n)$ to denote the allocation that is envy-free for $a$ and obtained by reassigning chores from within $R$. 

    Since $c \in A_a \setminus R$, it must be the case that chore $c$ continues to be with agent $a$ in $\calA^a$, i.e., $c \in \calA^a_a$. At the same time, given that the number of chores in the instance is $(n-1)$, there exists an agent $b \in [n]$ with an empty bundle in $\calA^a$. This, however, contradicts the envy-freeness of $\calA^a$ for agent $a$; we have $v_a(A^a_b) = 0$ and $v_a(A^a_a) \leq -1$. 
    
    Therefore, by way of contradiction, we obtain that the instance does not admit an \(\EFR{(n-2)}\) allocation.
\end{proof}

\subsection{$\EFR{\lfloor \frac{n}{2} \rfloor}$ Allocation of Goods}
This section establishes that, when all the items are goods (i.e., have nonnegative values), then, among $n$ agents with additive valuations, an $\EFR{\lfloor \frac{n}{2} \rfloor}$ allocation always exists and can be computed in polynomial time. This existential guarantee for goods is tight -- there exist instances in which no allocation of the goods is $\EFR{(\lfloor \frac{n}{2} \rfloor-1)}$. Also, recall that in the case of chores, even an $\EFR{(n - 2)}$ allocation might not exist (\Cref{thm:EFRnon-existence}). Together, these results highlight an interesting dichotomy between indivisible goods and chores.

Our algorithm for finding $\EFR{\lfloor \frac{n}{2} \rfloor}$ allocations (Algorithm~\ref{Alg:EFRgoods}) builds upon the round-robin method with a judiciously selected `picking sequence' over the agents. The algorithm maintains two complementary subsets of agents: the {active} agents $V \subseteq [n]$ and the {deferred} agents $D \subseteq [n]$. In addition, the algorithm updates two subsets of goods: the goods $R \subseteq [m]$, to be reallocated eventually, and the set of unallocated goods $G \subseteq [m]$. Initially, all agents are active and all the goods are unallocated, i.e., $V = [n]$, $D = \emptyset$, $R = \emptyset$, and $G = [m]$.

The algorithm populates a partial allocation of the goods as it iterates. Specifically, in each iteration (of the outer while-loop of Algorithm \ref{Alg:EFRgoods}) every agent receives a single new good. Each iteration of the algorithm begins with a {conflict resolution phase}. Every active agent $i \in V$ identifies her most-valued goods, $M_i$, among the unallocated ones, i.e., $M_i = \arg\max_{g \in G} \ v_i(g)$. Considering $M_i$s for the active agents $i \in V$, the algorithm defines, for each unallocated good $g \in G$, the set of conflicting (active) agents $N_g \coloneqq \{i \in V : g \in M_i \}$. 

The conflict phase (inner while-loop of the algorithm) continues while there exists a good $g \in G$ with $|N_g| \geq 2$. That is, the phase executes as long as there exists an unallocated good $g \in G$ that appears in the most-preferred set of two or more active agents: $M_i \cap M_j \neq \emptyset$ for any two distinct $i,j \in V$. We select a good $\widetilde{g} \in G$ with the maximum number of conflicts and move it into the set $R$. In addition, the algorithm transfers all the agents in $N_{\widetilde{g}}$ to the deferred set $D$. The active set $V$ is updated accordingly. Note that this phase (inner while-loop) continues until agents in $V$ have distinct most-valued goods in $G$. 

The algorithm then executes the {picking phase} (Line \ref{line:pick-ph}) wherein, in lexicographic order, each active agent $i \in V$ selects a most-valued good from $G$. Subsequently, the remaining (deferred) agents $j \in D$ do the same. The algorithm repeats the two phases until $G \neq \emptyset$. Finally, we assign the entire set of goods $R$---kept aside until now---to an arbitrary agent, say agent $1$ and return the computed allocation $\calA$. 

Lemma \ref{lem:invariants} shows that the algorithm maintains---as an invariant---envy-freeness for the active agents and $\EF{1}$ for the deferred agents. In the analysis below (proof of Theorem \ref{theorem:EFRgoods}), we will show that $|R| \leq n/2$. This bound will follow from the observation that the addition of each good in $R$ causes the inclusion of at least two agents in the deferred set $D$. This set $R$ will certify that the returned allocation is $\EFR{\lfloor \frac{n}{2}\rfloor}$.

\begin{algorithm}

\KwIn{Fair division instance \(\langle [n], [m], \{v_i\}_{i \in [n]} \rangle\) with only goods.}
\KwOut{An \(\EFR{\lfloor n/2 \rfloor}\) allocation $\calA$.}

\BlankLine
Initialize set of agents \(V = [n]\) and \(D = \emptyset\) along with set of goods \(R = \emptyset\) and \(G = [m]\).

Also, initialize bundles $A_i = \emptyset$ for the agents $i \in [n]$.
\BlankLine

\While{\(G \neq \emptyset\)}{
For each agent $i \in V$, define the set of most-preferred unallocated goods $M_i \coloneqq \arg\max_{g \in G} v_i(g)$ and, for each good $g \in G$, define $N_g \coloneqq \{ i \in V : g \in M_i \}$. 
    \BlankLine  
    \tcp{Conflict Resolution Phase}
     \While{there exists $g \in G$ with $|N_g| > 1$}{
    Select a good with most conflicts, $\widetilde{g} \in \arg\max_{g \in G} |N_g|$, and  
         update \\ \(G \gets G \setminus \{ \widetilde{g}\}\) and \(R \gets R \cup \{ \widetilde{g} \}\) along with \  $D \gets D \cup N_{\widetilde{g}}$ and \(V \gets V \setminus N_{\widetilde{g}}\). \label{line:conflict} \\
        
        \BlankLine
        Also, update $M_i \coloneqq \arg\max_{g \in G} v_i(g)$, for each $i \in V$, and $N_g \coloneqq \{ i \in V : g \in M_i \}$, for each $g \in G$.
     }

    \tcp{Picking Phase}
    For each active agent $i \in V$, select good $\widehat{g}_i \in \arg\max_{g \in G} v_i(g)$ and update $A_i \gets A_i \cup \{ \widehat{g}_i \}$  along with \(G \gets G \setminus \{ \widehat{g}_i\}\).  For each deferred agent $j \in D$, select good $\widehat{g}_j \in \arg\max_{g \in G} v_j(g)$ and update $A_j \gets A_j \cup \{ \widehat{g}_j \}$  along with \(G \gets G \setminus \{ \widehat{g}_j\}\). \label{line:pick-ph} \\
}

Set $A_1 \gets A_1 \cup \{R \}$ and return allocation $\calA = (A_1, \ldots, A_n)$.
\caption{Conflict-Aware Picking Sequence}\label{Alg:EFRgoods}
\end{algorithm}

\begin{restatable}{lemma}{lem_invariants}
\label{lem:invariants}
In each iteration of the outer while-loop of \Cref{Alg:EFRgoods} and under the maintained allocation $\calA$ it holds that  
\begin{enumerate}[label=(\roman*)]
\item Every active agent $i \in V$ is envy-free.
\item Every deferred agent $j \in D$ is $\EF{1}$ and becomes envy-free if allocated all the goods in $R$. 
\end{enumerate}
\end{restatable}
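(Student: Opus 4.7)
I would give a direct argument (rather than a strict induction) for the invariants at each iteration, since the natural $\EF{1}$ witness for a deferred agent need not be preserved round by round. Before the first iteration both (i) and (ii) hold trivially. During conflict resolution of any iteration the allocation $\calA$ is unchanged: every newly deferred agent was envy-free as an active agent (by the argument for (i) below), so it is $\EF{1}$ trivially and remains envy-free even after receiving the updated $R$; existing deferred agents retain their witnesses, and the ``envy-free if given $R$'' property is preserved because $R$ only grows.

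For part~(i), observe that after conflict resolution the sets $\{M_i\}_{i \in V}$ are pairwise disjoint, so in the picking phase each active agent $i$ takes its unique most-preferred good $\widehat{g}_i$ from the current $G$ irrespective of lexicographic order. For any other agent $\ell$ picking in the same iteration, the good $\widehat{g}_\ell$ lies in $G$ at the moment $i$ picks (active picks precede deferred picks, and active tops are distinct), so $v_i(\widehat{g}_i) \ge v_i(\widehat{g}_\ell)$. Summing over all iterations---during which $i$ has always been active---yields $v_i(A_i) \ge v_i(A_\ell)$ for every $\ell$.

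For part~(ii), fix a deferred agent $j$, let $r_0$ denote the iteration in which $j$ was first moved to $D$, and let $\widetilde g \in R$ be the good responsible; by construction $\widetilde g$ is $j$'s most-preferred good in the $G$ seen at the moment of $j$'s deferral. The key tool is a ``shift'' inequality: for any agent $k$ picking in iteration $r+1$, the good $\widehat{g}_k^{(r+1)}$ already lay in $G$ throughout iteration $r$'s picking phase, since $G$ shrinks monotonically and $\widehat{g}_k^{(r+1)}$ is not picked until $r+1$. Hence whenever $j$ picks in iteration $r$, we have $v_j(\widehat{g}_j^{(r)}) \ge v_j(\widehat{g}_k^{(r+1)})$. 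For $r<r_0$, $j$ is still active and picks its (distinct) top, so $v_j(\widehat{g}_j^{(r)}) \ge v_j(\widehat{g}_k^{(r)})$ directly. I would then take $\widehat{g}_k^{(r_0)}$ as the $\EF{1}$ witness for $j$ against $k$; a telescoping sum of the above inequalities (direct comparison for $r<r_0$, shift for $r>r_0$) collapses to $v_j(A_k) - v_j(\widehat{g}_k^{(r_0)}) \le v_j(A_j)$, which is $\EF{1}$. For the ``envy-free given $R$'' claim, note that $\widehat{g}_k^{(r_0)}$ was in $G$ at the moment $\widetilde g$ was chosen, so $v_j(\widetilde g) \ge v_j(\widehat{g}_k^{(r_0)}) \ge v_j(A_k) - v_j(A_j)$; since $\widetilde g \in R$ and all items are goods, $v_j(R) \ge v_j(\widetilde g)$, giving $v_j(A_j \cup R) \ge v_j(A_k)$.

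The main obstacle is that the relative picking order of $j$ and $k$ can change across iterations---an agent may be active in some rounds and deferred in others, and lexicographic tie-breaking within $V$ or within $D$ further complicates matters---so a naive induction that maintains an explicit $\EF{1}$ witness per pair becomes awkward because the witness is not stable. The shift inequality sidesteps this entirely: it depends only on monotonicity of $G$ (both conflict resolution and picking only remove goods), not on who picks before whom in any particular iteration, which is what allows a single uniform choice of witness, $\widehat{g}_k^{(r_0)}$, to work across all rounds.
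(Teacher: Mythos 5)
Your proposal is correct and follows essentially the same route as the paper's proof: part (i) is identical, and for part (ii) your ``shift'' inequality $v_j(\widehat{g}_j^{(r)}) \ge v_j(\widehat{g}_k^{(r+1)})$, your choice of witness $\widehat{g}_k^{(r_0)}$, and the telescoping collapse are exactly the paper's inequalities $v_j(\widehat{g}_{j,\ell}) \ge v_j(\widehat{g}_{a,\ell+1})$, witness $\widehat{g}_{a,f}$, and the chain culminating in (\ref{ineq:ind-EFone}), merely presented as one global sum rather than an induction over iterations. Your justification of the shift purely via monotonicity of $G$ is a slightly cleaner phrasing of the paper's ``agents selected goods alternatively'' observation, but it is the same inequality and the same argument.
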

\begin{proof}
We prove lemma via induction over the number of iterations of the outer while-loop of the algorithm. 

\noindent {\it Base Case:} Initially, all agents are active, $V = [n]$, no goods are allocated or reserved, $A_i = \emptyset$ for each agent $i$. Hence, the maintained (partial) allocation $\calA$ is envy-free for all the agents.

\noindent {\it Induction Step:} Assume that the invariants hold at the beginning of the $k$th iteration of the outer while-loop. We will show that the they also hold at the end of the iteration and, hence, establish the inductive step. 

For any iteration count $\ell \in \mathbb{Z}$, write $\mathcal{A}^\ell=(A^\ell_1, \ldots ,A^\ell_n)$ to denote the (partial) allocation maintained by the algorithm at end of its $\ell$th iteration. Also, write $\widehat{G}_\ell$ to denote the set of goods assigned in the $\ell$th iteration (see Line \ref{line:pick-ph}). In particular, $\calA^{k-1}$ is the allocation at the beginning of the $k$th iteration, and $\calA^k$ is the one at the end. Further, the goods in $\widehat{G}_k$ are assigned in $\calA^k$, but not in $\calA^{k-1}$. 

Given that the invariants (i) and (ii) hold under $\calA^{k-1}$, we will next show that they are, respectively, satisfied by $\calA^k$ as well.   

\smallskip
\noindent \emph{(i) Every active agent is envy-free.} 
By construction, an agent $i$ that is active at the end of the $k$th iteration must have been active in all previous iterations. Hence, by the induction hypothesis, we have that agent $i$ was envy-free under $\calA^{k-1}$. Further, the conflict resolution and the fact that $i$ remains active ensure that $i$ gets to select $\widehat{g}_{i,k}$ -- its most preferred good among unallocated ones in Line \ref{line:pick-ph} of iteration $k$; in particular, $v_i(\widehat{g}_{i,k}) \geq v_i(g')$ for all $g' \in \widehat{G}_k$. 

Consider any other agent $a \in [n]$ and let $\widehat{g}_{a,k} \in \widehat{G}_k$ be the item assigned to $a$ in the $k$th iteration.\footnote{The argument directly holds if $a$ receives no item in the iteration, i.e., if $G$ becomes empty in between the picking phase.} Given that $v_i(\widehat{g}_{i,k}) \geq v_i(\widehat{g}_{a,k})$ and $v_i(A^{k-1}_i) \geq v_i(A^{k-1}_a)$, we obtain the envy-freeness for $i$: 
\begin{align*}
    v_i(A^k_i) = v_i(A^{k-1}_i) + v_i(\widehat{g}_{i,k}) \geq v_i(A^{k-1}_a) + v_i(\widehat{g}_{a,k}) = v_i(A^k_a). 
\end{align*}
Inductively, this establishes the envy-freeness of active agents.

\smallskip
\noindent \emph{(ii) Every deferred agent is $\EF{1}$, and becomes envy-free if allocated all of $R$.}
Consider any agent $j$ that is in the deferred set $D$ at the end of the $k$th iteration.  Also, fix any agent $a \in [n]$ -- we will next show that the invariant continues to hold for $j \in D$ against $a \in [n]$.

Here, for each agent $i \in [n]$ and iteration $\ell$, we write $\widehat{g}_{i, \ell}$ to denote the good selected by $i$ in the $\ell$th iteration (Line \ref{line:pick-ph}).  

Also, let $f \leq k$ denote the iteration in which agent $j$ was moved into $D$, and  let $\widetilde{g}_f$ denote the good that induced this movement (in Line \ref{line:conflict}). By construction, $\widetilde{g}_f \in R$. Also, note that, among all the goods that were unallocated at the beginning of the $f$th iteration, good $\widetilde{g}_f$ was a most-valued one by agent $j$. Hence, for the good selected by agent $a$ in the $f$th iteration, $\widehat{g}_{a,f}$, we have 
\begin{align}
    v_j\left( \widetilde{g}_f \right) \geq v_j \left( \widehat{g}_{a,f} \right) \label{ineq:faf}
\end{align}
In addition, note that agent $j$ was active before the $f$th iteration, and, hence, envy free: $v_j(A^{f-1}_j) \geq v_j(A^{f-1}_a)$. Therefore, $v_j(A^{f-1}_j) \geq v_j(A^{f-1}_a) = v_j(A^f_a \setminus \{ \widehat{g}_{a,f}\})$.

Moreover, in and after the $f$th iteration, agents $a$ and $j$ selected goods alternatively from among the unallocated ones. Hence, for iterations $\ell \in \{f, \ldots, k-1\}$, we have $v_j(\widehat{g}_{j, \ell}) \geq v_j( \widehat{g}_{a, \ell+1})$. These observations imply the $\EF{1}$ criterion holds for $j$ in iteration $k$:
\begin{align}
 v_j(A^k_j) & = v_j(A^{f-1}_j) + v_j\left(\{\widehat{g}_{j,f}, \ldots, \widehat{g}_{j,k}\}\right) \notag \\
 & \geq v_j(A^{f-1}_j) + v_j\left(\{\widehat{g}_{j,f}, \ldots, \widehat{g}_{j,k-1}\}\right) \notag \\ 
& \geq v_j(A^{f-1}_j) + v_j\left(\{\widehat{g}_{a,f+1}, \ldots, \widehat{g}_{a,k}\}\right) \notag \\ 
& \geq v_j \left(A^f_a \setminus \{ \widehat{g}_{a,f}\} \right) + v_j\left(\{\widehat{g}_{a,f+1}, \ldots, \widehat{g}_{a,k}\}\right) \notag \\ 
& = v_j \left(A^k_a \setminus \{ \widehat{g}_{a,f}\} \right) \label{ineq:ind-EFone}
\end{align}
Equations (\ref{ineq:ind-EFone}) and (\ref{ineq:faf}) along with the fact that $\widetilde{g}_f \in R$ establish the second part of invariant (ii) (i.e., deferred agent $j$ becomes envy-free if it receives all of $R$): $v_j(A^k_j \cup R) \geq v_j(A^k_a)$.

This completes the induction step and the lemma stands proved. 
\end{proof}

Using Lemma \ref{lem:invariants}, we next prove that \Cref{Alg:EFRgoods} computes an $\EFR{\lfloor \frac{n}{2} \rfloor}$ allocation in polynomial time.

\begin{theorem}\label{thm:EFRgoods}
\label{theorem:EFRgoods}
Given any fair division instance of goods with $n$ agents and additive valuations, \Cref{Alg:EFRgoods} computes an $\EFR{\lfloor \frac{n}{2} \rfloor}$ allocation in polynomial time.
\end{theorem}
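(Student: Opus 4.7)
The plan is to verify both $|R| \leq \lfloor n/2 \rfloor$ and the $\EFR{|R|}$ guarantee for the allocation $\calA$ returned by \Cref{Alg:EFRgoods}, using the set $R$ maintained by the algorithm as the witness in Definition~\ref{defn:EFRk}. Polynomial runtime will then follow from a routine count of loop iterations.

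For the size bound, each execution of the inner while-loop (\Cref{line:conflict}) inserts one good $\widetilde g$ into $R$ while simultaneously transferring $|N_{\widetilde g}| \geq 2$ agents from $V$ into $D$. Since the algorithm never moves agents out of $D$, the inequality $|D| \geq 2|R|$ is preserved throughout, and combining with $|D| \leq n$ yields the integer bound $|R| \leq \lfloor n/2 \rfloor$. The runtime is similarly direct: the outer loop allocates up to $n$ goods per iteration, so it repeats at most $\lceil m/n \rceil$ times, while the inner loop runs at most $\lfloor n/2 \rfloor$ times in total across the whole execution; every elementary step (computing the $M_i$, the conflict sets $N_g$, and the various argmax operations) takes time polynomial in $n$ and $m$.

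For the envy-freeness requirement, fix any agent $i \in [n]$, and let $\calA^{\mathrm{part}}$ denote the partial allocation just before the last line of the algorithm (i.e., before $R$ is appended to $A_1$). I propose that $\calA^i$ be the allocation obtained from $\calA$ by moving every item of $R$ from agent $1$'s bundle to agent $i$'s (a vacuous operation when $i = 1$); explicitly, $A^i_i = A^{\mathrm{part}}_i \cup R$, $A^i_1 = A^{\mathrm{part}}_1$ for $i \neq 1$, and $A^i_j = A^{\mathrm{part}}_j$ for all other $j$. By construction $\calA^i$ differs from $\calA$ only in the placement of items from $R$, matching Definition~\ref{defn:EFRk}. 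If $i$ remains active at termination, \Cref{lem:invariants}(i) yields $v_i(A^{\mathrm{part}}_i) \geq v_i(A^{\mathrm{part}}_j)$ for every $j$, and since all items have nonnegative value, $v_i(A^i_i) \geq v_i(A^{\mathrm{part}}_i) \geq v_i(A^{\mathrm{part}}_j) = v_i(A^i_j)$. If $i$ is deferred at termination, \Cref{lem:invariants}(ii) gives envy-freeness for $i$ directly once all of $R$ is added to her bundle. The main subtlety --- more conceptual than technical --- is recognizing that ``assigning all of $R$ to $i$'' is a legitimate within-$R$ reassignment relative to the returned $\calA$, which is immediate since $\calA$ has $R \subseteq A_1$ by the final line of the algorithm.
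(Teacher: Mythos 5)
Your proposal is correct and follows essentially the same route as the paper: the size bound via the observation that each good added to $R$ permanently defers at least two distinct agents (equivalently, the disjointness of the sets $N_{\widetilde g}$), and the $\mathrm{EFR}$ guarantee via the two invariants of \Cref{lem:invariants}, with all of $R$ handed to the agent under consideration. Your explicit construction of $\calA^i$ as a within-$R$ move of the reserved goods from $A_1$ to $A_i$ is a slightly more careful rendering of the same argument, and your runtime count is a marginally sharper but equivalent bound.
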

\begin{proof}
Each iteration of the outer while-loop in \Cref{Alg:EFRgoods} runs in polynomial time, since it entails populating subsets (of agents and goods) and counting. Further, note that the number of iterations is at most $m$. Hence, the algorithm terminates in polynomial time.  

By Lemma~\ref{lem:invariants}, at the end of the algorithm, under the returned allocation $\calA$ every active agent is envy-free and every deferred agent becomes envy-free upon receiving all goods in the set $R$. Hence, we can achieve envy-freeness for each agent by reassigning goods from within $R$ in $\calA$. That is, $\calA$ is $\mathrm{EFR}$ with respect to the subset of items $R \subseteq [m]$. We will complete the proof of the theorem by showing next that $|R| \leq \lfloor \frac{n}{2} \rfloor$.

For each reserved good $\widetilde{g} \in R$, let $N_{\widetilde{g}}$ denote the set of agents who moved into the deferred $D$ when $\widetilde{g}$ was included in $R$; see Line \ref{line:conflict}. Note that $|N_{\widetilde{g}}| \ge 2$. Also, note that the sets $\{N_g\}_{g \in R}$ are pairwise disjoint -- an agent is deferred due to exactly one good. Hence, summing over all such disjoint sets gives us $2|R|\le \sum_{g \in R} |N_g| \le n,
$. Therefore, as required, $|R| \le \lfloor n/2 \rfloor$. 

Overall, we obtain that the allocation returned by \Cref{Alg:EFRgoods} is $\EFR{\lfloor n/2 \rfloor}$. This completes the proof of the theorem. 
\end{proof}

We note that the existential guarantee obtained here for goods is tight. In particular, there exist instances with all goods wherein no allocation is $\EFR{\left(\lfloor n/2 \rfloor - 1\right)}$.

\begin{theorem}
\label{thm:EFRgoodstight}Given a fair division instance $\langle [n], [m], \{v_i\}_{i \in [m]} \rangle$ with goods and additive valuations, an $\EFR{k}$ allocation with $k < \lfloor \frac{n}{2} \rfloor$ is not guaranteed to exist.
\end{theorem}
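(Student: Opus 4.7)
The plan is to exhibit, for each $n \geq 2$, a concrete goods-only instance in which every allocation requires at least $\lfloor n/2 \rfloor$ items in the reallocation set $R$, thereby ruling out $\EFR{k}$ for every $k < \lfloor n/2 \rfloor$. The construction pairs up agents and gives each pair a single ``shared favorite'' good. Setting $s \coloneqq \lfloor n/2 \rfloor$ and $m \coloneqq s$, I introduce goods $g_1, \ldots, g_s$ and, for each $i \in [s]$, declare $v_{2i-1}(g_i) = v_{2i}(g_i) = 1$ with every other good valued at $0$ by these two agents. If $n$ is odd, the leftover agent $n$ values every good at $0$, so they never envy anyone and can never be envied.

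First I would record a per-pair ``single-good envy'' observation: for each pair $(2i-1,2i)$, any allocation $\calB = (B_1, \ldots, B_n)$ that is envy-free for agent $2i-1$ must satisfy $g_i \in B_{2i-1}$, since $2i-1$ values only $g_i$ positively and would otherwise envy whichever agent actually holds $g_i$. The symmetric statement applies to agent $2i$.

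Next, I would proceed by contradiction. Suppose $\calA$ is an $\EFR{k}$ allocation with $k < s$, witnessed by a reallocation set $R$ of size at most $k$. Since $|R| < s = m$, pigeonhole produces an index $i^* \in [s]$ with $g_{i^*} \notin R$; in particular, $g_{i^*}$ must be assigned to the same agent in $\calA$ and in every reassigned allocation $\calA^\ell$. Applying the single-good envy observation to the witnesses $\calA^{2i^*-1}$ and $\calA^{2i^*}$ forces $g_{i^*}$ to sit in agent $2i^*-1$'s bundle in the former and in agent $2i^*$'s bundle in the latter, contradicting the fact that its owner cannot change between these two allocations.

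The argument is brief once the construction is in hand; its only real subtlety is the odd-$n$ case, which I handle by making the leftover agent's valuation identically zero so that the envy analysis reduces to the $2s$ paired agents. I do not anticipate any serious obstacle beyond verifying this edge case together with the boundary values for small $n$ (e.g.\ $n=2$ and $n=3$, where the construction degenerates to a single contested good).
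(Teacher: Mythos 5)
Your proposal is correct and uses essentially the same construction and argument as the paper: pair up the agents, give each pair one shared good valued at $1$ (all else $0$), and observe that any good outside $R$ cannot simultaneously sit in both paired agents' envy-free witness allocations. The only difference is that you explicitly handle odd $n$ via a zero-valuation leftover agent, whereas the paper states its instance for even $n$ only; this is a minor but welcome bit of added completeness.
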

\begin{proof}
Consider an instance with an even number of agents $n$  and $m = n/2$ goods $\{g_1, \dots, g_{n/2} \}$. For each index $k \le n/2$, agents $(2k-1)$ and $2k$ both value good $g_k$ at $1$, and all other goods at $0$. In this instance, each good must belong to the re-allocatable set $R$. Indeed, if $g_k \notin R$, then in any allocation $\calA$ the envy of at least one of the agents $(2k-1)$ or $2k$ cannot be resolved even by the reassignment of all the remaining goods (which they value at 0). Hence, it must hold that $|R| \geq n/2$, i.e., the instance does not admit an $\EFR{\left(\lfloor n/2 \rfloor - 1\right)}$ allocation. 
\end{proof}

\paragraph{Remark:} Lemma~\ref{lem:invariants} and \Cref{thm:EFRgoods} show that in the partial allocation maintained at the termination of the outer while-loop all the goods in $[m] \setminus R$ are allocated and we have $\EF{1}$ among the agents. Write $\calA'$ to denote this partial allocation and recall that, for each agent $i$, the (complete) allocation $(A'_1, \ldots, A'_i \cup R, \ldots, A'_n)$ is envy-free for $i$.

To extend $\mathcal{A}'$ to a complete allocation (instead of assigning all of $R$ to a single agent) we can assign the goods in $R$ in a round-robin manner while respecting the  topological ordering of the envy graph $G_{\calA'}$; recall that $|R| \le \lfloor n/2 \rfloor$. This ensures that the resulting complete allocation is $\EF{1}$ for all the agents. In addition, the $\EFR{\lfloor n/2 \rfloor}$ guarantee continues to hold, since we have selected an allocation of $R$ starting with $\calA'$. This observation leads to the corollary below. 

\begin{corollary}
In fair division instances $\langle [n], [m], \{v_i\}_{i \in [n]} \rangle$ with all goods  and additive valuations, there always exists an allocation that is both $\EFR{\lfloor n/2 \rfloor}$ and $\EF{1}$. Further, such an allocation can be computed in polynomial time.
\end{corollary}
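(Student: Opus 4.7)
The plan is to replace the last line of \Cref{Alg:EFRgoods}---which dumps all of $R$ onto agent $1$---with a careful single-sweep distribution of the items in $R$. Let $\calA'$ denote the partial allocation produced just before that final line. By \Cref{lem:invariants}, $\calA'$ is $\EF{1}$, and for each agent $i$ the completion $(A'_1, \ldots, A'_i \cup R, \ldots, A'_n)$ is envy-free for $i$; the proof of \Cref{theorem:EFRgoods} gives $|R| \le \lfloor n/2 \rfloor$. The guiding observation is that any complete allocation $\calA$ obtained from $\calA'$ by reassigning only items of $R$ automatically inherits the $\EFR{\lfloor n/2 \rfloor}$ property with the same witness set $R$: the items of $[m] \setminus R$ stay in place, so from $\calA$ one can still move all of $R$ to agent $i$ and reach the envy-free allocation guaranteed by \Cref{lem:invariants}(ii).

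Starting from $\calA'$, I would first eliminate any envy cycles via the standard Lipton et al.\ cycle-shifting procedure. A short check shows that this step preserves both $\EF{1}$ and the property from \Cref{lem:invariants}(ii), because each cycle shift weakly increases every $v_i(A_i)$ while merely permuting the multiset of bundles, so $\max_j v_i(A_j)$ stays the same and the inequality $v_i(A_i) + v_i(R) \ge \max_j v_i(A_j)$ is maintained. Call the resulting $\EF{1}$ allocation $\calA''$; its envy graph $G_{\calA''}$ is now a DAG. Compute a topological order $i_1, \ldots, i_n$ of $G_{\calA''}$, and then distribute $R$ by letting each agent, in this order, pick a favorite item from the as-yet-unassigned subset of $R$. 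Since $|R| \le \lfloor n/2 \rfloor$, only the first $|R|$ agents actually receive an item and each receives at most one.

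The main technical step is verifying that the resulting complete allocation $\calA$ is $\EF{1}$. For any pair $i_k, i_\ell$ with $k < \ell$: by the topological order, $i_\ell$ does not envy $i_k$ in $\calA''$, and whenever both pick, $i_k$ goes first so $v_{i_k}(g_k) \ge v_{i_k}(g_\ell)$. Reusing the original $\EF{1}$ witness of $i_k$ against $i_\ell$ then yields $\EF{1}$ for $i_k$ against $i_\ell$ after the picks, while using the newly picked item $g_k$ as the witness for $i_\ell$ against $i_k$ restores $\EF{1}$ in the other direction. The mixed (picker versus non-picker) cases reduce to one of these two situations, and pairs of non-pickers are unchanged from $\calA''$. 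Since $\calA$ differs from $\calA''$ only on items of $R$, the $\EFR{\lfloor n/2 \rfloor}$ guarantee with witness $R$ carries from $\calA''$ to $\calA$, and all subroutines---\Cref{Alg:EFRgoods}, envy-cycle elimination, topological sort, and the single-sweep pick---run in polynomial time. The main obstacle is the careful case analysis for $\EF{1}$ preservation together with verifying that the cycle-elimination step does not break the ``give all of $R$ to one agent'' envy-free completion---both routine but needing explicit bookkeeping.
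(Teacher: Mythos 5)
Your proposal is correct and follows essentially the same route as the paper's own (very terse) argument: keep the partial allocation $\calA'$, note that any redistribution confined to $R$ preserves the $\EFR{\lfloor n/2\rfloor}$ witness, and hand out the at most $\lfloor n/2\rfloor$ items of $R$ one per agent along a topological order of the envy graph to preserve $\EF{1}$. Your explicit envy-cycle-elimination step and the check that cycle shifts preserve the ``give all of $R$ to agent $i$'' completion are details the paper leaves implicit, but the underlying argument is the same.
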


\section{Conclusion and Future Work}
This paper establishes that, when allocating mixed manna among $n$ agents with additive valuations, there always exist proximal, $\PO$ allocations $\calA^1, \ldots, \calA^n$ that are envy-free for the $n$ agents, respectively. Here, the proximity is quantified in terms of the symmetric differences between the allocation bundles; in particular, for any pair of these allocations $\calA^i=(A^i_1, \ldots, A^i_n)$ and $\calA^j=(A^j_1, \ldots, A^j_n)$, the union of the symmetric differences $\cup_{\ell \in [n]} \left( A^i_\ell \triangle A^j_\ell \right)$ is contained in a size-$(n-1)$ subset of items $R$. While the cardinality bound of $(n-1)$ is tight (specifically for chores), a relevant direction of future work is to strengthen the guarantee by bounding the size of the symmetric differences $\left( A^i_\ell \triangle A^j_\ell \right)$ individually. We note that establishing degree bounds in the bipartite graph $\calH$ considered in \Cref{claim:acyclic} would provide such a strengthening.  

This paper provides a novel application of the KKM Theorem in discrete fair division. Our approach complements the existing market-based techniques and contributes to the mixed manna frontier. Identifying further applications of this template in discrete fair division is another interesting direction.

\clearpage
\bibliographystyle{alpha}
\bibliography{references}

\appendix
\clearpage
\appendix
{\noindent\textbf{\huge Appendix}}
\bigskip

\section{Hardness of Deciding \(\EFR{k}\) Allocations}

While an $\EFR{(n-1)}$ allocation always exists and can be computed efficiently (\Cref{thm:ERFexistence}), we now show that deciding whether a given allocation $\A$ is \(\EFR{k}\) is NP-complete.

\begin{theorem}
    \label{thm:ERFhardness}
    Given a fair division instance $\langle [n], [m], \{v_i\}_{i \in [n]} \rangle$ with mixed manna, an allocation $\A$, and a positive integer \(k < n-2 \), deciding whether \(\A\) is \(\text{EFR-}k\) is NP-complete.
\end{theorem}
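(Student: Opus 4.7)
The plan is to prove NP-completeness by establishing membership in NP and NP-hardness via reduction from a covering problem such as \textsc{Hitting Set}.

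For NP membership I would use the natural certificate: a subset $R\subseteq[m]$ together with $n$ reassignments $\calA^1,\ldots,\calA^n$. A polynomial-time verifier checks that $|R|\le k$, that each $\calA^i$ agrees with $\calA$ outside $R$ (so reassignments occur only within $R$), and that $\calA^i$ is envy-free for agent $i$, the last of which reduces to $O(n)$ comparisons per agent under additive valuations. Together this shows the problem lies in NP.

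For NP-hardness my plan is to reduce from \textsc{Hitting Set}: given sets $T_1,\ldots,T_q \subseteq U=\{u_1,\ldots,u_m\}$ and integer $k'$, decide if there exists $H\subseteq U$ with $|H|\le k'$ and $H\cap T_j\ne\emptyset$ for every $j$. The idea is to introduce one ``set-agent'' $a_j$ per set $T_j$, a central agent $a^*$, and enough padding agents so that $k'<n-2$ (padding agents are endowed with private filler items worth much more to themselves than anything else, so that they are envy-free under $\calA$ regardless of the reassignment within $R$). The items will consist of $U$ together with auxiliary dummies. In the input allocation $\calA$, agent $a^*$ holds all of $U$ while each set-agent $a_j$ holds only a cheap filler. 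Each set-agent $a_j$ values items in $T_j$ at some large value $M$ and values items in $U\setminus T_j$ at $0$; an additional calibration item carried by $a^*$ will be used to ensure $a_j$ strictly envies $a^*$ initially and that this envy can only be resolved by moving at least one element of $T_j$ out of $a^*$'s bundle. With $M$ tuned large and $k=k'+\text{const}$ (accounting for auxiliary items that must always be in $R$), I would argue $\calA$ is $\EFR{k}$ iff a hitting set of size $\le k'$ exists: necessity comes from the fact that if no item of $T_j$ lies in $R$ then, even giving $a_j$ every reallocated item, $a^*$ retains enough $T_j$-mass to keep $a_j$ envious; sufficiency comes from the fact that a single $T_j$-item transferred to $a_j$ already reverses the envy against $a^*$, while padding agents remain trivially envy-free.

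The main obstacle is that the flexibility of per-agent reassignments tends to trivialize naive reductions: if any single ``giant'' item can be moved to any agent, then putting that one item in $R$ resolves everyone's envy. Overcoming this requires designing the valuations so that the envy of $a_j$ against $a^*$ depends on the specific composition of $T_j$, not on a single fungible resource---hence the need to scatter the ``source of envy'' for $a_j$ across the items of $T_j$ and to make $a^*$'s bundle simultaneously envied by many set-agents for incompatible reasons. A secondary obstacle is handling envy against other set-agents $a_l$: since each $a_l$'s bundle is small and $a_j$ values items in $T_j\setminus T_l$ at zero relative to items in $T_j$, I would tune valuations so these cross-envies never bind, leaving only the $a_j$-vs-$a^*$ envy as the effective constraint. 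Once these calibrations are carried out, the reduction is polynomial in $q+m$ and the equivalence with \textsc{Hitting Set} yields NP-hardness; combined with NP membership this gives NP-completeness.
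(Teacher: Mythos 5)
Your NP-membership argument is the same as the paper's. For hardness, however, you take a genuinely different route: the paper reduces from \textsc{Partition}, using $k+3$ agents with \emph{identical} valuations over $2k+1$ chores ($k$ chores valued $-s_i$ all held by agent $1$, and $k+1$ chores each valued $-T$ held by $k+1$ distinct agents). There, the ``covering'' constraint on $R$ comes from a pigeonhole argument---$|R|\le k$ cannot touch all $k+1$ of the $-T$ chores, so some agent $a$ keeps its chore in $\calA^a$, and envy-freeness for $a$ then forces every bundle to have value exactly $-T$, which encodes the partition. Your reduction from \textsc{Hitting Set} instead makes the covering structure of $R$ explicit: each set-agent's envy can only be cured by an item of $T_j$ landing in $R$, so $R\cap U$ must be a hitting set. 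Both are valid; the paper's construction buys hardness even for identical valuations and a very short instance, while yours exploits the per-agent independence of the reassignments more directly and is arguably more transparent about why $R$ behaves like a hitting set.

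One detail in your sketch needs to be pinned down before the sufficiency direction is sound: as written (``$a_j$ values items in $T_j$ at $M$'' and $a^*$ holds all of $U$), transferring a single $T_j$-item to $a_j$ changes the envy gap by $2M$ but the initial gap is $|T_j|\cdot M$, so a single transfer does \emph{not} reverse the envy once $|T_j|\ge 3$. This is exactly what your calibration item must fix: e.g., give $a^*$ one extra item $d$ with $v_{a_j}(d)=-(|T_j|-1)M-\delta$ for each $j$, so that $a_j$ values $a^*$'s bundle at only $M-\delta$ initially; then one $T_j$-item suffices for sufficiency, while necessity still holds because $d$ is the only negatively-valued item for $a_j$ and removing it from $a^*$ only increases $a^*$'s value in $a_j$'s eyes. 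With that arithmetic made explicit (and padding agents to enforce $k<n-2$), your reduction goes through.
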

\begin{proof}
    We will, in fact, establish the hardness result for instances with identical valuations. 

    The problem at hand is in {\rm NP}: Given an allocation \(\A\), a set \(R\) of items of size at most \(k\), and allocations \(\A^i\) for each agent \(i\), we can verify in polynomial time that each \(\A^i\) is envy-free for agent \(i\) and that \(\A^i\) is obtained from \(\A\) by reallocating only items in \(R\).

    To establish the hardness of the $\mathrm{EFR}$ decision problem, we next provide a reduction from the {\rm NP}-complete \textsc{Partition} problem \cite{gareyjohnsonComputersintractability79}. In \textsc{Partition}, we are given a set \(S = \{s_1, s_2, \dots, s_k\}\) of positive integers that sum up to \(2T\), and we need to decide whether there exists a subset \(S_1 \subseteq S\) whose elements sum to exactly \(T\).

    Given an instance \(S\) of \textsc{Partition} with \(\sum_{i=1}^k s_i = 2T\), we construct a fair division instance $\langle [k+3], [2k+1], \{v_i\}_{i \in [k+3]} \rangle$ with \(2k+1\) chores, \(k+3\) agents, that have identical valuations $v_i = v$, and the input allocation \(\A\) as follows: 
     \begin{itemize}
    \item For the first $k$ chores $c_1, \ldots, c_k$ we set the value considering the $k$ elements in the given partition instance:  $v(c_i) = -s_i$ for all indices $1 \leq i \leq k$; recall that the agents' valuations are identical. 
    \item For the remaining $k+1$ chores $c_{k+1}, \ldots, c_{2k+1}$, we set  $v(c_\ell) = -T$, with index $k+1 \leq \ell \leq 2k+1$.
    \item The allocation $\A=(A_1, \ldots, A_{k+3})$ is defined as follows:
        \begin{itemize}
            \item Bundle $A_1 = \{c_1, \ldots, c_k\}$ and bundle $A_2 = \emptyset$,
            \item For the remaining agents $i \in \{3, \ldots, k+3\}$, set $A_i = \{c_{k-2+i}\}$ (i.e., each of the remaining $k+1$ agents gets one of the chores $c_{k+1}, \ldots, c_{2k+1}$).
        \end{itemize}
    \end{itemize}
    
    We claim that \(\A\) is \(\EFR{k}\) if and only if \(S\) admits an exact partition into two subsets of sum \(T\).
    \medskip\noindent\textbf{(If.)}\quad
    Suppose there is a partition \(S = S_1 \cup S_2\) with \(\sum_{s\in S_1}s = \sum_{s\in S_2}s = T\). Then, setting $R=A_1$, we have $|R| = |A_1| = k$. Next, we note that there exists an allocation $\B = (B_1,\ldots, B_{k+3})$ (defined below) that is envy-free for all the agents and can be obtained from $\calA$ by reallocating items from within $R$. Allocation $\B$ and the set $R$ certify that $\calA$ is \(\EFR{k}\).
    We obtain $\B$ as follows 
    \begin{itemize}
        \item $B_1 = \{c_i : s_i \in S_1\}$, i.e., agent 1 gets the chores corresponding to the items in $S_1$.
        \item $B_2 = \{c_i : s_i \in S_2\}$, i.e., agent 2 gets the chores corresponding to the items in $S_2$.
        \item $B_i = A_i$, for each $i \in \{3, \ldots, k+3\}$, i.e., each remaining agent gets the same bundle as in $\A$.
    \end{itemize}
    
    Each agent $i$ has a valuation of $-T$ for all the bundles in $\calB$, including its own. Hence, $\B$ is in fact envy-free for all the agents. In addition,  $\B$ differs from $\A$ only by reallocating the chores in $R$. Therefore, \(\A\) is \(\EFR{k}\).

    \medskip\noindent\textbf{(Only if.)}\quad
    Conversely, suppose that \(\A\) is an \(\EFR{k}\) allocation, with some $R \subset [2k+1]$ of size $|R|\le k$. 
    By the pigeonhole principle, since \(R\) contains at most \(k\) chores, there is at least one agent \(a \in \{3, \ldots, k+3\}\) such that the chore assigned to agent \(a\) in \(\A\) (i.e., in the bundle \(A_{a}\)) is not in \(R\). That is, \(A_{a} \cap R = \emptyset\). 

   Since \(\A\) is \(\EFR{k}\), there exists an allocation \(\A^a\) (obtained by reallocating items in \(R\)) such that agent \(a\) is envy-free in \(\A^a\). Next, note that since \(A_a \cap R = \emptyset\), agent \(a\)'s bundle \(A^a_a\) continues to contain the chore it received in $\calA$, i.e., \(v \left(A^a_a \right) \leq -T\).

    Further, the fact that agent \(a\) is envy-free in allocation \(\A^a\), it holds that $v(A^a_i) \leq -T$ for all agents $i \in [k+3]$. At the same time, the sum of values of all the chores satisfies $\sum_{i=1}^{k+3} v(A^a_i) = -2T - (k+1)T = -(k+3)T$. These observations imply that $v(A^a_i) = -T$ for all agents $i \in [k+3]$.   

    That is, under allocation $\calA^a$, the chores are partitioned in $(k+3)$ bundles $A^a_1, \ldots, A^a_{k+3}$ each with value $v(A^a_i) = -T$. Recall that the values of the chores are as follows:   
  \[
    \{-s_1, \ldots, -s_k\} \cup \underbrace{\{-T, \dots, -T\}}_{(k+1) \text{ copies}}
    \]
Considering these values and the partitioning of the chores induced by $\calA^a$, we obtain that the values $\{-s_1, \ldots, -s_k\}$ must admit a two-partition wherein each part sums exactly to $-T$. That is, the given integers \(\{s_1, \dots, s_k\}\) can be partitioned into two subsets of equal sum \(T\). Overall, the existence of an $\EFR{k}$ allocation implies the existence of a desired solution for the given \textsc{Partition} instance. This completes the reverse direction of the reduction. 
   
The theorem stands proved. 
\end{proof}

\section{Non-Implication of Envy-Freeness with Sharing}
\label{sec:EFRvsEFSharing}

The work of Sandomirskiy and Segal-Halevi \cite{sandomirskiysegalhaleviEfficientFair22} shows that, for mixed manna with additive valuations, there exist envy-free ($\EF{}$) and $\PO$ allocations in which at most $(n-1)$ items are fractionally assigned (and the remaining items are integrally allocated). 

We next show that such a sharing guarantee does not, in and of itself, provide a reasonable $\mathrm{EFR}$ bound; note that all allocations are $\EFR{m}$.

\begin{proposition}
For any $\varepsilon >0$, there exists an $\EF{}$ and $\PO$ allocation $\mathcal{X}$ in which $(n-1)$ items are fractionally assigned and it holds that none of the integral allocations obtained by rounding $\calX$ are $\EFR{\big( (1-\varepsilon)m \big)}$.       
\end{proposition}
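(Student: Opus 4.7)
My plan is to exhibit a simple chores-only instance with identical valuations where the fractional $\EF{}$ and $\PO$ allocation concentrates all sharing on exactly $(n-1)$ items, yet every integral rounding requires essentially the entire chore set to be in its reallocation set $R$. Specifically, I will take $n$ agents with identical additive valuations over $m = n - 1$ chores, each valued at $-1$ by every agent, and let $\calX$ split every chore uniformly (a fraction $1/n$ to each agent). It is routine to verify that $\calX$ is envy-free (each agent obtains value $-(n-1)/n$), Pareto optimal (the total welfare is fixed at $-(n-1)$ across all feasible allocations), and fractionally shares exactly $n-1 = m$ items.

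The core step is to analyze an arbitrary integral rounding $\calA = (A_1, \ldots, A_n)$ of $\calX$. Because $m = n - 1 < n$, pigeonhole forces at least one agent $\widehat{i}$ with $A_{\widehat{i}} = \emptyset$. For any other agent $i$ with $|A_i| \geq 1$, consider an allocation $\calA^i$ that is envy-free for $i$ and obtained from $\calA$ by reassigning only items in some set $R$. Since $\calA^i$ still distributes $n-1$ chores among $n$ agents, some agent still has an empty bundle under $\calA^i$, so $\max_j v_i(A^i_j) \geq 0$. Envy-freeness for $i$ then forces $v_i(A^i_i) \geq 0$, i.e., $A^i_i = \emptyset$. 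But any chore in $A_i \setminus R$ remains with $i$ under any reassignment restricted to $R$, so we must have $A_i \subseteq R$.

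Taking the union over all agents with nonempty bundles in $\calA$ (who collectively hold every chore), I conclude $R \supseteq [m]$, hence $|R| \geq m = n - 1$. This strictly exceeds $\lfloor (1-\varepsilon)m \rfloor$ for every $\varepsilon > 0$, so $\calA$ is not $\EFR{\big((1-\varepsilon)m\big)}$. Since the argument applies to every integral rounding $\calA$ of $\calX$, the proposition follows. I expect no significant obstacle: the decisive step is simply choosing $m = n - 1$ so that pigeonhole forces a permanent empty bundle under every rounding, which amplifies the size of $R$ needed for envy-freeness; the remainder is a short definitional argument.
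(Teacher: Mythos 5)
Your argument is correct as a proof of the literal statement, but it takes a genuinely different---and substantially weaker---route than the paper's. Every step checks out: with identical valuations the utilitarian welfare is allocation-invariant, so every allocation (including your uniform fractional split) is $\PO$; the split is envy-free; pigeonhole leaves a permanently empty bundle in any rounding; and envy-freeness for a chore-holding agent then forces that agent's entire bundle into $R$, whence $R = [m]$ and $|R| = m > (1-\varepsilon)m$. The catch is your degenerate choice $m = n-1$: the threshold $(1-\varepsilon)m$ then sits below $n-1$, so what you actually prove is the $\EFR{(n-2)}$ non-existence bound of \Cref{thm:EFRnon-existence} specialized to roundings, and your rounded allocation \emph{is} $\EFR{(n-1)}$ (take $R=[m]$). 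The purpose of \Cref{sec:EFRvsEFSharing}, as stated in \Cref{sec:introduction}, is to show that one cannot round the $(n-1)$-sharing allocation of Sandomirskiy and Segal-Halevi into an $\EFR{(n-1)}$ allocation---i.e., that the required $|R|$ can be a constant fraction of $m$ even when $m \gg n$. The paper therefore works with $n \geq 4/\varepsilon$ agents and $m \geq n^2$ items, concentrating the fractional sharing on $n-1$ items among the first $n-1$ agents while assigning $m-n+1$ tiny chores integrally to agent $n$; any $R$ with $|R| \leq (1-\varepsilon)m$ then leaves agent $n$ holding at least $\varepsilon m - n$ of those chores, which is shown to be incompatible with envy-freeness for agent $n$. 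Your construction buys brevity; the paper's buys the separation $(1-\varepsilon)m \gg n-1$ that the proposition is actually meant to certify. If your proof were substituted in, the proposition would remain true as written, but the claim it supports in the introduction would no longer follow from it.
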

\begin{proof}
Given $\varepsilon \in (0,1)$, we construct a fair division instance and allocation $\calX$ as follows. Set the number of agents as an odd integer $n \geq \frac{4}{\varepsilon}$ and the number of items $m \geq  n^2$. 

The first $(n-1)$ items are fractionally assigned among the first $(n-1)$ agents. Their values are set as follows: $ v_i(f) = 2$ for all agents $i \in [n-1]$ and items $1 \leq f \leq (n-1)$. For the first $(n-1)$ agents and the remaining items we have: $v_i(t) = -1$ for all $i \in [n-1]$ and $n \leq t \leq m$.    Next, we set the valuation, $v_n$, of the last agent: 
\begin{itemize}
\item $v_n(f) = -2$ for all items $f \in \{1,\ldots, \frac{n-1}{2}\}$.
\item $v_n(f') = 0$ for all items $f' \in \{\frac{n+1}{2}, \ldots, (n-1)\}$.
\item $v_n(t) = \frac{-1}{m-n+1}$ for all the remaining items $t \in \{n, \ldots,  m\}$.
\end{itemize}

To construct allocation $\calX$, we assign each of the first $(n-1)$ items uniformly among the first $(n-1)$ agents. That is, each agent $i \in [n-1]$ receives a $\frac{1}{n-1}$ fraction of each item $1 \leq f \leq (n-1)$. The remaining $(m-n+1)$ items are integrally assigned to the last agent $X_n = \{n, n+1, \ldots, m\}$.  

Note that the allocation $\calX$ is social-welfare maximizing and, hence, $\PO$. Further, it is envy-free: Each agent $i \in [n-1]$, values its own (fractional) bundle at $2$. Agent $i$'s value for the bundle of any other agent $j \leq (n-1)$ is also $2$ and $v_i(X_n) = -(m-n+1)$. Hence, envy-freeness holds for all agents $i \in [n-1]$. 

For agent $n$, the value $v_n(X_n) =   (m-n+1) \ \frac{-1}{m-n+1} = -1$. Recall that agent $n$ values $\frac{n-1}{2}$ of the fractionally assigned items at $-2$ and the remaining $\frac{n-1}{2}$ fractionally assigned items at $0$. Hence, agent $n$'s value for any other agent's bundle is equal to $\frac{n-1}{2} \times \left(-2 \right) \times \left(\frac{1}{n-1} \right) = -1$. That is, agent $n$ is envy-free under $\calX$ as well. 

Now, consider any integral allocation $\calB$ obtained by rounding $\calX$. For $\calB=(B_1, \ldots, B_n)$ we have $B_n \supseteq X_n$, since all the items in $X_n$ were already integrally assigned to agent $n$. We will next show that $\calB$ is not $\EFR{\big( (1-\varepsilon)m \big)}$.

Assume, towards the contradiction, that $\calB$ is $\EFR{\big( (1-\varepsilon)m \big)}$, i.e., there exists a subset $R \subset [m]$--of size $(1-\varepsilon)m$---such that reallocating items from $R$ yields an envy-free allocation $\calB^n=(B^n_1,\ldots, B^n_n)$ for agent $n$. Note that $B^n_n \supseteq B_n \setminus R$. 

The above-mentioned containments along with the bounds $|X_n| = (m-n+1)$ and $|R| = (1-\varepsilon) m$, give us that, even after reallocations, sufficiently many items from $X_n$ continue to be with agent $n$ (in $B^n_n$): 
\begin{align}
    |B^n_n \cap X_n| \geq \left| \left(B_n \setminus R \right) \cap X_n \right| \geq \left| \left(X_n \setminus R \right) \cap X_n \right| = |X_n \setminus R| \geq (m-n+1) - (1-\varepsilon) m \geq \varepsilon m - n \label{ineq:burnt-breadcrumbs}
\end{align}
Since $n \geq 4/\varepsilon$ and $m \geq n^2$, equation (\ref{ineq:burnt-breadcrumbs}) reduces to 
\begin{align}
    |B^n_n \cap X_n| & \geq \varepsilon m - n  \geq \frac{4}{n} m - n \geq \frac{3m}{n} \label{ineq:bread}
    \end{align}

Recall that agent $n$ has a non-positive value for each of the $m$ items; in particular, $v_n(t) = \frac{-1}{m-n+1}$ for each $t \in X_n$. Hence, from equation (\ref{ineq:bread}), we obtain 
\begin{align}
    v_n(B^n_n) \leq   \frac{3m}{n} \left(\frac{-1}{m-n+1}\right) \leq \frac{-3}{n} \label{ineq:breadtwo}
\end{align}
We will next show that, under $\calB^n = (B^n_1,\ldots,B^n)$ and for some agent $a$'s bundle it holds that $v_n(B^n_a) > \frac{-3}{n}$. This bound will contradict the envy-freeness of $\calB^n$ for agent $n$ and complete the proof. 

Recall that agent $n$ values the first $\frac{n-1}{2}$ items at $-2$, and for equally-many items agent $n$ has a value of $0$. Write $F$ to denote the first $\frac{n-1}{2}$ items, $F = \{ t \in [m] : v_n(t)= -2 \}$. Since $|F| = \frac{n-1}{2}$, at least half the agents $i \in [n-1]$ do not receive an item from $F$ in the allocation $\calB^n=(B^n_1, \ldots, B^n_n)$. That is, there exists subset of  agents $S \subseteq [n-1]$ such that $|S| \geq \frac{n-1}{2}$ and $B^n_i \cap F = \emptyset$ for each $i \in S$. Again, using the fact that agent $n$ has a non-positive value for all the items, we get 
\begin{align*}
\sum_{i \in S} v_n(B^n_i) \geq v_n \left( [m] \setminus F \right) = \left(\frac{n-1}{2} \right) \times 0 + (m-n+1) \left( \frac{-1}{m-n+1} \right) = -1. 
\end{align*}

Hence, there exists agent $a \in S$, with $v_n(B^n_a) \geq \frac{-1}{|S|} \geq \frac{-2}{n-1}$. Since $n \geq 4/\varepsilon > 3$, the above-mentioned inequalities contradict the envy-freeness of $\calB^n$ for agent $n$  
\[v_n(B^n_n) \leq \frac{-3}{n} < \frac{-2}{n-1} \leq v_n(B^n_a).\]

Therefore, by way of contradiction, we have that any integral allocation $\calB$, obtained by rounding $\calX$, cannot be $\EFR{\big( (1-\varepsilon)m \big)}$. This completes the proof of the proposition.   
\end{proof}

\end{document}